\newif\ifshowtodo
\newif\ifshowdetail
\newif\ifvlong
\newcommand{\VersionLength}{long}
\providecommand{\verlong}{\ifthenelse{\equal{\VersionLength}{long}}}
\newcommand{\VersionCols}{single}
\providecommand{\dcol}{\ifthenelse{\equal{\VersionCols}{double}}}
\long\def\detail#1{\ifshowdetail{{\color{green!50!black}#1}}\fi}
\newcommand{\subalign}[1]{%
  \vcenter{%
    \Let@ \restore@math@cr \default@tag
    \baselineskip\fontdimen10 \scriptfont\tw@
    \advance\baselineskip\fontdimen12 \scriptfont\tw@
    \lineskip\thr@@\fontdimen8 \scriptfont\thr@@
    \lineskiplimit\lineskip
    \ialign{\hfil$\m@th\scriptstyle##$&$\m@th\scriptstyle{}##$\hfil\crcr
      #1\crcr
    }%
  }%
}
\def\D{\mathcal{D}} 
\def\noarrow{\ar@{-}[r]}
\title{The CEO problem with inter-block memory}
\author{
Victoria Kostina,~\IEEEmembership{Member,~IEEE}, Babak Hassibi,~\IEEEmembership{Member,~IEEE}
\thanks{
The authors are with California Institute of Technology (e-mail: \href{mailto:vkostina@caltech.edu}{vkostina@caltech.edu}, \href{mailto:hassibi@caltech.edu}{hassibi@caltech.edu}). 
This work was supported in part by the National Science Foundation (NSF)
under grants CCF-1751356 and CCF-1817241. The work of Babak Hassibi was supported in part by the NSF under grants CNS-0932428, CCF-1018927, CCF-1423663 and CCF-1409204, by a grant from Qualcomm Inc., by NASA's Jet Propulsion Laboratory through the President and Director's Fund, and by King Abdullah University of Science and Technology. A part of this work was presented at ISIT 2020~\cite{kostina2020fundamentalISIT}.
}}
\date{}							% Activate to display a given date or no date
\begin{document}
\maketitle

\begin{abstract}
An $n$-dimensional source with memory is observed by $K$ isolated encoders via parallel channels, who compress their observations to transmit to the decoder via noiseless rate-constrained links while leveraging their memory of the past. At each time instant, the decoder receives $K$ new codewords from the observers, combines them with the past received codewords, and produces a minimum-distortion estimate of the latest block of $n$ source symbols. This scenario extends the classical one-shot CEO problem to multiple rounds of communication with communicators maintaining the memory of the past. 

We extend the Berger-Tung inner and outer bounds to the scenario with inter-block memory, showing that the minimum asymptotically (as $n \to \infty$) achievable sum rate required to achieve a target distortion is bounded by minimal directed mutual information problems. For the Gauss-Markov source observed via $K$ parallel AWGN channels, we show that the inner bound is tight and solve the corresponding minimal directed mutual information problem, thereby establishing the minimum asymptotically achievable sum rate. Finally, we explicitly bound the rate loss due to a lack of communication among the observers; that bound is attained with equality in the case of identical observation channels. 

The general coding theorem is proved via a new nonasymptotic bound that uses stochastic likelihood coders and whose asymptotic analysis yields an extension of the Berger-Tung inner bound to the causal setting. The analysis of the Gaussian case is facilitated by reversing the channels of the observers. 
\end{abstract}

\begin{IEEEkeywords}
CEO problem, Berger-Tung bound, distributed source coding, causal rate-distortion theory, Gauss-Markov source, LQG control, directed information. 
\end{IEEEkeywords}

\section{Introduction}
We set up the CEO (chief executive or estimation officer) problem with inter-block memory as follows. An information source $\{X_i\}$ emits a block of length $n$, $X_i \in \mathcal A^n$, at time $i$; it is observed by $K$ encoders through $K$ noisy channels; at time $i$, $k$th encoder sees $Y_i^k$ generated according to $P_{Y_i^k | X_1, \ldots, X_{i}, Y_{1}^k, \ldots, Y_{i-1}^k}$. See \figref{fig:system}.
The encoders (observers) communicate to the decoder (CEO) via their separate noiseless rate-constrained links. At each time $i$, $k$th observer forms a codeword based on the observations it has seen so far, i.e., $Y_1^k, \ldots, Y_i^k$. 
The decoder at time $i$ forms the estimate, $\hat X_i \in \hat {\mathcal A}^n$, based on the codewords it received thus far. The goal is to minimize the average distortion 
\begin{equation}
 \frac 1 t \sum_{i = 1}^{t}  \E{ \sd (X_i, \hat X_i)} \label{eq:dintro},
\end{equation}
where $t$ is the \emph{time horizon} over which the source is being tracked, and $\sd \colon \mathcal A^n \times \hat {\mathcal A}^n \mapsto \mathbb R_+$ is the distortion measure. Encoding and decoding operations leverage the memory of the past but cannot look in the future. In this causal setting no delay is allowed neither at the encoders in producing codewords to encode $X_i$ nor at the decoder in producing~$\hat X_i$.  

\vspace{5pt}
\begin{figure}[htp]
\begin{center}
    \includegraphics[width=1\linewidth]{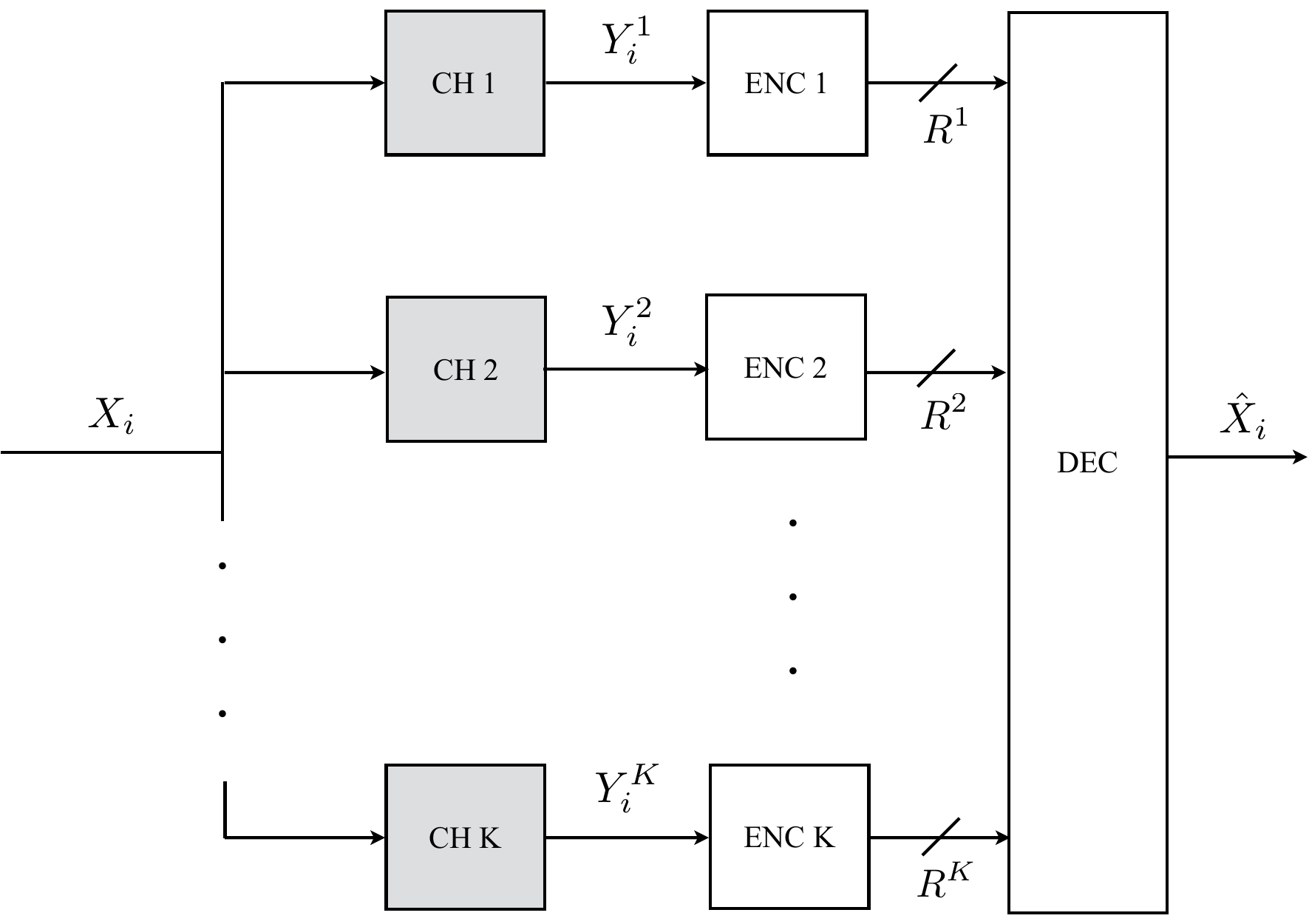}
\end{center}
 \caption[]{The CEO problem with inter-block memory: the encoders and the decoder keep the memory of their past observations.} 
 \label{fig:system}
\end{figure}

In the classical setting with $t = 1$, the CEO problem was first introduced by Berger et al.~\cite{berger1996ceo} for a finite alphabet source. In the classical Gaussian CEO problem, an i.i.d. Gaussian source is observed via AWGN channels and reproduced under mean squared error (MSE) distortion. The Gaussian CEO problem was studied by Viswanathan and Berger~\cite{viswanathan1997quadratic}, who proved an achievability bound on the rate-distortion dimension for the case of $K$ identical Gaussian channels, by Oohama~\cite{oohama1998ceo}, who derived the sum-rate rate-distortion region for that special case, by Prabharan et al. \cite{prabhakaran2004ceo} and Oohama~\cite{oohama2005rate}, who determined the full Gaussian CEO rate region, by Chen et al. \cite{chen2004upper}, who proved that the minimum sum rate is achieved via waterfilling, by Behroozi and Soleymani \cite{behroozi2009optimal} and by Chen and Berger \cite{chen2008successive}, who showed rate-optimal successive coding schemes. 
 Wagner et al. \cite{wagner2008rate} found the rate region of the distributed Gaussian lossy compression problem by coupling it to the Gaussian CEO problem. Wagner and Anantharam \cite{wagner2008improved} showed an outer bound to the rate region of the multiterminal source coding problem that is tighter than the Berger-Tung outer bound \cite{berger1978multi,tung1978multiterminal}. Wang et al. \cite{wang2010sum} showed a simple converse on the sum rate of the vector Gaussian CEO problem. 
Concurrently, Ekrem and Ulukus \cite{ekrem2014outer} and Wang and Chen \cite{wang2014vector}  showed an outer bound to the rate region of the vector Gaussian CEO problem that is tight in some cases and not tight in others and that particularizes the outer bound in \cite{wagner2008improved} to the Gaussian case. Courtade and Weissman \cite{courtade2014multiterminal} determined the
distortion region of the distributed source coding and the CEO problem  under logarithmic loss. 

None of the above results directly apply to the tracking problem in \figref{fig:system} because of the past memory in encoding the $n$-blocks of observations and in producing $\hat X_i$ in \eqref{eq:dintro}, which imposes blockwise causality constraints onto the coding process. 
The most basic scenario of source coding with causality constraints is that of a single observer directly seeing the information source \cite{gorbunov1973nonanticipatory}.  The causal rate-distortion function for the Gauss-Markov source was computed by Gorbunov and Pinsker \cite{gorbunov1974prognostic}. The link between the minimum attainable linear quadratic Gaussian (LQG) control cost and the causal rate-distortion function is elucidated in \cite{tatikonda2004stochastic,silva2016characterization,kostina2016ratecost}.  A semidefinite program to compute the causal rate-distortion function for vector Gauss-Markov sources is provided in~\cite{tanaka2017semidefinite}. The remote Gaussian causal rate-distortion function, which corresponds to setting $K = 1$ in \figref{fig:system}, is computed in \cite{kostina2016ratecost}. The causal rate-distortion function of the Gauss-Markov source with a Gaussian side observation available at the decoder (the causal counterpart of the Wyner-Ziv setting) is computed in \cite{kostina2018SIallerton} for the scalar source and in \cite{sabag2020minimal} for the vector source. That causal Wyner-Ziv setting can be viewed a special case of our causal CEO problem \eqref{eq:xi}, \eqref{eq:yik} with two observers, with one of the observers enjoying an infinite rate. Stability of linear Gaussian systems with multiple isolated observers is investigated in \cite{johnston2014stochastic}.

The first contribution of this paper is an extension of the Berger-Tung inner and outer bounds \cite{berger1978multi,tung1978multiterminal} to the distributed tracking setting of \figref{fig:system} that sandwich the minimum asymptotically achievable (as $n \to \infty$)  sum rate $R^1 + \ldots + R^K$ required to achieve a given average distortion \eqref{eq:dintro}. Provided that the components of each $X_i \in \mathcal A^n$ are i.i.d. ($X_i$ can still depend on $X_1, \ldots, X_{i-1}$), the channels act on each of those components independently, and the distortion measure is separable,
that minimum sum rate is bounded in terms of the directed mutual information from the encoders to the decoder. The converse (outer bound) follows via standard data processing and single-letterization arguments. To prove the achievability, we show a nonasymptotic bound for blockwise-causal distributed lossy source coding that can be viewed as an extension of the nonasymptotic Berger-Tung inner bound by Yassaee et al. \cite{yassaee2013technique,yassaee2013techniqueArxiv}, applicable to the setting with $K = 2$ sources and $t = 1$ rounds of communication, to the setting with an arbitrary number of sources and communication rounds. We view the horizon-$t$ causal coding problem as a multiterminal coding problem in which at each step coded side information from past steps is available, and we use a stochastic likelihood coder (SLC) by Yassaee et al. \cite{yassaee2013technique,yassaee2013techniqueArxiv} to perform encoding operations.  The SLC-based encoder mimics the operation of the joint typicality encoder while admitting sharp nonasymptotic bounds on its performance. While the SLC-based decoder of \cite{yassaee2013technique,yassaee2013techniqueArxiv} is ill-suited to the case $K > 2$, we propose a novel decoder that falls into the class of generalized likelihood decoders \cite{merhav2017gld} and uses $K$ different threshold tests depending on the point of the rate-distortion region the code is operating at. An asymptotic analysis of our nonasymptotic bound yields an extension of the Berger-Tung inner bound  \cite{berger1978multi,tung1978multiterminal} to the setting with inter-block memory.

The second contribution of the paper is an explicit evaluation of the minimum sum rate for the causal Gaussian CEO problem.  In that scenario, the source is an $n$-dimensional Gauss-Markov source, 
\begin{align}
X_{i+1} &= a X_i + V_i,  \label{eq:xi}
\end{align}
and the $k$-th observer sees
\begin{align}
Y_{i}^k &=  X_i + W_{i}^k, \quad k = 1, \ldots, K, \label{eq:yik}
\end{align}
where $X_1$ and $\{V_i, W_{i}^1, W_{i}^2, \ldots, W_{i}^K\}_{i = 1}^T$ are independent Gaussian vectors of length $n$ with i.i.d. components; each component of $V_i$ is distributed as $ \mathcal N(0, \sigma_{\mathsf V}^2 )$, and each component of $W_i^k$ as $\mathcal N(0, \sigma_{\mathsf W_k}^2)$. Note that different observation channels can have different noise powers. The distortion measure is the normalized squared error
\begin{equation}
 \mathsf d \left(X_i, \hat X_i\right) = \frac 1 n \|X_i - \hat X_i\|^2. \label{eq:MSEdef}
\end{equation}
%Gauss-Markov source is one of the simplest sources with memory. Its (noncausal) rate-distortion function is computed in \cite{gray1970information}, its dispersion in \cite{tian2018dispersiongm,tian2019parameterestISIT}.  
We characterize the minimum sum rate as a convex optimization problem over $K$ parameters; an explicit formula is given in the case of identical observation channels.
Similar to the corresponding result for $t = 1$ \cite{prabhakaran2004ceo,oohama2005rate},\cite[Th. 12.3]{el2011network}, our extension of the Berger-Tung inner bound is tight in this case.  To compute the bound, we split up the directed minimal mutual information problem into a sum of easier-to-solve optimization problems. To tie the parameters of those optimization problems back to those of the original optimization problem, we extend the technique developed by Wang et al. \cite{wang2010sum} for the time horizon $t = 1$, to $t > 1$. A device that helps us track the behavior of optimal estimation errors over multiple time instances is the reversal of the channels from $\{X_i\}$ to $\{Y_i^k\}$:
\begin{align}
X_i = \bar X_i^k + W_i^{k \prime},
\end{align}
where
 \begin{align}
 \bar X_i^k \triangleq \E{X_i | Y_{1}^k, \ldots, Y_{i}^k}, \label{eq:Xbark}
\end{align}
and $W_i^{k\, \prime} \perp \bar X_i^k$ are Gaussian independent random vectors representing the errors in estimating $X_i$ from $\{Y_{j}^k\}_{j = 1}^i$. While for $t = 1$, it does not matter whether the encoders compress $Y_1^k$ or $\bar X_1$ since the latter is just a scaled version of the former, for $t > 1$, compressing $Y_i^k$ instead of $\bar X_i^k$ is only suboptimal.

The third contribution of the paper is a bound on the rate loss due to a lack of communication among the different encoders in the causal Gaussian CEO problem: as long as the target distortion is not too small, the rate loss is bounded above by $K-1$ times the difference between the remote and the direct rate-distortion functions. The bound is attained with equality if the observation channels are identical, indicating that among all possible observer channels with the same minimum MSE in the estimation of $\{X_i\}$ from $\{Y_j^{k}\}_{j \leq i, k = 1, \ldots, K}$, the identical channels case is the hardest to compress. This result contributes to the discussions of the rate loss in the classical CEO \cite[Cor.~1]{kostina2019ratelossITW} and multiple descriptions \cite[Lemma 3]{ostergaard2011incremental} problems.

The rest of the paper is organized as follows. In \secref{sec:dir}, we consider the general (non-Gaussian) causal CEO problem and prove direct and converse bounds to the minimum sum rate in terms of minimal directed mutual information problems (\thmref{thm:cg}). In \secref{sec:rd}, we characterize the causal Gaussian CEO rate-distortion function (\thmref{thm:causalceo}). In \secref{sec:loss}, we bound the rate loss due to isolated observers (\thmref{thm:loss}). 

\emph{Notation:} 
Logarithms are natural base. For a natural number $M$, $[M] \triangleq \{1, \ldots, M\}$. Notation $X \leftarrow Y$ reads ``replace $X$ by $Y$"; notation $X \perp Y$ reads ``$X$ is independent of $Y$''; notation $\triangleq$ reads ``by definition''. The temporal index is indicated in the subscript and the spatial index in the superscript: $Y_{[t]}^k$ is the temporal vector $(Y_1^k, \ldots, Y_t^k)$; $Y_i^{[K]}$  is the spatial vector $(Y_i^{1}, \ldots, Y_{i}^K)^{\mathsf T}$; $Y_{[t]}^{[K]} \triangleq (Y_{[t]}^1, \ldots, Y_{[t]}^K)$. Delay operator $\mathcal D$ acts as $\mathcal D X_{[t]} \triangleq (0, X_1, \ldots, X_{t-1})$. %For an index set $\mathcal I \subseteq [K]$, $Y_{i}^{\mathcal I}  = (Y_{i}^{k} \colon k \in \mathcal I )$.   
For a random vector $X$ with i.i.d. components, $\mathsf X$ denotes a random variable distributed the same as each component of $X$. 
We adopt the following shorthand notation for causally conditional \cite{kramer1998PhD} probability kernels: 
 \begin{equation}
 P_{Y_{[t]} || X_{[t]}} \triangleq \prod_{i=1}^{t} P_{Y_i | Y_{[i-1]}, X_{[i]}} \label{eq:causalcond}.
\end{equation}
Given a distribution $P_{X_{[t]}}$ and a causal kernel $P_{Y_{[t]} \| X_{[t]}}$, the directed mutual information is defined as \cite{massey1990causality}
\begin{equation}
I\left(X_{[t]} \to Y_{[t]}\right) \triangleq \sum_{i = 1}^{t}  I\left(X_{[i]}; Y_i | Y_{[i-1]}\right). \label{eq:Idir}
\end{equation}

\section{Sum rate via directed information}
\label{sec:dir}

\subsection{Overview}
In this section, we present and prove our extension of the Berger-Tung bounds to the setting inter-block memory that sandwich the minimum achievable sum rate in terms of minimal directed mutual information problems. The bounds apply to an abstract source with abstract observations. The operational scenario and achievable rates are formally defined in \secref{sec:oper}. The directed mutual information bounds are presented in \secref{sec:coding}. The converse is proven in \secref{sec:codingc}. The nonasymptotic achievability bound and its asymptotic analysis are presented in \secref{sec:a}. A set of remarks in  \secref{sec:remarks} completes \secref{sec:dir}. 
\subsection{Operational problem setting}
\label{sec:oper}
A CEO code with inter-block memory, or a causal CEO code, is formally defined as follows. 
\begin{defn}[A CEO code with inter-block memory]
Consider a discrete-time random process $\{X_i\}_{i = 1}^t$ on $\mathcal X$, observed by $K$ causal observers via the channels 
\begin{align}
P_{Y_{[t]}^{k} \| X_{[t]}} \colon \mathcal X^{\otimes t} \mapsto \mathcal Y^{\otimes t}, \quad  k \in [K]. \label{eq:channels}
\end{align}
Let $\mathsf d\colon \mathcal X \times \hat {\mathcal X} \mapsto \mathbb R_+$ be the distortion measure.

A CEO code with inter-block memory consists of: 
\begin{enumerate}[a)]
\item $K$ \emph{encoding policies}
\begin{align}
 P_{B_{[t]}^k \| Y_{[t]}^k}\colon  \mathcal Y^{\otimes t} \mapsto \prod_{i = 1}^t [M_i^k], \quad k \in [K],
 \label{eq:encpolicy}
\end{align}
\item a \emph{decoding policy} 
\begin{align}
 P_{ \hat X_{[t]}^{[K]} \| B_{[t]}^{[K]}} \colon \prod_{i = 1}^t [M_i^k] \mapsto \hat {\mathcal X}^{\otimes t}.
\end{align}
\end{enumerate}

 If the encoding and decoding policies satisfy
\begin{align}
& \frac 1 {t}  \sum_{i = 1}^t \E{ \mathsf d \left(X_i, \hat X_i\right)} \leq d, \label{eq:d}%\\
%&  \frac 1 {nt}  \sum_{i = 1}^t \sum_{i = 1}^K \log |\mathcal U_i^k| \leq R,
\end{align}
we say that they form an $(M_{[t]}^{[K]}, d)$ \emph{average distortion} code.

If the encoding and decoding policies satisfy
\begin{align}
 \Prob{ \bigcup_{i = 1}^t \left\{ \mathsf d\left( X_i, \hat X_i \right)  > d_i \right\} } \leq \epsilon, \label{eq:excessdistceo}
\end{align} 
we say that they form an $(M_{[t]}^{[K]}, d_{[t]}, \epsilon)$ \emph{excess distortion} code.

The probability measure in \eqref{eq:d} and \eqref{eq:excessdistceo} is generated by the joint distribution $P_{X_{[t]}} P_{Y_{[t]}^{[K]} \| X_{[t]}} P_{ \hat X_{[t]}^{[K]} \| B_{[t]}^{[K]}} \prod_{k = 1}^K P_{B_{[t]}^k \| Y_{[t]}^k}$.

\label{defn:ceocode}
\end{defn}

A distortion measure $\mathsf d_n \colon \mathcal A^n \times \hat {\mathcal A}^n \mapsto \mathbb R_+$ is called \emph{separable} if 
\begin{align}
\mathsf d_n ( x, \hat x ) = \frac 1 n \sum_{i = 1}^n \sd (x (i), \hat x (i)), 
\end{align}
where $\mathsf d \colon \mathcal A \times \hat {\mathcal A} \mapsto \mathbb R_+$, and $x (i),~ \hat x (i)$ denote the $i$-th components of vectors $x \in \mathcal A^n$ and $\hat {x} \in\hat{ \mathcal A}^n $, respectively. 

\begin{defn}[Operational rate-distortion function]
Consider a discrete-time random process $\{X_i\}_{i = 1}^t$ on $\mathcal X = \mathcal A^n $ equipped with a separable distortion measure, observed by $K$ causal observers via the channels \eqref{eq:channels}.

 The rate-distortion tuple $\left(R^{[K]}, d\right)$ is \emph{asymptotically achievable} at time horizon $t$ if for $\forall \gamma > 0$, $\exists n_0 \in \mathbb N$ such that $\forall n \geq n_0$, an $\left(M_{[t]}^{[K]}, d + \gamma \right)$ average distortion CEO code with inter-block memory exists, where
\begin{align}
 \frac 1 {nt} \sum_{i = 1}^t \log M_i^k &\leq R^k, \quad k \in [K].\label{eq:sumlogMik}
\end{align}

The rate-distortion pair $\left(R, d\right)$ is asymptotically achievable if a rate-distortion tuple $\left(R^{[K]}, d\right)$ with
\begin{align}
\sum_{k = 1}^K R^k \leq R \label{eq:sumrateconstraint}
\end{align}
is asymptotically achievable. 

The\emph{ causal CEO rate-distortion function} at time horizon $t$ is defined as follows: 
\begin{align}
\!\!\!\!\!\! R_{t\,\mathrm{CEO}}(d) \triangleq \inf \Big\{ & R \colon  \left(R, d\right)  \text{ is achievable } 
\\
& \text{at time horizon $t$ in the CEO problem}. 
\Big\}  \notag
\end{align}
\label{def:Rceot}
\end{defn}

\subsection{Berger-Tung bounds with inter-block memory}
\label{sec:coding}

Consider a discrete-time random process $\{X_i\}_{i = 1}^t$ on $\mathcal X = \mathcal A^n $ equipped with separable distortion measure $\mathsf d$, observed by $K$ causal observers via the channels \eqref{eq:channels} with $\mathcal Y = \mathcal B^n$ and
\begin{align}
P_{X_i | X_{[i - 1]}} &= P_{\mat X_i | \mat X_{[i-1]}}^{\otimes n} \label{eq:singlelettersource} \\
P_{Y_{i}^{k} | X_{[i]}, Y_{[i-1]}^k} &= P_{\mat Y_i^k | \mat X_{[i]}, \mat Y_{[i-1]}^k }^{\otimes n}. \label{eq:singleletterchannel}
\end{align}

Denote the minimal directed mutual information problems
\begin{align}
\overline{\mathsf R}_{t\,\mathrm{CEO}}(d) &\triangleq \inf_
 {\subalign{
 P_{\mat U_{[t]}^{[K]} \| {\mat Y}_{[t]}^{[K]}} &\colon \eqref{eq:sepenc1} \\
 P_{\hat {\mat X}_{[t]}\| \mat U_{[t]}^{[K]}} &\colon \eqref{eq:dt1}
 } }
  \frac 1 t  I \left(  {\mat Y}^{[K]}_{[t]} \to  \mat U^{[K]}_{[t]} \right) \label{eq:bti}
  \\
  \underline {\mathsf R}_{t\,\mathrm{CEO}}(d) &\triangleq \inf_
 {\subalign{
 P_{\mat U_{[t]}^{[K]} \| {\mat Y}_{[t]}^{[K]}} &\colon \eqref{eq:bto1} \\
 P_{\hat {\mat X}_{[t]}\| \mat U_{[t]}^{[K]}} &\colon \eqref{eq:dt1}
 } }
  \frac 1 t I \left(  {\mat Y}^{[K]}_{[t]} \to  \mat U^{[K]}_{[t]} \right) \label{eq:bto}
\end{align}
where the constraints are as follows: 
\begin{align}
P_{\mat U_{[t]}^{[K]} \| {\mat Y}_{[t]}^{[K]}} &= \prod_{k = 1}^K P_{\mat U_{[t]}^k \|  {\mat  Y}_{[t]}^k} \label{eq:sepenc1}\\
P_{\mat U_{[t]}^{k} \| {\mat Y}_{[t]}^{[K]}} &= P_{\mat U_{[t]}^k \|  {\mat  Y}_{[t]}^k} \quad \forall k \in [K] \label{eq:bto1}\\
 \frac 1 {t}  \sum_{i = 1}^t \E{ \mathsf d \left(\mat X_i, \hat {\mat X}_i\right)} &\leq d. \label{eq:dt1}
\end{align}
Fixing a $k \in [K]$ and marginalizing $\{U_{[t]}^{k^\prime},~k^\prime \neq k\}$ out of both sides of \eqref{eq:sepenc1}, one can see that any joint distribution that satisfies the separate encoding constraint \eqref{eq:sepenc1}  also satisfies \eqref{eq:bto1}. Thus, the optimization problems \eqref{eq:bti} and \eqref{eq:bto} differ in that the constraint \eqref{eq:sepenc1} is more stringent than \eqref{eq:bto1}. They represent extensions of the Berger-Tung inner (\eqref{eq:bti}) and outer (\eqref{eq:bto}) bounds \cite[Th. 12.1, 12.2]{el2011network} to the causal setting. 

One can convexify $\overline{\mathsf R}_{t\,\mathrm{CEO}}(d)$ by adding to the optimization parameters a scalar $\alpha \in (0, 1]$ and a distribution $P_{\tilde{\mat U}_{[t]}^{[K]} \| \tilde{\mat Y}_{[t]}^{[K]}}$ satisfying the separate encoding constraint analogous to \eqref{eq:sepenc1}, and replacing the directed information in \eqref{eq:bti} by  $\alpha I \left(  {\mat Y}^{[K]}_{[t]} \to  \mat U^{[K]}_{[t]} \right) + (1 - \alpha) I \left(  \tilde {\mat Y}^{[K]}_{[t]} \to  \tilde{\mat U}^{[K]}_{[t]} \right)$. This is equivalent to introducing into \eqref{eq:bti} a binary time sharing random variable. Given the achievability of \eqref{eq:bti}, the achievability of the convexification follows by the standard time sharing argument \cite[Ch. 4.4]{el2011network}. 

Since a mixture of distributions $P_{\mat U_{[t]}^{[K]} \| {\mat Y}_{[t]}^{[K]}}$ satisfying \eqref{eq:bto1} also satisfies \eqref{eq:bto1}, the convexity of $\underline{\mathsf R}_{t\,\mathrm{CEO}}(d)$ follows from the convexity of directed mutual information in $P_{\mat U_{[t]}^{[K]} \| {\mat Y}_{[t]}^{[K]}}$, with no need for an explicit auxiliary time sharing random variable.

\begin{thm}[Berger-Tung bounds with inter-block memory]
Consider a discrete-time random process $\{X_i\}_{i = 1}^t$ on $\mathcal X = \mathcal A^n $ equipped with a separable distortion measure $\mathsf d$, observed by $K$ causal observers via the channels \eqref{eq:channels} with $\mathcal Y = \mathcal B^n$ and \eqref{eq:singlelettersource}, \eqref{eq:singleletterchannel} satisfied. 
Suppose further that for some $p > 1$, there exists a vector $\hat {\mathsf x}_{[t]}$ such that 
\begin{align}
 \left(\E{\left( \frac 1 t \sum_{i = 1}^t \sd(\mathsf X_i, \hat{\mat x}_i)\right)^p}\right)^{\frac 1 p} \leq d_p < \infty. \label{eq:dmom}
\end{align}
The causal rate-distortion function is bounded as
\begin{align}
\underline{\mathsf R}_{t\,\mathrm{CEO}}(d) \leq R_{t\,\mathrm{CEO}}(d)  \leq \overline{\mathsf R}_{t\,\mathrm{CEO}}(d)
   \label{eq:cdp}.
\end{align}
\label{thm:cg}
\end{thm}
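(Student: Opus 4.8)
The plan is to establish the two inequalities in \eqref{eq:cdp} separately, following the familiar converse/achievability split, but paying attention to the blockwise-causal structure. For the converse $\underline{\mathsf R}_{t\,\mathrm{CEO}}(d) \leq R_{t\,\mathrm{CEO}}(d)$, I would start from an arbitrary $(M_{[t]}^{[K]}, d+\gamma)$ average-distortion code and set $\mat U_{[t]}^k \triangleq B_{[t]}^k$, the codewords produced by encoder $k$. Since $B_{[t]}^k$ is generated by the causal policy $P_{B_{[t]}^k\|Y_{[t]}^k}$, which acts componentwise on the $n$-dimensional blocks only through the claimed single-letter kernels \eqref{eq:singlelettersource}, \eqref{eq:singleletterchannel}, the constraint \eqref{eq:bto1} is automatically satisfied (each encoder's output depends on its own observations only), and the distortion constraint \eqref{eq:dt1} holds with $d+\gamma$. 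The key chain is $nt\sum_k R^k \geq \sum_{i,k}\log M_i^k \geq H(B_{[t]}^{[K]}) \geq I(Y_{[t]}^{[K]};B_{[t]}^{[K]})$, and then one rewrites $I(Y_{[t]}^{[K]};B_{[t]}^{[K]})$ as a directed information $I(Y_{[t]}^{[K]}\to B_{[t]}^{[K]})$ using the causal dependence structure (the decoder/encoders never peek into the future, so the ``backward'' terms vanish). A single-letterization argument over the $n$ i.i.d. source components then lower-bounds this by $n\,\underline{\mathsf R}_{t\,\mathrm{CEO}}(d+\gamma)$; letting $\gamma\downarrow 0$ and invoking continuity (convexity) of $\underline{\mathsf R}_{t\,\mathrm{CEO}}$ finishes it.

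For the achievability $R_{t\,\mathrm{CEO}}(d) \leq \overline{\mathsf R}_{t\,\mathrm{CEO}}(d)$, I would invoke the nonasymptotic bound announced in \secref{sec:a} (the SLC-based scheme with the novel $K$-threshold generalized likelihood decoder). Fix any feasible tuple $\big(P_{\mat U_{[t]}^{[K]}\|\mat Y_{[t]}^{[K]}}, P_{\hat{\mat X}_{[t]}\|\mat U_{[t]}^{[K]}}\big)$ for the inner problem \eqref{eq:bti}, so the encoders factor as in \eqref{eq:sepenc1} and the single-letter distortion is $\leq d$. Apply the product kernel over the $n$ components, view the $t$-horizon problem as a multiterminal coding problem where coded side information from past blocks is available at step $i$, and run the nonasymptotic bound with rates $R_i^k$ slightly above $I(\mat Y_{[i]}^{[K]};\mat U_i^{[K]}|\mat U_{[i-1]}^{[K]})$-type quantities summing to the directed information $I(\mat Y_{[t]}^{[K]}\to\mat U_{[t]}^{[K]})$. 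The analysis gives that the excess-distortion probability $\to 0$ as $n\to\infty$, hence by a standard truncation argument the average distortion is $\leq d+\gamma$; here is where the moment hypothesis \eqref{eq:dmom} is used, via Hölder's inequality, to control the contribution of the (vanishing-probability) error event to the average distortion $\tfrac1t\sum_i\E{\sd(\mat X_i,\hat{\mat X}_i)}$. Infimizing over feasible kernels yields \eqref{eq:cdp}.

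I expect the main obstacle to be the achievability direction, specifically the correct bookkeeping in the nonasymptotic bound: one must organize the $Kt$ encoders and $t$ rounds so that at each round the past codewords act as decoder side information, choose the $K$ threshold tests appropriately for an arbitrary corner of the rate region, and verify that the resulting rate constraints telescope exactly into the directed mutual information $I(\mat Y_{[t]}^{[K]}\to\mat U_{[t]}^{[K]})$ rather than some looser causally-conditioned sum. The converse, by contrast, is routine: it is the usual data-processing plus Fano-free single-letterization, with the only subtlety being the identification of ordinary mutual information with directed information under the no-lookahead structure. A minor technical point in both directions is handling the gap $\gamma$ and the passage $n\to\infty$ together with the convexity/continuity of the bounding functions, which was already noted below \eqref{eq:dt1}.
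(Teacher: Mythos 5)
Your achievability sketch matches the paper's approach almost exactly: invoke a nonasymptotic SLC/GLD bound, analyze it with product (single-letter) kernels as $n\to\infty$, and convert excess distortion to average distortion via H\"older using the moment condition \eqref{eq:dmom}. You also correctly identify that the rate constraints must telescope into the directed information; this is handled in the paper by first establishing a per-encoder rate region (Theorem~\ref{thm:btas}) indexed by a permutation $\pi$ and then summing.

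Your converse sketch has a genuine gap, and it is precisely at the step you label ``routine.'' You claim the encoding policy $P_{B_{[t]}^k\|Y_{[t]}^k}$ ``acts componentwise on the $n$-dimensional blocks only through the claimed single-letter kernels \eqref{eq:singlelettersource}, \eqref{eq:singleletterchannel}.'' This is wrong: those equations describe the source and observation channels, not the encoding policies, which may be arbitrary non-separable functions of the full $n$-block. This is exactly why single-letterization is nontrivial here. The paper's argument introduces the auxiliary $U_i^k(j)=\bigl(B_i^k,\,Y_i^{[K]}([j-1])\bigr)$, i.e.\ the codeword plus \emph{all encoders'} past coordinates, and then works through the directed-information chain rules \eqref{eq:chain1}--\eqref{eq:chain2}. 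Because $U_i^k(j)$ contains the other encoders' observations at coordinates $[j-1]$, it does \emph{not} satisfy the product constraint \eqref{eq:sepenc1}; it only satisfies the weaker long-Markov constraint \eqref{eq:bto1}. That is exactly why the converse lands on $\underline{\mathsf R}_{t\,\mathrm{CEO}}$ and not $\overline{\mathsf R}_{t\,\mathrm{CEO}}$. Your sketch states the correct conclusion but for the wrong reason: you say \eqref{eq:bto1} holds ``automatically (each encoder's output depends on its own observations only)''--- that reasoning would actually establish the stronger \eqref{eq:sepenc1} at the multi-letter level, and it does not explain where the weakening comes from. Identifying $\mat U^k_{[t]}=B^k_{[t]}$ as a single-letter auxiliary is also a type mismatch: $B^k_{[t]}$ lives in the $n$-dependent set $\prod_i[M_i^k]$. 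Finally, a small corrective note: the paper also needs to argue that the residual directed-information term $I\bigl(Y_{[t]}^{[K]}(j)\to Y_{[t]}^{[K]}([j-1])\,\|\,\D B_{[t]}^{[K]}\bigr)$ vanishes (using that $P_{B_{[t]}^{[K]}\|Y_{[t]}^{[K]}}$ is a causal kernel and \eqref{eq:singlelettersource}, \eqref{eq:singleletterchannel}); this is the ``backward terms vanish'' remark you made, so you have the right intuition there, but the full single-letterization machinery is the missing piece.
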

Condition \eqref{eq:dmom} is a technical condition needed to apply a standard argument using H\"older's inequality to pass from an excess to average distortion in the proof of the achievability bound (Appendix~\ref{sec:btas}).
 
 To prove the upper bound on the sum rate in \eqref{eq:cdp}, we actually show a more accurate characterization of the entire rate tuple $R^{[K]}$ (\thmref{thm:btas}, below).
 
We will see in \secref{sec:rd} below that the inner (upper) bound in \eqref{eq:cdp} is tight in the quadratic Gaussian setting. This is in line with the corresponding result in the setting of block coding without inter-block memory \cite[Th. 12.3]{el2011network}. 

While in general the $t$-step optimization problems \eqref{eq:bti} and \eqref{eq:bto} are challenging to compute, we illustrate in this paper that the normalized limit as $t \to \infty$ is possible to compute in the Gaussian setting. Similar limit results in other communication scenarios were shown in \cite{gorbunov1974prognostic,tanaka2015stationary,kostina2016ratecost,kostina2018SIallerton,guo2019wienersampling,sabag2020minimal,sabag2021feedbackISIT,sabag2021feedback}.

\subsection{ \thmref{thm:cg}: proof of converse}
\label{sec:codingc}
The proof of the converse uses standard techniques. We will use the following definition and lemma. 

Causally conditioned directed information is defined as
\begin{equation}
I(X_{[t]} \to Y_{[t]} \| Z_{[t]}) \triangleq \sum_{i = 1}^{t}  I(X_{[i]}; Y_i | Y_{[i-1]}, Z_{[i]}). \label{eq:Idircond}
\end{equation}
\begin{lemma}[{\cite[(3.14)--(3.16)]{kramer1998PhD}}]
 Directed information chain rules:
\begin{align}
I((X_{[t]}, Y_{[t]}) \to Z_{[t]}) =&~ I(X_{[t]} \to Z_{[t]}) \notag\\
&+ I(Y_{[t]} \to Z_{[t]} \| X_{[t]}) \label{eq:chain1},\\
I( X_{[t]} \to (Y_{[t]}, Z_{[t]})) =&~ I(X_{[t]} \to Y_{[t]} \| \mathcal D Z_{[t]}) \notag\\
&+ I(X_{[t]} \to Z_{[t]} \| Y_{[t]}) \label{eq:chain2}.
\end{align}
\end{lemma}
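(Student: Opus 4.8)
The plan is to reduce both identities to the ordinary chain rule for conditional mutual information, applied term by term in the temporal index, followed by a bookkeeping step that re-identifies each resulting sum as one of the directed-information quantities in \eqref{eq:Idir} or \eqref{eq:Idircond}. No deep property is needed: the only care required is with index bookkeeping and with the action of the delay operator $\mathcal D$.

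For \eqref{eq:chain1}, I would expand the left-hand side via \eqref{eq:Idir}, reading the pair process $(X_{[t]}, Y_{[t]})$ as the ``source'': $I((X_{[t]}, Y_{[t]}) \to Z_{[t]}) = \sum_{i=1}^t I(X_{[i]}, Y_{[i]}; Z_i \mid Z_{[i-1]})$. For each fixed $i$, split the joint argument with the standard chain rule,
\[
I(X_{[i]}, Y_{[i]}; Z_i \mid Z_{[i-1]}) = I(X_{[i]}; Z_i \mid Z_{[i-1]}) + I(Y_{[i]}; Z_i \mid Z_{[i-1]}, X_{[i]}).
\]
Summing over $i$, the first group of terms is by definition $I(X_{[t]} \to Z_{[t]})$, and the second group is exactly the causally conditioned directed information $I(Y_{[t]} \to Z_{[t]} \| X_{[t]})$ of \eqref{eq:Idircond}, which is \eqref{eq:chain1}.

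For \eqref{eq:chain2}, I would again expand the left-hand side via \eqref{eq:Idir}, now with the pair process $(Y_{[t]}, Z_{[t]})$ as the ``sink'': $I(X_{[t]} \to (Y_{[t]}, Z_{[t]})) = \sum_{i=1}^t I(X_{[i]}; Y_i, Z_i \mid Y_{[i-1]}, Z_{[i-1]})$. For each $i$, peel off the $Z_i$ coordinate with the chain rule,
\[
I(X_{[i]}; Y_i, Z_i \mid Y_{[i-1]}, Z_{[i-1]}) = I(X_{[i]}; Y_i \mid Y_{[i-1]}, Z_{[i-1]}) + I(X_{[i]}; Z_i \mid Y_{[i]}, Z_{[i-1]}).
\]
The second group sums to $I(X_{[t]} \to Z_{[t]} \| Y_{[t]})$ directly from \eqref{eq:Idircond} (note $Y_{[i]} = (Y_{[i-1]}, Y_i)$). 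For the first group, the key observation is that conditioning on $Z_{[i-1]}$ is the same as conditioning on $(\mathcal D Z_{[t]})_{[i]} = (0, Z_1, \ldots, Z_{i-1})$, since a deterministic leading $0$ may be dropped from the conditioning without changing the mutual information; hence $\sum_i I(X_{[i]}; Y_i \mid Y_{[i-1]}, Z_{[i-1]}) = \sum_i I(X_{[i]}; Y_i \mid Y_{[i-1]}, (\mathcal D Z_{[t]})_{[i]}) = I(X_{[t]} \to Y_{[t]} \| \mathcal D Z_{[t]})$, which is \eqref{eq:chain2}.

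The closest thing to an obstacle is the shift in the conditioning set induced by $\mathcal D$ in \eqref{eq:chain2}: one must verify that the causally conditioned directed information with delayed side information conditions on $Z_{[i-1]}$ rather than $Z_{[i]}$ at step $i$, and handle the degenerate $i=1$ term, where the conditioning history is empty (equivalently $(\mathcal D Z_{[t]})_{[1]}$ is a constant). Both points are immediate once the definitions \eqref{eq:Idir}, \eqref{eq:Idircond}, and the convention $\mathcal D X_{[t]} = (0, X_1, \ldots, X_{t-1})$ are unwound, so the entire argument is a short induction-free computation.
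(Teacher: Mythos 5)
Your proof is correct: both identities follow, exactly as you argue, by expanding the definitions \eqref{eq:Idir} and \eqref{eq:Idircond}, applying the ordinary chain rule of conditional mutual information term by term, and noting that the deterministic leading zero introduced by $\mathcal D$ can be dropped from the conditioning. The paper does not prove this lemma itself but cites it from Kramer's thesis, and your definition-unwinding argument is the standard one used there, so there is no substantive difference in approach.
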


Fix an $(M_{[t]}^{[K]}, d)$ code in \defnref{defn:ceocode}. Denote by $B_i^k \in [M_i^k]$ the codeword sent by $k$-th encoder at time $i$. Since the codewords satisfy the sum rate constraint \eqref{eq:sumrateconstraint}, 
\begin{align}
&~ n t  R \geq \sum_{k = 1}^K H(B_{[t]}^k) \\
 &\geq H \left(B_{[t]}^{[K]}\right) \label{eq:jointind} \\
 &\geq I \left( Y_{[t]}^{[K]} \to  B_{[t]}^{[K]} \right) \label{eq:dirent} \\
 &\geq \inf_{ \substack
 { P_{B_{[t]}^{[K]} \| Y_{[t]}^{[K]}} = \prod_{k = 1}^K P_{B_{[t]}^k \| Y_{[t]}^k}, \\
 P_{ \hat X_{[t]}^{[K]} \| B_{[t]}^{[K]}}  \colon \text{\eqref{eq:d} holds}} } I \left( Y_{[t]}^{[K]} \to  B_{[t]}^{[K]} \right), \label{eq:multiletter}
\end{align}
where \eqref{eq:jointind} holds because the joint entropy is upper-bounded by the sum of individual entropies, and \eqref{eq:dirent} holds because the mutual information is upper-bounded by the entropy. %The infimum in \eqref{eq:multiletter} is  over real-valued $B_i^k$ because relaxing cardinality constraints can only lower the directed information. 
Note that \eqref{eq:multiletter} is the $n$-letter version of \eqref{eq:bti}. 

\detail{Since at time $i$, the decoder's knowledge about past observations is limited to $B_{[i-1]}^{[K]}$, in \eqref{eq:multiletter}, we may restrict our attention to encoding policies that satisfy
\begin{equation}
 P_{B_i^k | B_{[i-1]}^k, Y_{[i]}^k} =  P_{B_i^k | B_{[i-1]}^k, Y_{i}^k}.\label{eq:encpast}
\end{equation}
Indeed, given any $P_{B_{[t]}^{[K]} \| Y_{[t]}^{[K]}}$, we may construct a new policy by replacing $P_{B_i^k | B_{[i-1]}^k, Y_{[i]}^k}$ by $P_{B_i^k | B_{[i-1]}^k, Y_{i}^k}$. This will not change the joint distribution $P_{\hat X_{[t]}^{[K]}, B_{[t]}^{[K]}}$ and thus will not change \eqref{eq:d}, and it can only decrease the mutual information since $I (Y_{[i]}^{[K]}; B_i^{[K]} | B_{[i-1]}^{[K]}) \geq I (Y_{i}^{[K]}; B_i^{[K]} | B_{[i-1]}^{[K]})$.}

We proceed to apply a standard single-letterization argument to \eqref{eq:multiletter}. For an $n$-dimensional vector $Y_{i}^k$, we denote by $Y_{i}^k(j)$ its $j$-th component; for sets $\mathcal K \subseteq [K]$ and $\mathcal I \subseteq [n]$, we denote by $Y_{i}^{\mathcal K}(\mathcal I)$ the components of the vectors $\left( Y_{i}^{k} \colon k \in \mathcal K \right)$ indexed by $\mathcal I$. 

We introduce auxiliary random objects
\begin{align}
U_i^k(j)= \left(B_i^k, Y_i^{[K]}([j-1]) \right), \quad j \in [n] \label{eq:singleletterU}
\end{align} 

 The directed mutual information in the right side of \eqref{eq:multiletter} can be rewritten in terms of $U_i^{[K]}$ and bounded as follows. 
\begin{align}
&~ I \left( Y_{[t]}^{[K]} \to  B_{[t]}^{[K]} \right) \notag\\
=&~ \sum_{j = 1}^n I \left( Y_{[t]}^{[K]}(j) \to  B_{[t]}^{[K]}  \|  Y_{[t]}^{[K]}([j-1])\right) \label{eq:singlelettera} \\
=&~ \sum_{j = 1}^n I \left( Y_{[t]}^{[K]}(j) \to  \left( B_{[t]}^{[K]},  Y_{[t]}^{[K]}\left([j-1]\right) \right) \right) \notag\\
& - I \left( Y_{[t]}^{[K]}(j) \to  Y_{[t]}^{[K]}([j-1]) \| \D B_{[t]}^{[K]} \right) \label{eq:singleletterb} \\
=&~ \sum_{j = 1}^n I \left( Y_{[t]}^{[K]}(j) \to  \left( B_{[t]}^{[K]},  Y_{[t]}^{[K]}\left([j-1]\right) \right) \right)  \label{eq:singleletterc}\\
=&~ \sum_{j = 1}^n I \left( Y_{[t]}^{[K]}(j) \to  U_{[t]}^{[K]} (j)\right)  \label{eq:singleletterd} \\
\geq&~ \min_{\substack{ d_j, j \in [n] \colon \\
\sum d_j \leq n d  } } ~t \sum_{j = 1}^n \underline{\mathsf R}_{t\,\mathrm{CEO}}(d_j)  \label{eq:singlelettere}\\
\geq &~ n t\, \underline{\mathsf R}_{t\,\mathrm{CEO}}(d) \label{eq:singleletterf}
\end{align}
where \eqref{eq:singlelettera} is by the chain rule of mutual information; \eqref{eq:singleletterb} is by the chain rule of directed information \eqref{eq:chain2}; \eqref{eq:singleletterc} holds because $P_{B_{[t]}^{[K]} | Y_{[t]}^{[K]}} =  P_{B_{[t]}^{[K]} \| Y_{[t]}^{[K]}} $ is a causal kernel, which means that $P_{Y_{[t]}^{[K]} \| \D B_{[t]}^{[K]}  } = P_{Y_{[t]}^{[K]}}$, hence conditioning on $\D B_{[t]}^{[K]}$ in \eqref{eq:singleletterb} can be eliminated, and the resulting directed information is zero because different components of the vector $Y_i^k$ are independent due to \eqref{eq:singlelettersource}, \eqref{eq:singleletterchannel}; \eqref{eq:singleletterd} is by substituting \eqref{eq:singleletterU}; \eqref{eq:singlelettere} holds because $U_i^k(j)$ \eqref{eq:singleletterU}  satisfies 
$
P_{U_{[t]}^{k}(j) \| Y_{[t]}^{[K]}(j)} = P_{U_{[t]}^{k}(j) \| Y_{[t]}^{k}(j)}
$, the distortion measure is separable and \eqref{eq:singlelettersource}, \eqref{eq:singleletterchannel} hold; and \eqref{eq:singleletterf} is by the convexity of $\underline{\mathsf R}_{t\,\mathrm{CEO}}(d)$ as a function of $d$. \hfill $\qed$
\detail{
\begin{align}
 &~ P_{U_{i}^{k}(j) | Y_{[i]}^{[K]}(j), U_{[i-1]}^{k}(j)}  \notag\\
 =&~  P_{B_i^k, Y_i^{[K]}([j-1]) | B_{[i-1]}^k, Y_{[i-1]}^{[K]}([j-1]), Y_{[i]}^{[K]}(j)} \\
 =&~P_{B_i^k | B_{[i-1]}^k, Y_{[i]}^{[K]}([j])} \notag\\
 \cdot&~ P_{Y_i^{[K]}([j-1]) | Y_{[i-1]}^{[K]}([j-1]), Y_{[i]}^{[K]}(j),B_{[i-1]}^k}
\end{align} }
\subsection{\thmref{thm:cg}: proof of achievability}
\label{sec:a}
To show that \eqref{eq:cdp} is achievable in the asymptotics $n \to \infty$, we first show a nonasymptotic bound. Then, via an asymptotic analysis of the bound, we derive an extension of the Berger-Tung inner bound  \cite{berger1978multi,tung1978multiterminal} to the setting with inter-block memory. 

Before we present our nonasymptotic achievability bound in \thmref{thm:bt} below, we prepare some notation. 

For a fixed conditional distribution $P_{U_i^k  Y_{[i]}^k | U_{[i-1]}^k }$, denote the conditional information density 
\begin{align}
 \imath\left(y_{[i]}^k; u_i^k | u_{[i-1]}^k \right)  \triangleq \log \frac{d P_{U_i^k | Y_{[i]}^k, U_{[i-1]}^k }\left(u_i^k | y_{[i]}^k, u_{[i-1]}^k \right)}{d P_{U_i^k | U_{[i-1]}^k }\left(u_i^k | u_{[i-1]}^k \right)}. 
\end{align}

For a fixed joint distribution $P_{U_{[i]}^{[K]}}$, denote the relative conditional information densities
\begin{align}
\!\!\!\!\!\! \jmath^{k} \left(u_{[i]}^{[K]} \right) \triangleq   \log  \frac{dP_{U_{i}^{k} | U_{i}^{[k-1]} U_{[i-1]}^{[K]} } \!\! \left(u_i^{k}  \mid  u_i^{[k-1]} u_{[i-1]}^{[K]}  \right) }{  d P_{U_{i}^k | U_{[i-1]}^{k} } \left( u_i^k \  \mid  u_{[i-1]}^{k}   \right)  } .\!\!\!\!\! \label{eq:jsubset}
%&\jmath^{k} \left(u_{[i]}^{[K]} \right) \triangleq   \log \frac{dP_{U_{i}^{\pi(k)} | U_{i}^{\pi([k-1])} U_{[i-1]}^{[K]} } \left(u_i^{k}  \mid  u_i^{\pi([k-1])} u_{[i-1]}^{[K]}  \right) }{  d P_{U_{i}^{\pi(k)} | U_{\pi([i-1])}^{k} } \left( u_i^{\pi(k)} \  \mid  u_{[i-1]}^{\pi(k)}   \right)  }  .  \label{eq:jsubset}
\end{align}
For a permutation $\pi \colon [K] \mapsto [K]$, we denote the ordered set
\begin{align}
 \pi(\mathcal K) \triangleq \left\{ \pi(k) \colon k \in \mathcal K\right\}.
\end{align}

\begin{thm}[nonasymptotic Berger-Tung inner bound with inter-block memory]
Fix $P_{Y_{[t]}^{[K]}}$ and parameters $M_{[t]}^{[K]}, d_{[t]}^{[K]}, \epsilon$. 
 For any scalars $\alpha_i^k, \beta_i^k$, any integers $L_i^k \geq M_i^k$, $i \in [t]$, $k \in [K]$, any causal kernels $P_{U_{[t]}^{[K]} \| Y_{[t]}^{[K]}} = \prod_{k = 1}^K P_{U_{[t]}^k \| Y_{[t]}^k}$ and $P_{ \hat X_{[t]}^{[K]} \| U_{[t]}^{[K]}}$, and any permutation $\pi \colon [K] \mapsto [K]$, there exists an $(M_{[t]}^{[K]}, d_{[t]}, \epsilon)$ excess distortion CEO code with inter-block memory such that 
\begin{align}
 \epsilon 
 &\leq \Prob{ \mathcal E}  + \gamma, \label{eq:btbound}
 \end{align}
 where event $\mathcal E$ is given by
 \begin{align}
 \mathcal E \triangleq & \bigcup_{i = 1}^t  \left\{ \sd \left(X_i, \hat X_i^k \right) > d_i \right\}\label{eq:event}\\
&\bigcup_{i = 1}^t \bigcup_{k = 1}^K \left\{   \imath\left(Y_{[i]}^k; U_i^k | U_{[i-1]}^k \right) > \log L_i^k - \alpha_i^k \right\} \notag \\
& \bigcup_{i = 1}^t \bigcup_{k = 1}^K \left\{   \jmath^{\pi(k)} \left(u_{[i]}^{\pi([K])} \right)   <  \log \frac{L_i^{\pi(k)}}{M_i^{\pi(k)}}  + \beta_i^{\pi(k)}\right\} ,  \notag 
\end{align}
and constant $\gamma$ is given by
\begin{align}
&~ \gamma \triangleq  1 - \label{eq:gamma}\\
&~ \frac {1}{ \prod_{i = 1}^t  \left[ \sum_{\mathcal K \subseteq K } \exp(-\sum_{k \in \mathcal K} \beta_i^{k})  \right] \prod_{k = 1}^K  \left[ 1 + \exp(-\alpha_i^k) \right]  }.\notag
\end{align}
\label{thm:bt}
\end{thm}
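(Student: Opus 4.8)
The plan is to prove Theorem~\ref{thm:bt} by a random coding argument built around the stochastic likelihood coder (SLC) of Yassaee et al., viewing the horizon-$t$ causal problem as a sequence of coupled one-shot multiterminal coding problems in which coded side information from the past is available. First I would generate, for each $k\in[K]$ and each $i\in[t]$, a random codebook of $L_i^k$ codewords $U_i^k(\ell)$, $\ell\in[L_i^k]$, drawn i.i.d.\ from the marginal $P_{U_i^k\| U_{[i-1]}^k}$ induced by the fixed kernel, conditioned on the previously chosen codewords $U_{[i-1]}^k$. Encoder $k$, having observed $Y_{[i]}^k$ and its own past codeword indices, applies the SLC: it picks index $\ell$ with probability proportional to $\exp(\imath(Y_{[i]}^k;U_i^k(\ell)\mid U_{[i-1]}^k))$, i.e.\ proportional to the likelihood $P_{U_i^k| Y_{[i]}^k,U_{[i-1]}^k}(U_i^k(\ell)\mid\cdot)$. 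The key structural property of the SLC, established in \cite{yassaee2013technique}, is that this induces a distribution on $(Y_{[i]}^k,U_i^k)$ that is close, in a quantifiable one-shot sense, to the true joint law $P_{Y_{[i]}^k U_i^k\| U_{[i-1]}^k}$ whenever $\log L_i^k$ exceeds the relevant conditional information density; the slack is captured by the factor $1+\exp(-\alpha_i^k)$ appearing in $\gamma$. I would apply this soft-covering step at every time $i$ and for every encoder $k$, chaining the bounds multiplicatively over $i\in[t]$ and $k\in[K]$.

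Next comes the binning/decoding step. Each $L_i^k$-sized codebook is partitioned uniformly at random into $M_i^k$ bins; encoder $k$ transmits at time $i$ the bin index $B_i^k$ of its chosen codeword. The decoder, having received $B_{[t]}^{[K]}$, must recover the codeword tuple $U_{[t]}^{[K]}$. Here I would use a generalized likelihood decoder that, fixing the permutation $\pi$, processes the $K$ encoders in the order $\pi(1),\pi(2),\dots,\pi(K)$ and, at each time $i$, performs $K$ successive threshold tests: having already committed to $U_i^{\pi([k-1])}$ and all of $U_{[i-1]}^{[K]}$, it searches bin $B_i^{\pi(k)}$ for the unique codeword whose relative information density $\jmath^{\pi(k)}(u_{[i]}^{\pi([K])})$ exceeds $\log(L_i^{\pi(k)}/M_i^{\pi(k)})+\beta_i^{\pi(k)}$. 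A union bound over the at-most-$L_i^{\pi(k)}/M_i^{\pi(k)}$ other codewords in the bin, together with a change-of-measure argument against the (approximately correct) induced joint law, bounds the probability that a wrong codeword passes this test; summing the failure probabilities over the successive refinement stages at a fixed time $i$ produces the combinatorial factor $\sum_{\mathcal K\subseteq[K]}\exp(-\sum_{k\in\mathcal K}\beta_i^k)$ in $\gamma$ (the subset $\mathcal K$ indexing which of the stages fail), and the product over $i\in[t]$ accounts for the horizon. On the event $\mathcal E^c$ that all the information-density inequalities in \eqref{eq:event} hold simultaneously, none of the covering or decoding steps fails, the decoder recovers the true $U_{[t]}^{[K]}$, and feeding these through the fixed decoding kernel $P_{\hat X_{[t]}^{[K]}\|U_{[t]}^{[K]}}$ yields reconstructions satisfying the per-time distortion constraints, since the distortion event is also excluded by $\mathcal E$.

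Finally I would average over the random codebooks and random binning: the expected excess-distortion probability is at most $\Prob{\mathcal E}$ plus the total slack accumulated across all covering and decoding steps, which collects into exactly the constant $\gamma$ of \eqref{eq:gamma}; hence some deterministic code achieves $\epsilon\le\Prob{\mathcal E}+\gamma$. The condition $L_i^k\ge M_i^k$ is what makes the binning well-defined. I expect the main obstacle to be the bookkeeping of the \emph{successive} structure in both the temporal and the spatial direction simultaneously: one must carefully track that the induced law after the SLC step at time $i$ (which is only approximately $P_{Y_{[i]}^{[K]}U_{[i]}^{[K]}}$) is the measure against which the time-$i$ decoding thresholds are analyzed, and that the approximation errors from different $(i,k)$ compose multiplicatively rather than additively — this is precisely why $\gamma$ has a product form over $i$ and the per-$i$ terms factor into an encoder part $\prod_k(1+\exp(-\alpha_i^k))$ and a decoder part $\sum_{\mathcal K}\exp(-\sum_{k\in\mathcal K}\beta_i^k)$. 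Getting the generalized likelihood decoder of \cite{merhav2017gld} to play well with the chain-rule decomposition \eqref{eq:jsubset} of $\jmath^{\pi(k)}$ across the permutation $\pi$, so that the union-bound constants come out as stated, is the delicate part; the rest is a now-standard one-shot soft-covering plus change-of-measure computation applied $t$ times.
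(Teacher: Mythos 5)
Your outline captures the right encoding architecture (SLC encoders per time step with codebooks drawn from $P_{U_i^k\mid U_{[i-1]}^k}$, followed by random binning) and the right intuition that a permutation-ordered chain of threshold tests on $\jmath^{\pi(k)}$ should characterize the decodable region. But the decoder you describe and the analysis you sketch are not the ones that produce the bound in the theorem, and I do not think the sketch as written can be completed to give \eqref{eq:btbound} with $\gamma$ in the form \eqref{eq:gamma}.

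First, the decoder. You describe a \emph{successive} decoder that, for fixed $\pi$, commits to $\hat U_i^{\pi(1)},\hat U_i^{\pi(2)},\dots$ one stage at a time, performing a deterministic search in bin $B_i^{\pi(k)}$ for ``the unique codeword'' passing the $\jmath^{\pi(k)}$ threshold. The paper's decoder (Appendix~\ref{apx:btnonas}, eq.~\eqref{eq:btdec}) is not sequential and not a deterministic search: it is a \emph{joint stochastic} generalized likelihood decoder that draws a single random tuple $\hat\ell_i^{[K]}$ uniformly among all tuples that both fall into the received bins $m_i^{[K]}$ and satisfy all $K$ threshold tests simultaneously, via the joint indicator $\mathsf g$ in \eqref{eq:gld}. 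A successive search decoder can certainly be defined, but its error analysis propagates differently (once stage $j<k$ commits to a wrong $\hat U_i^{\pi(j)}$, the change of measure for stage $k$ is against the wrong conditional), and you would need a separate argument to handle those cascading errors; you do not address this.

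Second, the bookkeeping that produces $\gamma$. You claim that a union bound over wrong codewords and over stages yields the factor $\sum_{\mathcal K\subseteq[K]}\exp(-\sum_{k\in\mathcal K}\beta_i^k)$, with $\mathcal K$ ``indexing which of the stages fail.'' That is not where this factor comes from and the reading is off in two ways. In the paper, $\mathcal K$ indexes which coordinates of a competing joint tuple $\ell_i^{[K]}$ \emph{agree} with the transmitted index ($\ell_i^{\pi(k)}=1$), and the sum over $\mathcal K$ appears when computing $\E[G_i\mid\mathcal F]$, the conditional expectation of the normalization constant of the joint stochastic decoder — not from a union bound over failure modes. More fundamentally, the form $\gamma = 1 - 1/\prod_i[\dots]$ is a reciprocal, and reciprocals of products do not arise from union bounds. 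They arise from the specific device of Yassaee et al.: write the probability of correct decoding as an expectation of a product of reciprocals of the SLC and GLD normalizations, namely $\E[\prod_i G_i^{-1}\prod_k (F_i^k)^{-1}\mid\mathcal F]$, and then apply Jensen's inequality (joint convexity of the multivariate reciprocal) to lower-bound it by $\prod_i \E[G_i\mid\mathcal F]^{-1}\prod_k\E[F_i^k\mid\mathcal F]^{-1}$ (eq.~\eqref{eq:yassjen}). This Jensen step is the load-bearing step of the whole proof and you never invoke it; a ``union bound plus change of measure'' sketch, while plausible asymptotically, produces an additive error bound of a different shape, not \eqref{eq:btbound}--\eqref{eq:gamma}. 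To repair the proposal you would need to (i) replace the sequential search by the joint stochastic GLD with indicator $\mathsf g$, (ii) identify $F_i^k$ and $G_i$ as the encoder/decoder normalizers and compute $\E[F_i^k\mid\mathcal F]$ and $\E[G_i\mid\mathcal F]$ explicitly (this is where the $1+\exp(-\alpha_i^k)$ and the subset-sum over $\mathcal K$ enter, via \eqref{eq:Fi} and \eqref{eq:Gi}), and (iii) pass to $\gamma$ via Jensen's inequality rather than a union bound.
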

\begin{proof}[Proof sketch]
We employ the achievability proof technique developed by Yassaee et al. \cite{yassaee2013technique,yassaee2013techniqueArxiv} that uses a stochastic likelihood coder (SLC) to perform encoding operations. An SLC makes a randomized decision that coincides with high probability with the choice that a maximum likelihood (ML) coder would make (in fact, the error probability of the SLC exceeds by at most a factor of 2 the error probability of the ML coder \cite[Th. 7]{liu2017alpha}). We view the horizon-$t$ causal coding problem as a multiterminal coding problem in which at each step coded side information from past steps is available, and we define the SLC based on the auxiliary transition probability kernel $P_{U_i^k | Y_{[i]}^k U_{[i-1]}^k }$ (see \eqref{eq:btenc} in Appendix~\ref{apx:btnonas}) that is also used to generate random codebooks.

While \cite[Th.~6]{yassaee2013techniqueArxiv} shows a sharp nonasymptotic bound for the classical distributed source coding problem with $K=2$ terminals, the  decoder employed there does not extend to the case $K > 2$. In  \eqref{eq:btdec} in Appendix~\ref{apx:btnonas}, we propose a novel decoder that falls into the class of generalized likelihood decoders (GLD) conceptualized by Merhav \cite[eq. (4)]{merhav2017gld} and that uses an auxiliary indicator function $\mathsf g\left(u_{[i]}^{[K]}\right)$ \eqref{eq:gld}. With our GLD we are able to recover the full Berger-Tung region (\eqref{eq:btreg}, below) for any $K$.  One can view the set of outcomes $u_{[i]}^{[K]}$ for which $\mathsf g\left(u_{[i]}^{[K]}\right) = 1$ as a jointly typical set. That set depends on the choice of $\pi$ and thus on the particular rate point that the code is operating at. Checking for membership in that set involves $K$ threshold tests. In contrast, the jointly typical set defined by Oohama \cite[eq. (46)]{oohama1998ceo} involves $2^K - 1$ threshold tests, one for each nonempty subset of $[K]$.

Full details are given in Appendix~\ref{apx:btnonas}.
\end{proof}

\begin{thm}[Berger-Tung inner bound with inter-block memory]
 Under the assumptions of \thmref{thm:cg}, the rate-distortion tuple $(R^{[K]}, d)$ is asymptotically achievable at time horizon $t$ if for some 
 single-letter causal kernels $ P_{\mat U_{[t]}^{[K]} \| {\mat Y}_{[t]}^{[K]}}$, $P_{ \hat {\mathsf X}_{[t]}^{[K]} \| \mathsf U_{[t]}^{[K]}}$ satisfying \eqref{eq:sepenc1}, \eqref{eq:dt1} and some permutation $\pi \colon [K] \mapsto [K]$, it holds for all $k \in [K]$
  \begin{align}
R^{\pi(k)} > \frac 1 t I \left({\mat Y}_{[t]}^{\pi(k)} \to  \mat U_{[t]}^{\pi(k)}  \| \mat U_{[t]}^{\pi([k-1])}, \D \mat  U_{[t]}^{[K]}  \right). 
\label{eq:Rbt}
\end{align}
\label{thm:btas}
\end{thm}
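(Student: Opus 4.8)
The plan is to apply the nonasymptotic bound of \thmref{thm:bt} to the $n$-letter memoryless extension of the single-letter model and then let $n\to\infty$. Concretely, I would invoke \thmref{thm:bt} with the product kernels $\bigl(P_{\mat U_{[t]}^{[K]}\|\mat Y_{[t]}^{[K]}}\bigr)^{\otimes n}=\prod_{k=1}^{K}\bigl(P_{\mat U_{[t]}^k\|\mat Y_{[t]}^k}\bigr)^{\otimes n}$ and $\bigl(P_{\hat{\mat X}_{[t]}^{[K]}\|\mat U_{[t]}^{[K]}}\bigr)^{\otimes n}$, with the permutation $\pi$ supplied by the hypothesis, and with the free parameters taken as $\log M_i^{\pi(k)}=\lceil n\rho_i^{\pi(k)}\rceil$, $\log L_i^{\pi(k)}-\log M_i^{\pi(k)}=\lceil n\eta_i^{\pi(k)}\rceil$ with gaps $\eta_i^{\pi(k)}>0$, and $\alpha_i^k=\beta_i^k=n\mu$ for a small $\mu>0$; the per-step rates $\rho_i^{\pi(k)}$ and the gaps $\eta_i^{\pi(k)}$ are fixed below. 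With this scaling the additive constant in \eqref{eq:btbound} equals $\gamma=1-(1+e^{-n\mu})^{-2Kt}$, which tends to $0$ as $n\to\infty$ for fixed $t,K,\mu$, so it remains only to drive $\Prob{\mathcal E}\to0$ and to read off the resulting rate budget.

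Second, I would estimate $\Prob{\mathcal E}$ by the weak law of large numbers. By \eqref{eq:singlelettersource}--\eqref{eq:singleletterchannel} the source and the observation channels act coordinatewise, and the chosen kernels are product kernels, so each $n$-letter information density occurring in \eqref{eq:event} is a sum of $n$ i.i.d.\ single-letter copies and $\sd(X_i,\hat X_i)$ is an average of $n$ i.i.d.\ single-letter distortions. Hence, in probability, $\frac1n\,\imath(Y_{[i]}^k;U_i^k\mid U_{[i-1]}^k)\to I(\mat Y_{[i]}^k;\mat U_i^k\mid\mat U_{[i-1]}^k)$, $\frac1n\,\jmath^{\pi(k)}(U_{[i]}^{\pi([K])})\to J_i^{\pi(k)}$, and $\sd(X_i,\hat X_i)\to\E{\sd(\mat X_i,\hat{\mat X}_i)}$, where $J_i^{\pi(k)}\triangleq H(\mat U_i^{\pi(k)}\mid\mat U_{[i-1]}^{\pi(k)})-H(\mat U_i^{\pi(k)}\mid\mat U_i^{\pi([k-1])},\mat U_{[i-1]}^{[K]})$ is the single-letter mean of $\jmath^{\pi(k)}$. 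Choosing $d_i\triangleq\E{\sd(\mat X_i,\hat{\mat X}_i)}+\theta$ with a small $\theta>0$ makes the distortion events of \eqref{eq:event} have vanishing probability while keeping $\frac1t\sum_i d_i\le d+\theta$ by \eqref{eq:dt1}; and the two families of density events in \eqref{eq:event} have vanishing probability as soon as, for all $i\in[t]$, $k\in[K]$,
\begin{align}
\rho_i^{\pi(k)}+\eta_i^{\pi(k)}-\mu &> I\bigl(\mat Y_{[i]}^{\pi(k)};\mat U_i^{\pi(k)}\mid\mat U_{[i-1]}^{\pi(k)}\bigr), \label{eq:btasCov}\\
\eta_i^{\pi(k)}+\mu &< J_i^{\pi(k)}. \label{eq:btasPack}
\end{align}

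Third comes the rate accounting, which I expect to be the crux. Picking $\eta_i^{\pi(k)}$ just below $J_i^{\pi(k)}-\mu$ makes \eqref{eq:btasPack} hold with vanishing margin and turns \eqref{eq:btasCov} into the requirement that $\rho_i^{\pi(k)}$ exceed $I(\mat Y_{[i]}^{\pi(k)};\mat U_i^{\pi(k)}\mid\mat U_{[i-1]}^{\pi(k)})-J_i^{\pi(k)}$ by an arbitrarily small amount. Because the encoder kernel $P_{\mat U_{[t]}^{\pi(k)}\|\mat Y_{[t]}^{\pi(k)}}$ reads only the observations of terminal $\pi(k)$, the chain $\mat U_i^{\pi(k)}\leftrightarrow(\mat Y_{[i]}^{\pi(k)},\mat U_{[i-1]}^{\pi(k)})\leftrightarrow(\mat U_i^{\pi([k-1])},\mat U_{[i-1]}^{[K]})$ is Markov, so that $H(\mat U_i^{\pi(k)}\mid\mat Y_{[i]}^{\pi(k)},\mat U_{[i-1]}^{\pi(k)})=H(\mat U_i^{\pi(k)}\mid\mat Y_{[i]}^{\pi(k)},\mat U_i^{\pi([k-1])},\mat U_{[i-1]}^{[K]})$; expanding the mutual informations and $J_i^{\pi(k)}$ into conditional entropies, the marginal terms cancel and
\begin{align}
&I\bigl(\mat Y_{[i]}^{\pi(k)};\mat U_i^{\pi(k)}\mid\mat U_{[i-1]}^{\pi(k)}\bigr)-J_i^{\pi(k)} \nonumber\\
&\qquad=I\bigl(\mat Y_{[i]}^{\pi(k)};\mat U_i^{\pi(k)}\mid\mat U_i^{\pi([k-1])},\mat U_{[i-1]}^{[K]}\bigr). \label{eq:btasCollapse}
\end{align}
Thus it suffices to let $\rho_i^{\pi(k)}$ be the right side of \eqref{eq:btasCollapse} plus, say, $3\mu$. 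Summing over $i\in[t]$ and using $(\mathcal D\,\mat U_{[t]}^{[K]})_{[i]}=\mat U_{[i-1]}^{[K]}$ together with the definition \eqref{eq:Idircond} of causally conditioned directed information, $\sum_{i=1}^{t}\rho_i^{\pi(k)}$ equals the right-hand side of \eqref{eq:Rbt} plus $3t\mu$. Hence, whenever \eqref{eq:Rbt} holds strictly for every $k$, taking $\mu$ and $\theta$ small and then $n$ large produces, via \thmref{thm:bt}, a sequence of excess-distortion causal CEO codes with $\epsilon\to0$, with per-link rates meeting the budget \eqref{eq:sumlogMik}, and with $\frac1t\sum_i d_i\le d+\theta$.

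Finally, I would invoke the standard H\"older-inequality argument of Appendix~\ref{sec:btas}, which relies on the moment hypothesis \eqref{eq:dmom}, to convert this excess-distortion guarantee into an average-distortion guarantee with time-averaged expected distortion at most $d+\theta+o(1)$; letting $\theta\downarrow0$ then establishes asymptotic achievability of $(R_{[K]},d)$ at horizon $t$. The main obstacle is the third step: one must verify that the covering constraint \eqref{eq:btasCov} on $\log L_i^{\pi(k)}$ and the packing constraint \eqref{eq:btasPack} on $\log(L_i^{\pi(k)}/M_i^{\pi(k)})$ combine with no leftover slack and that the marginal-entropy terms in $J_i^{\pi(k)}$ cancel exactly against those produced by the covering rate, so that the achievable region of \thmref{thm:bt} collapses precisely onto the causally conditioned directed-information region \eqref{eq:Rbt}. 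This rests on the Berger-Tung Markov structure of the encoders and on careful bookkeeping of the permutation $\pi$ and the delay operator $\mathcal D$.
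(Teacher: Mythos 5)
Your proposal is correct and follows essentially the same route as the paper's proof in Appendix~\ref{sec:btas}: invoke \thmref{thm:bt} with $n$-fold product single-letter kernels and the given $\pi$, scale $\alpha_i^k=\beta_i^k$ linearly in $n$, pick $\log L_i^k$ from the covering condition and $\log M_i^k$ from the binning condition, and use the Markov chain $\mathsf U_i^{\pi(k)}-(\mathsf Y_{[i]}^{\pi(k)},\mathsf U_{[i-1]}^{\pi(k)})-(\mathsf U_i^{\pi([k-1])},\mathsf U_{[i-1]}^{[K]})$ to collapse the covering rate minus the binning saving into the causally conditioned directed information of \eqref{eq:Rbt}, exactly the paper's identity \eqref{eq:diffDk}. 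The law-of-large-numbers analysis of $\Prob{\mathcal E}$, the vanishing of $\gamma$, and the H\"older passage from excess to average distortion under \eqref{eq:dmom} all match the paper's argument.
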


\begin{proof}
 Appendix~\ref{sec:btas}.
\end{proof}
\thmref{thm:btas} implies that the sum rate 
\begin{align}
\sum_{k = 1}^K R^k > \frac 1 t I \left(  {\mat Y}^{[K]}_{[t]} \to  \mat U^{[K]}_{[t]} \right) \label{eq:dssumrate}
\end{align}
is achievable. Indeed, summing \eqref{eq:Rbt} over $k$ and using $\mathsf U_i^k - \left( \mathsf Y_{[i]}^k, \mathsf U_{[i-1]}^k \right) - \mathsf U_{[i]}^{[K] \backslash \{k\}}$ leads to \eqref{eq:dssumrate}. Therefore, the sum rate in \eqref{eq:bto} is achievable. \hfill \qed

\subsection{Remarks}
\label{sec:remarks}
We conclude \secref{sec:dir} with a set of remarks. 
\begin{enumerate}[1.]
\item  %\begin{align}
%I \left(\mathsf X_{[i]}^{[K]}; \mathsf U_{[i]}^{[K]} | \mathsf U_{[i-1]}^{[K]}\right) = \sum_{k = 1}^K I \left(\mathsf X_{[i]}^k; \mathsf U_{i}^k | \mathsf U_{i}^{[k-1]}, \mathsf U_{[i-1]}^{[K]} \right), 
%\end{align}
Theorems~\ref{thm:bt} and ~\ref{thm:btas} are easily extended to distributed source coding with inter-block memory, where the goal is to separately compress (and jointly decompress) $K$ processes $\{Y_i^k\}$ under the individual distortion constraints 
\begin{align}
\frac 1 t \sum_{i = 1}^t \E{\sd^k(Y_i^k, \hat Y_i^k)} \leq d^k, \quad k \in [K].
\end{align}
 \thmref{thm:bt} continues to hold with $\sd \left(X_i, \hat X_i^k \right) > d_i$ in \eqref{eq:event} replaced by $\sd^k \left(Y_i^k, \hat Y_i^k \right) > d_i^k$. Consequently, \thmref{thm:btas} also continues to hold, replacing the constraint in \eqref{eq:dt1} by
\begin{align}
& \frac 1 {t} \sum_{i = 1}^t \E{\sd^k(\mat Y_i^k, \hat {\mat Y}_i^k) } \leq d^k, \quad k \in [K]. \label{eq:dk}
\end{align} 

\item Case $t = 1$ corresponds to the classical CEO / distributed source coding problems. The region in \eqref{eq:Rbt} simplifies to 
\begin{align}
\!\!\! R^{\pi(k)} > &~I(\sY^{\pi(k)}; \sU^{\pi(k)} | \sU^{\pi([k - 1])}), \notag\\
   &~\forall k \in [K], \forall \text{\,permutation $\pi \colon [K] \mapsto [K]$}. \label{eq:btperm}
\end{align}
The multiterminal Berger-Tung inner region is usually (e.g. \cite[Def. 7]{courtade2014multiterminal}, \cite[eq. (2)]{prabhakaran2004ceo}) specified as 
\begin{align}
\sum_{k \in \mathcal A} R^k > I(\sY^{\mathcal A}; \sU^{\mathcal A} | \sU^{\mathcal A^c}), \quad \forall \mathcal A \subseteq [K]. \label{eq:btreg}  
\end{align}
These characterizations are equivalent (Appendix~\ref{apx:bt}). 

\item While the sum rate bound in \eqref{eq:dssumrate} is the same regardless of the choice of permutation $\pi$, different $\pi$'s in \eqref{eq:Rbt} correspond to different orders in which the chain rule of mutual information can be applied, and are needed to specify the full achievable region of rates and distortions. 

\item We chose to omit the time-sharing random variable in \thmref{thm:btas} for simplicity of presentation. It can be introduced in \eqref{eq:Rbt} using the standard time sharing argument \cite[Ch. 4.4]{el2011network}. 
\end{enumerate}

\section{Gaussian rate-distortion function }
\label{sec:rd}

\subsection{Problem setup}
This section focuses on the scenario of the Gauss-Markov source in \eqref{eq:xi} observed through the Gaussian channels in \eqref{eq:yik} under squared error distortion \eqref{eq:MSEdef}.
Given an encoding policy in \defnref{defn:ceocode}, the optimal decoding policy $P_{ \hat X_{[t]}^{[K]} \| B_{[t]}^{[K]}}$ that achieves the minimum expected squared error is 
\begin{align}
 \hat X_i \triangleq \E{X_i | B_{[i]}^{[K]}}. 
 \label{eq:Xihat}
\end{align}

For simplicity we focus on the infinite time-horizon limit. 
\begin{align}
 R_{\mathrm{CEO}}(d) &\triangleq \limsup_{t \to \infty} R_{t\,\mathrm{CEO}}(d). \label{eq:Rceoinf}
\end{align}
In other words,  the causal CEO rate-distortion function $R_{\mathrm{CEO}}(d)$ is the infimum of $R$'s such that $\forall \gamma > 0$, $ \exists t_0 \geq 0$ such that $\forall t \geq t_0$,  $\exists n_0 \in \mathbb N$ such that $\forall n \geq n_0$, an $\left(M_{[t]}^{[K]}, d + \gamma \right)$ average distortion CEO code with inter-block memory exists with $M_{[t]}^{[K]}$ satisfying \eqref{eq:sumlogMik} and \eqref{eq:sumrateconstraint}.

Taking the limit $t \to \infty$ simplifies the solution of many minimal directed mutual information problems (\!\!\cite[Th.~9]{kostina2016ratecost}, \cite[Th.~6, Th.~7]{kostina2018SIallerton}, \cite[Th.~1]{tanaka2015stationary}, \cite[Th.~2]{guo2019wienersampling}, \cite[Th.~1]{sabag2021feedbackISIT}) by eliminating the transient effects due to the starting location $X_1$ of the process $\{X_i\}$ that is being transmitted. In this steady state regime, the optimal rate allocation across time is uniform (i.e., $\log M_1^k = \ldots = \log M_t^k$ in \eqref{eq:sumlogMik}). Furthermore, $R_{t\,\mathrm{CEO}}(d)$ approaches its steady-state value \eqref{eq:Rceoinf} as $\bigo{\frac 1 t}$ (this is a consequence of \cite[eq. (83)-(85), (92)]{kostina2018SIallerton} and \eqref{eq:cc}, \eqref{eq:Rnoiseless}, \eqref{eq:Rside} below). 

In \secref{sec:gceo}, we present the Gaussian rate-distortion function as a convex optimization problem over $K$ parameters (\thmref{thm:causalceo}), which reduces to an explicit formula in the identical-channels case (\corref{cor:sym}). These results are obtained by showing that the inner bound in \thmref{thm:cg} is tight in the Gaussian case and by evaluating the corresponding minimal directed mutual information. In \secref{sec:est}, we give auxiliary estimation lemmas that are useful in the proof of \thmref{thm:causalceo}. We give the proof of \thmref{thm:causalceo} in \secref{sec:proofmain}.

%For a random variable $\sX$, its variance is denoted by $\sigma_\sX^2$; its conditional variance given a random object $\sY$ is denoted by $\sigma_{X|Y}$. 
\emph{Notation:} For a random process $\{\sX_i\}$ on $\mathbb R$, its stationary variance (can be $+\infty$) is denoted by
\begin{align}
 \sigma_\sX^2 \triangleq \limsup_{i \to \infty} \E{ \sX_i^2}.
\end{align}
The minimum mean squared error (MMSE) in the estimation of $\sX_i$ from $\sY_{[i]}^{[K]}$ is denoted by 
\begin{align}
&~ \sigma_{\sX_i | \sY_{[i]}^{[K]}}^2 \triangleq \E{\left( \sX - \E{\sX_i| \sY_{[i]}^{[K]}}\right)^{2}},
\end{align}
and the steady-state causal MMSE by
\begin{align}
 \sigma_{\sX\| \sY^{[K]}}^2 \triangleq \limsup_{i \to \infty} \sigma_{\sX_i | \sY_{[i]}^{[K]}}^2.
\end{align}

\subsection{Gaussian rate-distortion function}
\label{sec:gceo}

In \thmref{thm:causalceo}, the Gaussian rate-distortion function is expressed as a convex optimization problem over parameters $\{d_k\}_{k = 1}^K$ that determine the individual rates of the transmitters and that correspond to the MSE achievable at the decoder in the estimation of $\{X_i\}_{i = 1}^t$ provided that the codewords from $k$-th transmitter are decoded correctly. %which determine the MSE in the prediction of the next state of $\{X_i\}_{i = 1}^t$ given the MSE in the current state  
\begin{thm}[Gaussian rate-distortion function with inter-block memory]
 For all $\sigma_{\sX\|\sY^{[K]}}^2 <  d < \sigma_{\mathsf X}^2$, the causal CEO rate-distortion function \eqref{eq:Rceoinf} for the Gauss-Markov source in \eqref{eq:xi} observed through the Gaussian channels in \eqref{eq:yik} is given by 
\begin{align}
R_{\mathrm{CEO}}(d) &= \frac 1 2 \log \frac{ \bar d }{d} 
+ \min_{\{d_k\}_{k = 1}^K}  \sum_{k = 1}^K \frac{1}{2} \log \frac{\bar d_k - \sigma_{\sX\| \sY^k}^2}{d_k - \sigma_{\sX\|\sY^k}^2}  \frac {d_k}{ \bar d_k}, \label{eq:minsumc}
\end{align}
where
\begin{align}
  \bar d &\triangleq a^2 d + \sigma_V^2, \label{eq:dprev}\\
    \bar d_k &\triangleq a^2 d_k + \sigma_V^2, \label{eq:dkbar}
\end{align}
and the minimum is over $d_k$, $k \in [K]$, that satisfy 
\begin{align}
 \frac 1 d &\leq \frac 1 {\sigma_{\sX\|\sY^{[K]}}^2} - \sum_{k = 1}^K \left( \frac 1 {\sigma^2_{\sX\|\sY^k}} - \frac 1 {d_k} \right), \label{eq:con1c}\\
\sigma_{\sX \| \sY^k}^2 &\leq d_k \leq \sigma_\sX^2.  \label{eq:con2c}
\end{align}
\label{thm:causalceo}
\end{thm}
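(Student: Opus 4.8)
The plan is to establish matching upper and lower bounds on $R_{\mathrm{CEO}}(d)$. The upper bound (achievability) follows by exhibiting a specific choice of single-letter Gaussian auxiliary kernels $P_{\mathsf U_{[t]}^k \| \mathsf Y_{[t]}^k}$ in the inner bound $\overline{\mathsf R}_{t\,\mathrm{CEO}}(d)$ of \thmref{thm:cg}, evaluating the resulting directed mutual information in the steady state $t \to \infty$, and optimizing. The lower bound (converse) must show that no better sum rate is possible; rather than going through the outer bound $\underline{\mathsf R}_{t\,\mathrm{CEO}}(d)$ directly (which is not known to be tight in general), I would mimic the approach of Wang et al.~\cite{wang2010sum} for $t=1$, extended to $t>1$. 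The key device is the reversal of the observation channels, $X_i = \bar X_i^k + W_i^{k\prime}$ with $\bar X_i^k = \E{X_i \mid Y_{[i]}^k}$: since for each $k$ the encoder's information about $\{X_i\}$ is a deterministic (causal, linear) function of $\{\bar X_i^k\}$, and $\{\bar X_i^k\}$ is itself a Gauss-Markov-like process driven by independent innovations, one can think of each transmitter as causally compressing its own ``cleaned'' estimate.

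For the achievability direction, I would pick $\mathsf U_i^k$ to be $\bar{\mathsf X}_i^k$ corrupted by an independent Gaussian ``test-channel'' noise whose variance is tuned so that the decoder, upon correctly decoding transmitter $k$ alone, attains MSE exactly $d_k$; the parameter $d_k$ ranges over $[\sigma_{\mathsf X\|\mathsf Y^k}^2,\sigma_{\mathsf X}^2]$ as in \eqref{eq:con2c}. Combining all $K$ streams, the decoder performs a Kalman-type fusion, and the resulting steady-state MSE is governed by the ``information combining'' identity that underlies \eqref{eq:con1c}: precisions add, $\frac{1}{d}\le \frac{1}{\sigma_{\mathsf X\|\mathsf Y^{[K]}}^2} - \sum_k\bigl(\frac{1}{\sigma_{\mathsf X\|\mathsf Y^k}^2}-\frac{1}{d_k}\bigr)$. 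The directed mutual information $I(\mathsf Y_{[t]}^{[K]}\to\mathsf U_{[t]}^{[K]})$ then splits, by the chain rule \eqref{eq:chain1}--\eqref{eq:chain2} and the conditional independence of the streams given the source, into a ``source-tracking'' term $\frac12\log\frac{\bar d}{d}$ (the cost of following the innovations $V_i$ at fidelity $d$, exactly as in the remote causal rate-distortion function of \cite{kostina2016ratecost}) plus a sum of per-transmitter ``quantization'' terms $\frac12\log\frac{\bar d_k-\sigma_{\mathsf X\|\mathsf Y^k}^2}{d_k-\sigma_{\mathsf X\|\mathsf Y^k}^2}\frac{d_k}{\bar d_k}$. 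Here $\bar d = a^2 d+\sigma_V^2$, $\bar d_k = a^2 d_k+\sigma_V^2$ are the one-step predicted variances \eqref{eq:dprev}, \eqref{eq:dkbar}. Minimizing over feasible $\{d_k\}$ gives the claimed formula as an upper bound; the auxiliary estimation lemmas of \secref{sec:est} are what make the steady-state variance recursions solvable in closed form.

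For the converse, I would start from the genie-aided / data-processing chain $nt\sum_k R_k \ge H(B_{[t]}^{[K]}) \ge I(Y_{[t]}^{[K]}\to B_{[t]}^{[K]})$ as in the proof of the converse to \thmref{thm:cg}, and then lower-bound this directed information by splitting it, via the chain rule, into a source-coding part and $K$ separate per-transmitter parts. The per-transmitter part for stream $k$ is bounded below using the entropy-power inequality applied to the reversed channel $X_i = \bar X_i^k + W_i^{k\prime}$: since $B_{[t]}^k$ is a function of $Y_{[t]}^k$ only, the MSE achievable from $B_{[t]}^k$ alone is some $d_k \ge \sigma_{\mathsf X\|\mathsf Y^k}^2$, and the rate spent on stream $k$ is at least the causal rate needed to describe $\bar{\mathsf X}^k$ to fidelity $d_k$, which is the per-transmitter term above. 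The source-coding part contributes $\frac12\log\frac{\bar d}{d}$ by a standard EPI/causal-rate-distortion argument on the innovations, and the constraint \eqref{eq:con1c} emerges as the (necessary) information-combining inequality relating the final decoder MSE $d$ to the individual $d_k$'s. The main obstacle, and the step that genuinely requires the extension of \cite{wang2010sum} beyond $t=1$, is handling the inter-block memory in the converse: one must carefully account for the fact that past codewords are available as side information at each step, and that the errors $W_i^{k\prime}$ in the reversed channels are correlated across time in a way that must be diagonalized (e.g.\ via the steady-state Kalman filter) before EPI can be applied blockwise. Establishing that the per-step rate allocation can be taken uniform in the $t\to\infty$ limit, and that the transient terms vanish as $O(1/t)$, is the technical crux that ties the parameters of the decomposed optimization problems back to those of \eqref{eq:minsumc}.
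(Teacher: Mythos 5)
Your proposal matches the paper's high-level strategy closely: both converse arguments start from the $n$-letter data-processing bound $nR \ge I(Y_{[t]}^{[K]} \to B_{[t]}^{[K]})$, pass to the reversed channels $X_i = \bar X_i^k + W_i^{k\prime}$, decompose via the directed-information chain rule into one ``source-tracking'' term $\frac12\log\frac{\bar d}{d}$ plus $K$ per-transmitter terms, link the resulting parameters $\{d_k\}$ to $d$ via the MMSE-combining identity (\lemref{lem:combo}), and both achievability arguments use the Gaussian test channel $\mathsf U_i^k = \bar{\mathsf X}_i^k + \mathsf Z_i^k$ and observe that equality holds in the decoupling step so the inner bound is tight.

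Where you differ from the paper is in how the $K+1$ decomposed terms are lower-bounded. You propose to prove each one from scratch via the entropy-power inequality (on the diagonalized innovations), whereas the paper simply invokes closed-form results from prior work: the source-tracking term is \cite[eq.\ (1.43)]{gorbunov1974prognostic} (point-to-point causal Gaussian RDF), and each per-transmitter term $I(\bar X_{[t]}^k \to B_{[t]}^k \| X_{[t]})$ is \cite[Th.~7]{kostina2018SIallerton} (the causal Gaussian Wyner-Ziv RDF), applied to the Gauss-Markov process $\{\bar X_i^k\}$ with side information $\{X_i\}$. Your observation that the errors $W_i^{k\prime}$ are correlated across time (they form a Gauss-Markov process whenever $a\neq 0$) and that this correlation must be handled before applying EPI blockwise is accurate and worth flagging — the paper itself states rather tersely that the $W_i^{k\prime}$ ``are independent,'' which cannot be read as temporal independence when $a\neq 0$, and the burden is effectively delegated to the cited causal Wyner-Ziv theorem, which must be verified to accommodate a jointly Gauss-Markov $(\bar X^k, X)$ process. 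Net effect: your route rebuilds machinery the paper outsources; the paper's route is shorter and more modular but hides the very subtlety you isolate. Either way, the crucial step that ``genuinely requires the extension of Wang et al.\ beyond $t=1$'' is correctly identified in both: showing that a single Gaussian kernel simultaneously attains all $K+1$ infima (so the sum of infima equals the infimum of the sum), and then translating the per-transmitter $\rho_k$'s into $d_k$'s via \lemref{lem:back} to obtain \eqref{eq:con1c}--\eqref{eq:con2c}.
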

\begin{proof}
\secref{sec:proofmain}. 
\end{proof}

If the source is observed directly by one or more of the encoders,  say if $\sigma_{\sX\| \sY^{1}}^2 = 0$, then $d_1 = d$, $d_2 = \ldots = d_K = \sigma_{\mathsf X}^2$ is optimal, and \eqref{eq:minsumc} reduces to the causal rate-distortion function \cite[eq. (1.43)]{gorbunov1974prognostic} (and e.g. \cite{tatikonda2004stochastic}, \cite[Th. 3]{derpich2012uppercausal}, \cite[(64)]{kostina2016ratecost}), \cite[Th. 6]{kostina2018SIallerton}):  
\begin{align}
 R(d) &= \frac{1}{2} \log \frac{\bar d}{d}. \label{eq:noiseless}
\end{align}
The sum over $k \in [K]$ in \eqref{eq:minsumc}  is thus the penalty due to the encoders not observing the source directly and not communicating with each other.

If the observation channels satisfy
\begin{align}
\sigma_{\mathsf X\| \mathsf Y^1}^2 = \ldots = \sigma_{\mathsf X\| \mathsf Y^K}^2, \label{eq:sym}
\end{align}
 we can explicitly write the rate-distortion function $R_{\mathrm{CEO}}^{K-\textrm{sym}}(d)$ for this symmetrical scenario.  

\begin{cor}[Observation channels with the same SNR]
If, in the scenario of \thmref{thm:causalceo}, the observation channels satisfy \eqref{eq:sym}, the causal CEO rate-distortion function \eqref{eq:Rceoinf} is given by 
\begin{align}
R_{\mathrm{CEO}}^{K-\textrm{sym}}(d) &= \frac 1 2 \log \frac{\bar d}{d} \label{eq:minsumcsym}
+ \frac{K}{2} \log \frac{\bar d_1 - \sigma_{\sX\| \sY^1}^2}{d_1 - \sigma_{\sX\|\sY^1}^2}  \frac {d_1}{ \bar d_1}, 
\end{align}
where $d_1$ satisfies
\begin{align}
 \frac 1 d &= \frac 1 {\sigma_{\sX\|\sY^{[K]}}^2} -  \frac K {\sigma^2_{\sX\|\sY^1}} + \frac K {d_1}. \label{eq:d1sym}
 \end{align} 
\label{cor:sym}
\end{cor}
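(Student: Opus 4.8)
The plan is to specialize the convex program of \thmref{thm:causalceo} to the symmetric case \eqref{eq:sym} and to show that its optimum is attained at a symmetric point where the coupling constraint \eqref{eq:con1c} is active.

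\emph{Step 1 (symmetry reduction).} Put $\sigma^2 \triangleq \sigma_{\sX\|\sY^1}^2 = \dots = \sigma_{\sX\|\sY^K}^2$. Under \eqref{eq:sym}, both the objective $\sum_{k=1}^{K} \frac12 \log \frac{\bar d_k - \sigma^2}{d_k - \sigma^2}\frac{d_k}{\bar d_k}$ in \eqref{eq:minsumc} and the feasible region cut out by \eqref{eq:con1c}, \eqref{eq:con2c} are invariant under permutations of $(d_1, \dots, d_K)$. Since that program is convex (this is part of the proof of \thmref{thm:causalceo}; note that \eqref{eq:con1c} reads $\sum_k 1/d_k \le 1/\sigma_{\sX\|\sY^{[K]}}^2 - 1/d + \sum_k 1/\sigma_{\sX\|\sY^k}^2$, a convex constraint), averaging any minimizer over the symmetric group $S_K$ yields a minimizer with all coordinates equal to a common value, which I denote again by $d_1$. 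The objective then collapses to $\frac12 \log \frac{\bar d}{d} + \frac{K}{2} h(d_1)$, where $h(\delta) \triangleq \frac12 \log \frac{\bar\delta - \sigma^2}{\delta - \sigma^2}\frac{\delta}{\bar\delta}$ and $\bar\delta \triangleq a^2 \delta + \sigma_V^2$ (cf. \eqref{eq:dkbar}); this is exactly the right-hand side of \eqref{eq:minsumcsym}. Constraint \eqref{eq:con2c} becomes $\sigma^2 \le d_1 \le \sigma_\sX^2$, and \eqref{eq:con1c} becomes $\frac1d \le \frac{1}{\sigma_{\sX\|\sY^{[K]}}^2} - \frac{K}{\sigma^2} + \frac{K}{d_1}$, i.e. an upper bound on $d_1$. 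It remains to identify the optimal $d_1$.

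\emph{Step 2 ($h$ is strictly decreasing).} A direct differentiation gives $h'(\delta) = \frac{\sigma^2}{2}\left( \frac{a^2}{\bar\delta(\bar\delta - \sigma^2)} - \frac{1}{\delta(\delta - \sigma^2)} \right)$, so the sign of $h'(\delta)$ equals the sign of the quadratic $P(\delta) \triangleq \bar\delta(\bar\delta - \sigma^2) - a^2\delta(\delta - \sigma^2)$, whose leading coefficient is $a^2(a^2 - 1)$. Since a single informative observation channel reduces the prior variance, $\sigma^2 < \sigma_\sX^2$, equivalently $\bar\sigma^2 > \sigma^2$ for $\bar\sigma^2 \triangleq a^2\sigma^2 + \sigma_V^2$; hence $P(\sigma^2) = \bar\sigma^2(\bar\sigma^2 - \sigma^2) > 0$, and for $|a| < 1$ also $P(\sigma_\sX^2) = (1 - a^2)\sigma_\sX^2(\sigma_\sX^2 - \sigma^2) > 0$. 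Being positive at the endpoints and opening downward, $P > 0$ on all of $[\sigma^2, \sigma_\sX^2]$; for $|a| \ge 1$ (so $\sigma_\sX^2 = \infty$), $P$ opens upward with vertex at a negative argument, hence $P > 0$ on $[\sigma^2, \infty)$ as well. Therefore $h' < 0$ on the feasible range, and the reduced objective is minimized by taking $d_1$ as large as feasible, namely $d_1 = \min\{ \sigma_\sX^2, \; K\left( \frac1d - \frac{1}{\sigma_{\sX\|\sY^{[K]}}^2} + \frac{K}{\sigma^2} \right)^{-1} \}$.

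\emph{Step 3 (which constraint binds).} I would then show that, throughout the regime $\sigma_{\sX\|\sY^{[K]}}^2 < d < \sigma_\sX^2$, the second entry of this minimum is the smaller one, so that $d_1$ is exactly the solution of \eqref{eq:d1sym}; substituting this $d_1$ into the reduced objective produces \eqref{eq:minsumcsym}, and the same algebra shows $\sigma^2 < d_1 < \sigma_\sX^2$ (so \eqref{eq:con2c} is slack) precisely under the stated hypothesis on $d$. This reduces to the scalar inequality $\frac{K}{\sigma_{\sX\|\sY^1}^2} - \frac{1}{\sigma_{\sX\|\sY^{[K]}}^2} \ge \frac{K-1}{\sigma_\sX^2}$. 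Rewriting posterior MMSEs in terms of one-step prediction MMSEs via $\frac{1}{\sigma_{\sX\|\sY^1}^2} = \frac{1}{\bar\sigma^2} + \frac{1}{\sigma_{\mathsf W_1}^2}$ and the analogous identity for the $K$-observer causal filter (whose effective observation noise is $\sigma_{\mathsf W_1}^2/K$), the inequality becomes the statement that the secant slopes at the origin of the steady-state prediction-MMSE reciprocal, viewed as a function of the channel precision, are non-increasing --- which holds because that reciprocal is the positive root of a fixed quadratic in the precision and is a concave function of it (the discriminant is a convex parabola in the precision whose square root is concave, the verification amounting to $(a^2 - 1)^2 \le (a^2 + 1)^2$).

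I expect Step 3 --- establishing that the coupling constraint \eqref{eq:con1c}, rather than the box constraint \eqref{eq:con2c}, is the active one at the optimum --- to be the main obstacle, since it is the point at which the Riccati structure of the Gaussian model genuinely enters; Steps 1 and 2 are routine.
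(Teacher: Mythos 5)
Your Step 1 is exactly the paper's own proof: the program in \thmref{thm:causalceo} is convex, its objective and constraints are invariant under permutations of $(d_1,\dots,d_K)$ when \eqref{eq:sym} holds, so Jensen/averaging forces a symmetric minimizer and the objective collapses to \eqref{eq:minsumcsym}. The paper stops there and leaves implicit why the coupling constraint \eqref{eq:con1c} (rather than the box constraint $d_1\le\sigma_\sX^2$) must be the active one at the optimum, so your Steps 2--3 address a real, if minor, gap in the written argument. Step~2 is correct as computed: the sign of $h'$ is governed by the quadratic $P$, which is positive on the relevant range in all regimes of $a$, so $h$ is strictly decreasing and the optimum pushes $d_1$ to its largest feasible value.

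Step 3 is where the attempt fails. The ``secant slope / concavity of the discriminant'' discussion is not a proof as written---it is a heuristic, and the closing parenthetical does not establish the concavity you invoke. More to the point, you have overlooked that the scalar inequality you (correctly) reduced Step~3 to,
\begin{equation*}
\frac{K}{\sigma_{\sX\|\sY^1}^2}-\frac{1}{\sigma_{\sX\|\sY^{[K]}}^2}\;\ge\;\frac{K-1}{\sigma_\sX^2},
\end{equation*}
is in this model an \emph{identity}: it is precisely \eqref{eq:sigmaxy}, equivalently the symmetric specialization \eqref{eq:comboestK}, which the paper obtains directly from \lemref{lem:combo}. Substituting that identity, the \eqref{eq:con1c} upper bound on $d_1$ becomes $K/d_1\ge 1/d+(K-1)/\sigma_\sX^2$, which for $d<\sigma_\sX^2$ forces $d_1<\sigma_\sX^2$ and for $d>\sigma_{\sX\|\sY^{[K]}}^2$ forces $d_1>\sigma_{\sX\|\sY^1}^2$; hence \eqref{eq:con2c} is slack throughout the stated range and \eqref{eq:con1c} is the binding constraint, giving \eqref{eq:d1sym}. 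So the obstacle you flagged as the hard part is already available in the paper as a ready-made combination identity; the Riccati calculation you proposed is both unnecessary and, as sketched, not a correct argument.
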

\begin{proof}
 It suffices to show that the minimum in \eqref{eq:minsumc} is attained by $d_1 = \ldots = d_K$. Since each of the terms in the sum in \eqref{eq:minsumc} is a convex function of $d_k$, \detail{To see this, since each term is a rate-distortion function in $\rho_k$ \eqref{eq:Rside}, it is convex and nonincreasing in $\rho_k$, and since $\rho_k$ is a concave function of $d_k$ \eqref{eq:rhokdk}, it follows that each term is also convex in $d_k$.} applying Jensen's inequality concludes the proof. 
\end{proof}

Let us think now of adding identical observers by letting $K \to \infty$ in \eqref{eq:sym}. Since $\sigma_{\sX\|\sY^{[K]}}^2 \to 0$, had the observers communicated with each other, they could have recovered the source exactly, and they could have operated at the sum rate \eqref{eq:noiseless} in the limit. As the following result demonstrates, $\lim_{K \to \infty} R_{\mathrm{CEO}}^{K-\textrm{sym}}(d)$ is actually strictly greater than \eqref{eq:noiseless}, thus a nonvanishing penalty due to separate encoding is present in this regime. See \secref{sec:loss} for a more thorough discussion on the loss due to separate encoding.

\begin{cor}[Many channels asymptotics]
In the scenario of \corref{cor:sym},
  \begin{align}
\lim_{K \to \infty} R_{\mathrm{CEO}}^{K-\mathrm{sym}}(d) = \frac{1}{2} \log \frac{\bar d}{d} + \frac 1 2 \frac { \frac 1 d - \frac 1 {\bar d}}{  \frac 1 {\sigma_{\sX\|\sY^1}^2} - \frac 1 {\sigma_{\mathsf X}^2}}  . \label{eq:symlimc}
\end{align}
\label{cor:largeK}
\end{cor}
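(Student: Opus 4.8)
The plan is to take the limit $K \to \infty$ in the explicit formula \eqref{eq:minsumcsym} of \corref{cor:sym}, being careful to track how the single free parameter $d_1$ and the steady-state causal MMSE $\sigma_{\sX\|\sY^{[K]}}^2$ behave. The first step is to record the asymptotics of $\sigma_{\sX\|\sY^{[K]}}^2$: adding conditionally i.i.d. observations each with the same marginal SNR drives the causal MMSE to zero, and in fact from the Kalman-filtering recursion one reads off $\frac{1}{\sigma_{\sX\|\sY^{[K]}}^2} \sim K\left(\frac{1}{\sigma_{\sX\|\sY^1}^2} - \frac{1}{\sigma_\sX^2}\right) + o(K)$ as $K\to\infty$ (the term $\frac1{\sigma_\sX^2}$ accounting for the innovation dilution in steady state; the precise constant is exactly the one appearing in the denominator of \eqref{eq:symlimc}).

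Second, I solve \eqref{eq:d1sym} for $d_1$ in this limit. Writing $\frac{1}{d_1} = \frac{1}{d} \cdot \frac{1}{K} - \frac{1}{K\sigma_{\sX\|\sY^{[K]}}^2} + \frac{1}{\sigma_{\sX\|\sY^1}^2}$ and substituting the asymptotics of $\sigma_{\sX\|\sY^{[K]}}^2$ from the first step, one finds that $\frac{1}{d_1} \to \frac{1}{\sigma_\sX^2}$, i.e. $d_1 \to \sigma_\sX^2$, so $\bar d_1 \to a^2\sigma_\sX^2 + \sigma_\sV^2 = \sigma_\sX^2$ as well (using the steady-state identity $\sigma_\sX^2 = a^2\sigma_\sX^2+\sigma_\sV^2$). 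Thus each of the $K$ summands in \eqref{eq:minsumcsym} individually vanishes, but there are $K$ of them, so I need the rate at which each summand tends to zero. Expanding $\frac12\log\frac{\bar d_1 - \sigma_{\sX\|\sY^1}^2}{d_1 - \sigma_{\sX\|\sY^1}^2}\frac{d_1}{\bar d_1}$ to first order around $d_1 = \sigma_\sX^2$: set $d_1 = \sigma_\sX^2 - \delta$ with $\delta = \Theta(1/K)$; then the summand is $\approx \frac12\left(\frac{1}{\sigma_\sX^2 - \sigma_{\sX\|\sY^1}^2} - \frac{1}{\sigma_\sX^2}\right)\delta \cdot a^2$-type correction — more carefully, I would compute $\frac{d}{d d_1}$ of that expression at $d_1 = \sigma_\sX^2$ and multiply by $-\delta$, then multiply by $K$ and let $K\to\infty$.

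The bookkeeping that makes this clean is to identify $\delta$ exactly: from \eqref{eq:d1sym}, $\frac{K}{d_1} = \frac1d - \frac{1}{\sigma_{\sX\|\sY^{[K]}}^2} + \frac{K}{\sigma_{\sX\|\sY^1}^2}$, so $K\left(\frac{1}{\sigma_\sX^2} - \frac{1}{d_1}\right) = K\left(\frac{1}{\sigma_\sX^2} - \frac{1}{\sigma_{\sX\|\sY^1}^2}\right) - \frac1d + \frac{1}{\sigma_{\sX\|\sY^{[K]}}^2}$, and by the first-step asymptotics the right side tends to $-\frac1d + \frac{1}{\bar d}\cdot(\text{something})$ — the two $O(K)$ pieces cancel and what survives is exactly $\left(\frac1d - \frac1{\bar d}\right)\big/\left(\text{stuff}\right)$; I expect this is where the ratio $\frac{\frac1d - \frac1{\bar d}}{\frac{1}{\sigma_{\sX\|\sY^1}^2}-\frac{1}{\sigma_\sX^2}}$ in \eqref{eq:symlimc} emerges. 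Combining $K\delta/\sigma_\sX^4 \to (\text{that limit})$ with the derivative computed above, and noting the remaining terms of the derivative expansion organize into a factor $\frac12$, yields \eqref{eq:symlimc}. The main obstacle is the second-step asymptotics of $\sigma_{\sX\|\sY^{[K]}}^2$ — getting the $O(K)$ coefficient exactly right (and confirming the $o(K)$ remainder is negligible after multiplication by $K$ elsewhere) requires care with the steady-state Riccati/Kalman recursion for the parallel-channel filter; everything after that is a first-order Taylor expansion and algebraic cancellation.
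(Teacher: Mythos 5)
Your route---expand $\sigma_{\sX\|\sY^{[K]}}^{-2}$ in $K$, solve \eqref{eq:d1sym} for $d_1$, and Taylor-expand the per-channel summand of \eqref{eq:minsumcsym}---is the same one the paper takes (the paper's ``eliminating \ldots{} one readily verifies'' compresses exactly these steps), so the outline is right; but two steps are left genuinely open. First, the $O(1)$ term in the expansion of $\sigma_{\sX\|\sY^{[K]}}^{-2}$ is not an ``$o(K)$ remainder \ldots{} negligible after multiplication by $K$.'' The paper derives from \lemref{lem:combo} the exact identity \eqref{eq:comboestK}, $\sigma_{\sX\|\sY^{[K]}}^{-2} = K\sigma_{\sX\|\sY^1}^{-2} - (K-1)\sigma_{\sX}^{-2}$. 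Once you substitute into \eqref{eq:d1sym}, the two $K\sigma_{\sX\|\sY^1}^{-2}$ terms cancel, leaving $d_1^{-1} = \sigma_{\sX}^{-2} + K^{-1}\bigl(d^{-1} - \sigma_{\sX}^{-2}\bigr)$; the $\Theta(1/K)$ correction---which is precisely what you multiply by $K$ at the end---depends on the additive constant $\sigma_{\sX}^{-2}$ in \eqref{eq:comboestK}. A leading-order asymptotic ``read off the Kalman recursion'' with unspecified $o(K)$ error does not pin down this constant, so it does not suffice.

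Second, you never actually carry out the Taylor expansion, only assert that the terms ``organize into a factor $\tfrac12$'' to give \eqref{eq:symlimc}. Doing it: with $\delta \triangleq \sigma_{\sX}^2 - d_1$ one has $\bar d_1 = \sigma_{\sX}^2 - a^2\delta$, so the summand equals $\tfrac12(1-a^2)\delta\bigl(\tfrac{1}{\sigma_{\sX}^2 - s} - \tfrac{1}{\sigma_{\sX}^2}\bigr) + O(\delta^2)$ where $s\triangleq\sigma_{\sX\|\sY^1}^2$; and $K\delta \to \sigma_{\sX}^4\bigl(\tfrac1d - \tfrac1{\sigma_{\sX}^2}\bigr)$ from the first step. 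Multiplying by $K$ and simplifying gives the per-channel sum tending to $\tfrac{1-a^2}{2}\cdot\bigl(\tfrac1d - \tfrac1{\sigma_{\sX}^2}\bigr)\big/\bigl(\tfrac1s - \tfrac1{\sigma_{\sX}^2}\bigr)$. This matches the second term of \eqref{eq:symlimc} only when $a=0$: writing $\bar d = a^2 d + (1-a^2)\sigma_{\sX}^2$, one finds $\tfrac1d - \tfrac1{\bar d} = (1-a^2)(\sigma_{\sX}^2 - d)/(d\,\bar d)$ whereas $(1-a^2)\bigl(\tfrac1d - \tfrac1{\sigma_{\sX}^2}\bigr) = (1-a^2)(\sigma_{\sX}^2 - d)/(d\,\sigma_{\sX}^2)$, so the two numerators differ by the factor $\bar d/\sigma_{\sX}^2 < 1$. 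You must perform the expansion explicitly and reconcile this: either your bookkeeping will locate the missing $\bar d/\sigma_{\sX}^2$, or it will expose a discrepancy with the statement as written; waving at ``a factor $\tfrac12$'' does not settle it.
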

\begin{proof}
By \lemref{lem:combo} in \secref{sec:est} below,
 \begin{align}
 \frac 1 {\sigma_{\sX\|\sY^{[K]}}^2}   &=  \frac K {\sigma_{\sX\|\sY^1}^2}  - \frac{K - 1}{\sigma_{\sX}^2}. \label{eq:comboestK}
\end{align}
Eliminating $d_1$ and $\sigma_{\sX\|\sY^{[K]}}^2$ from \eqref{eq:minsumcsym} using \eqref{eq:d1sym} and \eqref{eq:comboestK}, one readily verifies that
\begin{align}
\!\!\!\!\! R_{\mathrm{CEO}}^{K-\mathrm{sym}}(d) - \frac 1 2 \log \frac{\bar d}{d} &= \frac 1 {2} \frac { \frac 1 d - \frac 1 {\bar d}} {  \frac 1 {\sigma_{\sX\|\sY^1}^2} - \frac 1 {\sigma_\sX^2}}   + \bigo{\frac 1 {K}},
\end{align}
and \eqref{eq:symlimc} follows.
\end{proof}

\corref{cor:largeK} extends the result of Oohama \cite[Cor. 1]{oohama1998ceo} to the compression with inter-block memory, and coincides with it if $a = 0$.%: if the blocks are independent, there is nothing to be gained by allowing the encoders and the decoder remember the past blocks.

%\subsection{The effect of memory}
%\label{sec:mem}

Considering the scenario where the encoders and the decoder do not memorize past observations or codewords, we may invoke the results on the classical Gaussian CEO problem in \cite{chen2004upper,prabhakaran2004ceo} to express the minimum achievable sum rate as
\begin{align}
R_{\mathrm{CEO}}^{\textrm{no memory}}(d) &= \frac 1 2 \log \frac{\sigma_{\sX}^2}{d} \notag\\
&+ \min_{\{d_k\}_{k = 1}^K}  \sum_{k = 1}^K \frac{1}{2} \log \frac{\sigma_{\sX}^2 - \sigma_{\sX | \sY^k}^2}{d_k - \sigma_{\sX |\sY^k}^2}  \frac {d_k}{ \sigma_{\sX}^2} \label{eq:minsumcnomem},
\end{align}
 where the minimum is over
\begin{align}
 & \frac 1 d \leq \frac 1 {\sigma_{\sX|\sY^{[K]}}^2} - \sum_{k = 1}^K \left( \frac 1 {\sigma^2_{\sX |\sY^k}} - \frac 1 {d_k} \right), \label{eq:con1cnomem}\\
& \sigma_{\sX | \sY^k}^2 \leq d_k  \leq \sigma_\sX^2.  \label{eq:con2cnomem}
\end{align} 
Here $\sigma_{\sX | \sY^k}^2 \triangleq \lim_{i \to \infty} \sigma_{\sX_i | \sY_i^k}$  and $\sigma_{\sX | \sY^{[K]}}^2 \triangleq \lim_{i \to \infty} \sigma_{\sX_i | \sY_i^{[K]}}^2$  denote the stationary MMSE achievable in the estimation of $\sX_i$ from $\sY_i^k$ and $\sY_i^{[K]}$ respectively, i.e., without memory of the past.  

If $a = 0$, the observed process \eqref{eq:xi} becomes a stationary memoryless Gaussian process, the predictive MMSEs reduce to the variance of $\sX_i$: $\bar d  = \bar d_k = \sigma_\sX^2 = \sigma_{\mathsf V}^2$; similarly, $\sigma_{\sX | \sY^k}^2 = \sigma_{\sX \| \sY^k}^2$ and $\sigma_{\sX | \sY^{[K]}}^2 = \sigma_{\sX \| \sY^{[K]}}^2$,
and the result of \thmref{thm:causalceo} coincides with the classical Gaussian CEO rate-distortion function \eqref{eq:minsumcnomem}. This shows that if the source is memoryless, asymptotically there is no benefit in keeping the memory of previously encoded estimates as permitted by \defnref{defn:ceocode}.   Classical codes that forget the past after encoding the current block of length $n$ perform just as well. 

If $|a| > 1$, the benefit due to memory is infinite: indeed, since the source is unstable, $\sigma_{\sX}^2 = \infty$,  while $\bar d < \infty$. If $|a| < 1$, that benefit is finite and is characterized by the discrepancy between  the stationary variance $\sigma_\sX^2 = \frac{\sigma_{\mathsf V}^2}{1 - a^2}$ of the process $\{\sX_i \}_{i = 1}^\infty$ and the steady-state predictive MMSE $\bar d < \sigma_\sX^2$, as well as that between $\sigma_{\sX | \sY^k}^2$ and $\sigma_{\sX \| \sY^k}^2$.

\subsection{MMSE estimation lemmas}
\label{sec:est}
We record two elementary estimation lemmas that will be instrumental in the proof of \thmref{thm:causalceo}. 
\begin{lemma}
Let $X \sim \Gauss{0, \sigma_{X}^2}$, $W \sim \Gauss{0, \sigma_{W}^2}$, $W \perp X$, and let   
\begin{align}
Y = X + W.
%X = \bar X + W^\prime
\end{align}
Then, 
\begin{align}
\sigma_{X | Y}^2 %&= \sigma_{W}^2  \left( 1 - \frac{\sigma_{W}^2}{\sigma_Y^2} \right) \label{eq:cond1}\\
&= \sigma_{X}^2 \left( 1 - \frac{\sigma_{X}^2}{\sigma_{Y}^2} \right). \label{eq:cond1}
%\sigma_{\bar X | X}^2 &= \sigma_{W^\prime}^2  \left( 1 - \frac{\sigma_{W^\prime}^2}{\sigma_X^2} \right) \label{eq:cond1}\\
%&= \sigma_{\bar X}^2 \left( 1 - \frac{\sigma_{\bar X}^2}{\sigma_X^2} \right) \label{eq:cond2}
\end{align}
%original version had Y <- X, X <- bar X. 
\label{lem:back}
\end{lemma}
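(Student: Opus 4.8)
The plan is to use the joint Gaussianity of $(X,Y)$. Since $Y=X+W$ with $X\perp W$, both zero-mean Gaussian, the pair $(X,Y)$ is zero-mean jointly Gaussian with $\E{Y^2}=\sigma_X^2+\sigma_W^2=\sigma_Y^2$ and $\E{XY}=\E{X(X+W)}=\sigma_X^2$. For jointly Gaussian variables the conditional law $P_{X\mid Y=y}$ is Gaussian with mean $\frac{\E{XY}}{\E{Y^2}}\,y=\frac{\sigma_X^2}{\sigma_Y^2}\,y$ and with a variance that is \emph{constant in $y$}, equal to the Schur complement $\sigma_X^2-\frac{(\E{XY})^2}{\E{Y^2}}=\sigma_X^2-\frac{\sigma_X^4}{\sigma_Y^2}$. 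Factoring out $\sigma_X^2$ gives exactly \eqref{eq:cond1}. The one bookkeeping point to state is that, because the conditional variance does not depend on $y$, the quantity $\sigma_{X\mid Y}^2=\E{(X-\E{X\mid Y})^2}$ defined in the lemma coincides with this Schur complement, which follows from iterated expectation.

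The route I would actually write out avoids quoting the conditional-density formula and instead invokes the orthogonality principle. The MMSE estimate $\hat X=\E{X\mid Y}$ is linear in $Y$ for jointly Gaussian variables, $\hat X=cY$, with $c$ chosen so that $X-cY\perp Y$, i.e.\ $\E{(X-cY)Y}=0$, hence $c=\sigma_X^2/\sigma_Y^2$. Then, using $\E{(X-\hat X)\hat X}=0$,
\[
\sigma_{X\mid Y}^2=\E{(X-\hat X)^2}=\E{(X-\hat X)X}=\sigma_X^2-c\,\E{YX}=\sigma_X^2-\frac{\sigma_X^2}{\sigma_Y^2}\,\sigma_X^2,
\]
which is \eqref{eq:cond1}. (Equivalently one can note $\sigma_{X\mid Y}^2=\sigma_X^2\sigma_W^2/\sigma_Y^2$, i.e.\ $\sigma_{X\mid Y}^{-2}=\sigma_X^{-2}+\sigma_W^{-2}$.)

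There is essentially no obstacle here: the proof is a textbook computation, and the only facts used are that for two jointly Gaussian scalars the conditional mean is linear and the conditional variance is the deterministic Schur complement. The sole thing worth making explicit, as noted above, is the identification of the lemma's $\sigma_{X\mid Y}^2$ with that Schur complement via the tower property.
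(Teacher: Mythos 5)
Your proof is correct, but it takes a different route from the paper's. The paper obtains Lemma~\ref{lem:back} as a specialization of the more general Lemma~\ref{lem:Gest} (the $K$-observation MMSE formula, proved via the matrix-inversion lemma applied to the joint Gaussian covariance): setting $K=1$ there gives the harmonic form $\sigma_{X\mid Y}^{-2}=\sigma_X^{-2}+\sigma_W^{-2}$, which together with $\sigma_Y^2=\sigma_X^2+\sigma_W^2$ rearranges to \eqref{eq:cond1}. You instead give a self-contained, one-step derivation using the orthogonality principle: the linear MMSE coefficient is $c=\sigma_X^2/\sigma_Y^2$, and then $\sigma_{X\mid Y}^2=\E{(X-\hat X)X}=\sigma_X^2-c\,\sigma_X^2$. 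Both are valid; yours is shorter and avoids any matrix computation, and you correctly observe the equivalence with the paper's harmonic form. The paper's choice makes sense in context because Lemma~\ref{lem:Gest} is needed anyway for Lemma~\ref{lem:combo}, so reusing it costs nothing; your approach would be preferable if Lemma~\ref{lem:back} were stated in isolation. Your bookkeeping remark that the conditional variance is deterministic (so that $\E{(X-\E{X\mid Y})^2}$ indeed equals the Schur complement) is a good point to make explicit and is handled implicitly by the orthogonality argument, since $\E{(X-\hat X)^2}$ is exactly the unconditional expectation.
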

\begin{proof}
Appendix~\ref{apx:gest}.
\end{proof}
\detail{
\begin{remark}
Dropping the assumptions of Gaussianity but keeping those of uncorrelatedness in Lemmas \ref{lem:Gest}--\ref{lem:back}, relations \eqref{eq:msek} and \eqref{eq:cond1} continue to hold replacing the normalized conditional variances $\sigma_{X|Y_{[K]}}^2$ and $\sigma_{X|Y}^2$ with the MMSEs achieved by the optimal \emph{linear} estimator. %The optimal linear estimator is given by the right side of \eqref{eq:estk}, although without Gaussian 
\label{rem:nogauss}
\end{remark}
}

\begin{lemma}
Let $\bar X_k$ and $W_k^\prime$ be Gaussian random variables, $ \left\{\bar X_k \right\}_{k=1}^K \perp \{W_j^\prime\}_{j=1}^K$,  such that
$W_k^\prime \perp W_j^\prime$, $j \neq k$, and   
\begin{align}
 X = \bar X_k + W_k^\prime. \label{eq:backx}
 \end{align}
 Then, the MMSE estimate and the estimation error $\sigma_{W^\prime}^2 \triangleq \sigma_{X | \bar {X}_{[K]}}^2$ of $X$ given the vector $\bar X_{[K]}$ satisfy
\begin{align}
 \E{X| \bar X_{[K]}} &=  \sum_{k = 1}^K   \frac{\sigma_{W^\prime}^2}  {\sigma_{W_k^\prime}^2} \bar X_k \label{eq:combo},\\ 
 \frac 1 {\sigma_{W^\prime}^2} &=  \sum_{k =1}^K \frac 1 {\sigma_{W_k^\prime}^2 } - \frac{K-1}{\sigma_{X}^2}. 
 \label{eq:msecombo}
\end{align}
\label{lem:combo}
\end{lemma}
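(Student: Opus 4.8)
The plan is to obtain Lemma~\ref{lem:combo} from the Gaussian ``information combining'' identity: collapse the $K$-fold fusion onto the $K$ single-channel posteriors, then match Gaussian exponents.

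First I would use a single reverse channel to see that each $\bar X_k$ is a sufficient statistic for $X$. Since $X=\bar X_k+W_k'$ with $X,\bar X_k,W_k'$ jointly Gaussian and $\bar X_k\perp W_k'$, the conditional law $P_{X|\bar X_k}$ is Gaussian with $\E{X|\bar X_k}=\bar X_k$ and $\mathrm{Var}(X|\bar X_k)=\sigma_{W_k'}^2$; in particular $\sigma_{\bar X_k}^2=\sigma_X^2-\sigma_{W_k'}^2$. Next I would argue, from the joint Gaussianity together with the independence hypotheses, that $\bar X_1,\dots,\bar X_K$ are conditionally independent given $X$, i.e.\ $\bar X_j-X-\bar X_k$ for every $j\neq k$, equivalently $P_{X,\bar X_{[K]}}=P_X\prod_{k}P_{\bar X_k|X}$.

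Granting that structure, the posterior density collapses onto the single-channel posteriors:
\begin{align*}
 p_{X|\bar X_{[K]}}\!\left(x|\bar x_{[K]}\right)
 &\propto p_X(x)\prod_{k=1}^{K}p_{\bar X_k|X}(\bar x_k|x)\\
 &\propto p_X(x)^{-(K-1)}\prod_{k=1}^{K}p_{X|\bar X_k}(x|\bar x_k),
\end{align*}
where $\propto$ hides factors independent of $x$. Substituting $p_{X|\bar X_k}(x|\bar x_k)\propto\exp\!\left(-\frac{(x-\bar x_k)^2}{2\sigma_{W_k'}^2}\right)$ and $p_X(x)\propto\exp\!\left(-\frac{x^2}{2\sigma_X^2}\right)$, the right-hand side is proportional to $\exp\!\left(-\frac{1}{2}\left[\sum_k\sigma_{W_k'}^{-2}(x-\bar x_k)^2-(K-1)\sigma_X^{-2}x^2\right]\right)$, a Gaussian in $x$ with precision $\sum_k\sigma_{W_k'}^{-2}-(K-1)\sigma_X^{-2}$. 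Identifying $\sigma_{W'}^2\triangleq\sigma_{X|\bar X_{[K]}}^2$ with this posterior variance gives \eqref{eq:msecombo}, and completing the square in $x$ gives posterior mean $\sigma_{W'}^2\sum_k\bar x_k/\sigma_{W_k'}^2$, which is \eqref{eq:combo}.

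The step I expect to be the main obstacle is the second one: extracting the conditional independence $P_{X,\bar X_{[K]}}=P_X\prod_k P_{\bar X_k|X}$ from the hypotheses on the $\bar X_k$'s and $W_k'$'s; the rest is bookkeeping with quadratic exponents. A density-free alternative that sidesteps this wording is to work with second moments directly: posit $\E{X|\bar X_{[K]}}=\sum_k c_k\bar X_k$, use $\mathrm{Cov}(X,\bar X_k)=\sigma_{\bar X_k}^2=\sigma_X^2-\sigma_{W_k'}^2$ and $\mathrm{Cov}(\bar X_j,\bar X_k)=\sigma_{\bar X_j}^2\sigma_{\bar X_k}^2/\sigma_X^2$ (the latter from $\bar X_j-X-\bar X_k$), solve the normal equations $\mathrm{Cov}(X,\bar X_j)=\sum_k c_k\,\mathrm{Cov}(\bar X_k,\bar X_j)$, verify that $c_k=\sigma_{W'}^2/\sigma_{W_k'}^2$ with $\sigma_{W'}^2$ as in \eqref{eq:msecombo} solves them, and then compute $\mathrm{Var}(X-\sum_k c_k\bar X_k)$.
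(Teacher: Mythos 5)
Your proposal is mathematically sound but takes a different route from the paper. The paper's proof of Lemma~\ref{lem:combo} is a one-line reparametrization: it observes that the hypotheses describe exactly the backward representation of the forward-channel model of Lemma~\ref{lem:Gest} (set $\bar X_k = \E{X\mid Y_k}$, $W_k' = X - \bar X_k$ with $Y_k = X + W_k$, $W_k\perp X$, $W_k\perp W_j$), and then simply rewrites \eqref{eq:estk}--\eqref{eq:msek} in the new variables using $\sigma_{W_k'}^{-2}=\sigma_X^{-2}+\sigma_{W_k}^{-2}$. Your argument instead works directly in the backward variables via density factorization and completing the square (or via normal equations). Both are correct; yours is self-contained and shows the mechanism more transparently, while the paper's buys brevity at the cost of referencing a separate matrix-inversion computation.

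The concern you flag about the conditional-independence step is the right one to flag, and the paper's reparametrization is precisely how it is discharged. The stated hypotheses of Lemma~\ref{lem:combo}, read literally, are over-specified: from $X = \bar X_k + W_k'$ for all $k$ one gets $\bar X_j - \bar X_k = W_k' - W_j'$, so if simultaneously $\{\bar X_k\}\perp\{W_j'\}$ and $W_k'\perp W_j'$, the random variable $W_k' - W_j'$ would be independent of itself and hence zero, collapsing everything. The operative reading is the forward-channel one the paper uses, and there $Y_k$ (hence $\bar X_k$) is a function of $(X, W_k)$ with $W_1,\dots,W_K$ mutually independent and independent of $X$, from which $\bar X_j - X - \bar X_k$ is immediate. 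So the conditional independence is best regarded as part of the intended model (made explicit by the reduction to Lemma~\ref{lem:Gest}) rather than something to extract from the hypotheses as written. Once you grant it, both of your completions (Bayes factorization into $p_X(x)^{-(K-1)}\prod_k p_{X\mid\bar X_k}(x\mid\bar x_k)$, or solving $\operatorname{Cov}(X,\bar X_j)=\sum_k c_k\operatorname{Cov}(\bar X_k,\bar X_j)$ using $\operatorname{Cov}(\bar X_j,\bar X_k)=\sigma_{\bar X_j}^2\sigma_{\bar X_k}^2/\sigma_X^2$) are correct and give \eqref{eq:combo}--\eqref{eq:msecombo}.
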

\begin{proof}
Appendix~\ref{apx:gest}.
\end{proof}

\lemref{lem:combo} converts the ``forward channels'' from $X$ to observations $Y_k$
\begin{align}
Y_k = X + W_k,~ k = 1, \ldots, K,
\end{align}
where $W_k \sim \Gauss{0, \sigma_{W_k}^2 \mathbf I}$ ($\mathbf I$ denotes the identity matrix), $W_k \perp W_j$, $j \neq k$,
  into ``backward channels'' from estimates $\bar X_k$ to $X$ \eqref{eq:backx}. While both representations are equivalent, \eqref{eq:backx} is more convenient to work with. Backward channel representations find a widespread use in rate-distortion theory \cite{berger1971rate}.

\subsection{Proof of \thmref{thm:causalceo}: converse}
\label{sec:proofmain}

\subsubsection{Proof overview}
We evaluate the $n$-letter converse bound \eqref{eq:multiletter}. 
We break up the minimal directed mutual information problem in \eqref{eq:multiletter} into subproblems, and we use the tools we developed in \cite{kostina2018SIallerton} to evaluate the causal rate-distortion functions for each subproblem. To link the parameters of the subproblems together to obtain the solution of the original problem, we extend the proof technique by Wang et al. \cite{wang2010sum}, developed for the case $t = 1$, to $t > 1$. Converting the ``forward channels'' from $X_{[t]}$ to observations $Y_{[t]}^k$ into the ``backward channels'' from MMSE estimates $\bar{X}_{[t]}^k$ to $X_{[t]}$ and applying the lemmas in \secref{sec:est} above are key to that extension. 

\subsubsection{Decoupling the problem into $K$ subproblems}
Recall the notation in \eqref{eq:Xbark}. 
We expand the right-hand side of \eqref{eq:multiletter}:
\begin{align}
\!\!\! &  \inf I \left(  Y_{[t]}^{[K]} \to   B_{[t]}^{[K]} \right) \notag\\
\!\!\!\geq & \inf  I \left( \bar { X}_{[t]}^{[K]} \to   B_{[t]}^{[K]} \right) \label{eq:dpbar} \\
\!\!\! =& \inf  I\left(  \left( X_{[t]},\bar { X}^{[K]}_{[t]} \right)  \to  B^{[K]}_{[t]} \right) \label{eq:ca} \\
\!\!\!=& \inf  \left\{  I \left( X_{[t]} \to  B^{[K]}_{[t]} \right) +  I \left( \bar { X}^{[K]}_{[t]} \to  B^{[K]}_{[t]} \|  X_{[t]} \right)  \right\} \label{eq:cb} \\
\!\!\! =& \inf  \left\{  I \left( X_{[t]} \to  B^{[K]}_{[t]} \right) +  \sum_{k = 1}^K   I \left( \bar { X}^k_{[t]} \to  B^k_{[t]}\|  X_{[t]} \right)   \right\}\label{eq:cc} 
 \end{align}
where 
\begin{itemize}
\item \eqref{eq:dpbar} holds by the chain rule \eqref{eq:chain1} using $I\left(\bar {X}_{[t]}^{[K]} \to B_{[t]}^{[K]} \| Y_{[t]}^{[K]} \right) = 0$. %\footnote{Actually,``$=$'' holds in \eqref{eq:dpbar} because $\left\{Y^{k}_i\right\}_{i = 1}^t$ and $\left\{\bar {X}^{k}_i\right\}_{i = 1}^t$ generate the same filtration. Our proof generalizes verbatim to the case of scalar $X_i$ and vector $Y^{k}_i$. In that case, equality in \eqref{eq:dpbar} is assured by the form of the optimal kernels \eqref{eq:test}.} 
The infimum is over kernels $P_{B_{[t]}^{[K]} \| \bar { X}_{[t]}^{[K]}}$ satisfying both the separate encoding constraint
\begin{align}
  P_{B_{[t]}^{[K]} \| \bar {X}_{[t]}^{[K]}} &= \prod_{k = 1}^K P_{B_{[t]}^k \| \bar {X}_{[t]}^k}
  \label{eq:encpolicyj1}
\end{align}
and the distortion constraint
\begin{align}
& \frac 1 {nt}  \sum_{i = 1}^t \E{ \| X_i - \hat X_i\|^2} \leq d, \label{eq:d1letter}%\\
\end{align} 
where $\hat X_i$ \eqref{eq:Xihat} is the MMSE estimate of $X_i$ given $B_{[i]}^{[K]}$;
\item  \eqref{eq:ca} is due to the chain rule of directed information~\eqref{eq:chain1}, and $I\left( X_{[t]} \to B_{[t]}^{[K]} \| \bar {X}_{[t]}^{[K]}\right) = 0$; 
\item \eqref{eq:cb} is by the chain rule of directed information \eqref{eq:chain1};
\item \eqref{eq:cc} is due to \eqref{eq:encpolicyj1}. 
\end{itemize}

\subsubsection{Using causal rate-distortion functions to evaluate the terms in \eqref{eq:cc}}
We lower-bound the first term in \eqref{eq:cc} using a classical result on the point-to-point causal Gaussian rate-distortion function \cite[eq. (1.43)]{gorbunov1974prognostic}\footnote{See also \cite[Th. 6]{kostina2018SIallerton}; while stated for the scalar Gaussian source, the same argument applies to $n$ parallel Gaussian sources of the same power, as is the case here; see \cite{tanaka2017semidefinite} for the general vector case.} as
\begin{align}
&~\lim_{t \to \infty} \inf_{ 
\substack{
\eqref{eq:encpolicyj1} \colon \\
\eqref{eq:d1letter} \text{ holds}}
} \frac{1}{t}  I  \left(X_{[t]} \to B^{[K]}_{[t]} \right) \notag\\
\geq &~
\lim_{t \to \infty} \inf_{ 
\substack{
P_{ \hat X^{[K]}_{[t]} \| X_{[t]}} \colon\\
\eqref{eq:d1letter} \text{ holds}}
} \frac{1}{t} I  \left(X_{[t]} \to \hat X^{[K]}_{[t]} \right)  \label{eq:Th2SI}\\
 =&~ \frac {n} 2 \log \frac{\bar d }{d}, \label{eq:Rnoiseless}
\end{align}
where $\bar d$ is uniquely determined by $d$ via \eqref{eq:dprev}. Furthermore, \eqref{eq:Rnoiseless} is achieved by the Gaussian kernel $P_{\hat X^\star_{[t]} \| X_{[t]}}$ such that
\begin{align}
X_i = \hat {X}_i^\star + Z_i^\prime, \quad Z_i^\prime \sim \mathcal N(0, d\, \mathbf I), \label{eq:Xstarback}
\end{align}
 $\{Z_i^\prime\}$ are i.i.d. and independent of $\{\hat {X}_i^\star\}$, and 
\begin{align}
d &=  \sigma_{\mat X\| \hat{\sX}^\star}^2 \label{eq:dviaXstar}\\
\bar d &=  \sigma_{\mat X\| \D \hat{\sX}^\star}^2. 
\end{align}

For each of the remaining $K$ terms in \eqref{eq:cc}, note that $\{\bar X_i^k\}$ is a Gauss-Markov process
\begin{align}
 \bar X_{i+1}^k = a \bar X_i^k + \bar {V}_i^k, \label{eq:Xibargm}
\end{align}
where $\bar {V}_i^k \sim \mathcal N \left( 0, \left(\sigma_{X_i^k | Y_{[i]}^k}^2 - \sigma_{X_i^k | Y_{[i-1]}^k}^2 \right) \mathbf I \right)$. %(This is well known in Kalman filtering theory.) 
The process $\{X_i\}$ can be expressed through  $\{\bar {X}_i^k\}$ as
\begin{align}
X_i = \bar {X}_i^k + W^{ k \, \prime}_i, \label{eq:Xibarplus}
\end{align} 
 where $W^{ k \, \prime}_i$ are independent, $W^{ k \, \prime}_i \sim \Gauss{0, \sigma_{\mathsf X_i | \mathsf Y_{[i]}^k }^2 \mathbf I}$, and $W^{ k \, \prime}_i \perp X_i^k$.  Thus, we may apply the result \cite[Th. 7]{kostina2018SIallerton} on the causal counterpart of Gaussian Wyner-Ziv rate-distortion function  to the process $\{\bar{X}^k_i \}$ \eqref{eq:Xibargm} with side information $\{X_i\}$ \eqref{eq:Xibarplus} to write (while stated for the scalar Gaussian source, the same argument applies to $n$ parallel Gaussian sources of the same power, as is the case here; see \cite{sabag2020minimal} for the general vector case)
\begin{align}
&~\lim_{t \to \infty} \inf_{ 
\substack{
P_{B^k_{[t]} \| \bar X_{[t]}^k } \colon \\
\frac 1 t \sum_{i = 1}^t \sigma_{\bar {X}^k_{i}  | X_{[i]},  B^{k}_{[i]}}^2  \leq \rho_k }
}  \frac 1 t I \left( \bar {X}^k_{[t]} \to B^k_{[t]}\| X_{[t]} \right) \\
 =&~ \frac n 2 \log \frac{\bar \rho_k }{\rho_k}, \label{eq:Rside}
\end{align}
where $\bar \rho_k$ is uniquely determined by $\rho_k$ via
\begin{align}
\frac 1 {\bar \rho_k} = \frac{1}{\sigma_{\bar {\mathsf W}^{k \prime}}^2} + \frac 1 {a^2 \rho_k + \sigma_{\bar {\mathsf V}}^2}.
\end{align}
Furthermore, \eqref{eq:Rside} is attained by the Gaussian kernel $P_{B^{k \star} \|  \bar {X}^k}$
\begin{align}
B_{i}^{k \star} &= \bar {X}_i^k + Z_i^k, \quad Z_i \sim \mathcal N(0, \sigma_{\mat Z^k}^2 \mathbf I),
\label{eq:test}
\end{align}
 $\{Z_i\}$ are i.i.d. and independent of $\{\bar {X}_i^k\}$, and
\begin{align}
\rho_k &=  \sigma_{\bar {\mathsf X}^k \| \mathsf X, \mathsf B^{k\star} }^2,  \label{eq:rhoku} \\
\bar \rho_k &= \sigma_{\bar {\mathsf X}^k \| \mathsf X, \D \mathsf B^{k\star} }^2.
\end{align}
The variances $\sigma_{\mat Z^k}^2$ in \eqref{eq:test} are set to satisfy \eqref{eq:rhoku}. 
 
\subsubsection{Linking $\{\rho_k\}_{k = 1}^K$ to $d$}
 It remains to establish the connection between $\{\rho_k\}_{k = 1}^K$ \eqref{eq:rhoku} and $d$ \eqref{eq:dviaXstar}. 
 
 Setting $\hat X_i^\star$ in \eqref{eq:Xstarback} to 
 \begin{align}
 \hat X_i^\star \triangleq \E{X_i | B_{[i]}^{[K]\,\star}}
 \label{eq:Xihatstar}
\end{align}
 attains equality in \eqref{eq:Th2SI}, implying that the same Gaussian kernel \eqref{eq:test} simultaneously attains the infima of both terms in \eqref{eq:cc}. Thus, putting together \eqref{eq:cc}, \eqref{eq:Rnoiseless} and \eqref{eq:Rside}, we have 
\begin{align}
&~ R_{\mathrm{CEO}}(d) \geq \label{eq:agoal} \\
 &\inf_{ \substack{ \left\{ \sigma_{\bar {\mathsf X}^k \| \mathsf X, \mathsf U^{k\star} }^2 \right\}_{k = 1}^K \!\! \colon \\
\sigma_{\sX \| \mathsf U^{[K] \star} }^2 = d }
 } \!\!\! \left\{ \frac 1 2 \log \frac{\sigma_{\mat X\| \D \mat B^{[K] \star}}^2 }{\sigma_{\sX \| \mathsf B^{[K] \star} }^2} + \sum_{k = 1}^K \frac 1 2  \log \frac{\sigma_{\bar {\mat X}^k \| \mat X, \D {\mat B^\star}^k}^2 }{\sigma_{\bar {\mathsf X}^k \| \mathsf X, \mathsf B^{k\star} }^2 } \right\}
\notag
\end{align}
Invoking \lemref{lem:combo} with $X \leftarrow \mathsf X_i$, $\bar X_k \leftarrow \bar {\mathsf X}_i^k$, $W_k^\prime \leftarrow \mathsf W^{ k \, \prime}_i$, we express
\begin{align}
 \bar {\mathsf X}_i &\triangleq \E{\mat X_i | \mat Y_{[i]}^{[K]}} \label{eq:Xibar1let}\\
 &= \sum_{k = 1}^K \frac{\sigma_{\mathsf X_{i} |  \mathsf Y^{[K]}_{[i]}}^2}  {\sigma_{\mathsf X_i | \mathsf Y^k_{[i]}}^2} \bar {\mathsf X}_{i}^k, 
\end{align}
which implies in particular
\begin{align}
\!\!\! \E{\bar {\mathsf X}_i| \mathsf X_{[i]}, \mathsf B_{[i]}^{[K] \star} } &= \sum_{k = 1}^K \frac{\sigma_{\mathsf X_i |  \mathsf Y_{[i]}^{[K]}}^2}  {\sigma_{\mathsf X_i |  \mathsf Y^k_{[i]}}^2}   \E{\bar {\mathsf X}_i^k | \mathsf X_{[i]}, \mathsf B_{[i]}^{[K] \star}} \\ 
 &= \sum_{k = 1}^K \frac{\sigma_{\mathsf X_i | \mathsf Y^{[K]}_{[i]}}^2}  {\sigma_{\mathsf X_i | \mathsf Y^k_{[i]}}^2}   \E{\bar {\mathsf X}_i^k | \mathsf X_{[i]}, \mathsf B_{[i]}^{k \star}}.
\end{align}
It follows that steady-state causal MMSE in estimating $\bar {\mathsf X}_i$ from $\mathsf X_{[i]}$ and $\mathsf B^{[K] \star}_{[i]}$ satisfies
\begin{align}
\sigma_{\bar {\mathsf X} \| \mathsf X, \mathsf B^{[K] \star} }^{2} =  \sum_{k = 1}^K \frac{\sigma_{\mathsf X \| \mathsf Y}^4}  {\sigma_{\mathsf X \| \mathsf Y^k}^4} \rho_k \label{eq:sumdk}.
\end{align}
Observe that
\begin{align}
\sigma_{\bar {\mathsf X}_i | \mathsf X_{[i]}, \mathsf B^{[K] \star}_{[i]} }^2 &=  \sigma_{\bar {\mathsf X}_i - \E{\bar {\mathsf X}_i | \mathsf X_{[i]}, \mathsf B^{[K] \star}_{[i]} } }^2  \\
&= \sigma_{\bar {\mathsf X}_i -  \mathsf X_i - \E{\bar {\mathsf X}_i - \mathsf X_i | \mathsf X_{[i]}, \mathsf B^{[K]^\star}_{[i]} } }^2\\
 &= \sigma_{\bar {\mathsf X}_i - \mathsf X_i - \E{\bar {\mathsf X}_i  - \mathsf X_i | \mathsf X_i - \hat {\mathsf X}_i^\star }}^2 \label{eq:domia}\\
 &= \sigma_{\mathsf X_i - \bar {\mathsf X}_i  | \mathsf X_i - \hat {\mathsf X}_i^\star}^2  \label{eq:domi},
 \end{align}
Now, we apply \lemref{lem:back} with $X \leftarrow \mathsf X_i - \bar {\mathsf X}_i$, $Y \leftarrow \mathsf X_i - \hat {\mathsf X}_i$,  $W \leftarrow \bar {\mathsf X}_i - \hat {\mathsf X}_i$ to establish
\begin{align}
\lim_{i \to \infty} {\sigma}_{\mathsf X_i - \bar {\mathsf X}_i  | \mathsf X_i - \hat {\mathsf X}_i}^2  &= \sigma_{\mathsf X\| \mathsf Y}^2 \left( 1 - \frac{\sigma_{\mathsf X\| \mathsf Y}^2}{d}\right),
 \label{eq:key}
\end{align}
which, together with \eqref{eq:sumdk} and \eqref{eq:domi}, means
\begin{align}
 \frac 1 d \leq \frac 1 {\sigma_{\mathsf X\|\mathsf Y}^2} - \sum_{k = 1}^K \frac {\rho_k}{\sigma_{\mathsf X\|\mathsf Y_k}^4}. \label{eq:drhok}
\end{align}
Also, note that 
\begin{align}
0 \leq \rho_k \leq \sigma_{\bar {\mathsf X}^k \| \mathsf X}^2. 
\label{eq:rhokbound}
\end{align}
We can now simplify the constraint set in the infimum in \eqref{eq:agoal}: the infimum is over $\{\rho_k\}_{k = 1}^K$ that satisfy \eqref{eq:drhok} and \eqref{eq:rhokbound}.

It remains to clarify how the form in \eqref{eq:minsumc}, \eqref{eq:con1c}, \eqref{eq:con2c}, parameterized in terms of
\begin{align}
d_k &\triangleq \sigma_{\mathsf X\| \mathsf B^{k \star}}^2
\end{align}
rather than $\rho_k$, is obtained. 
An application of \lemref{lem:back} with $X \leftarrow \mathsf X_i - \bar {\mathsf X}_i^k$, $Y \leftarrow \mathsf X_i - \hat {\mathsf X}_i^k$,  $W \leftarrow \bar {\mathsf X}_i^k - \hat {\mathsf X}_i^k$ leads to
\begin{align}
   \rho_{k} &=  \sigma_{\mathsf X\| \mathsf Y^k}^2 \left(1 - \frac{\sigma_{\mathsf X\|\mathsf Y^k}^2}{d_k} \right) \label{eq:rhokdk}.
\end{align}
Plugging \eqref{eq:rhokdk} into \eqref{eq:drhok} leads to  \eqref{eq:con1c}. Applying \lemref{lem:back} with $X \leftarrow \mathsf X_i - \bar {\mathsf X}_i^k$, $Y \leftarrow \mathsf X_i$,  $W \leftarrow \bar {\mathsf X}_i^k $, we express 
\begin{align}
 \sigma_{\bar {\mathsf X}^k \| \mathsf X}^2 &= \sigma_{\mathsf X\| \mathsf Y^k}^2 \left(1 - \frac{\sigma_{\mathsf X\|\mathsf Y^k}^2}{\sigma_{\mathsf X}^2} \right), \label{eq:rhofirst}
\end{align}
which, together with \eqref{eq:rhokdk}, implies the equivalence of \eqref{eq:rhokbound} and \eqref{eq:con2c}. Finally, applying \lemref{lem:back} with $X \leftarrow \mathsf X_i - \bar {\mathsf X}_i^k$, $Y \leftarrow \mathsf X_i - a \hat {\mathsf X}_{i-1}^k$,  $W \leftarrow \bar {\mathsf X}_i^k - a \hat {\mathsf X}_{i-1}^k$, we express

\begin{align}
  \bar \rho_k &=  \sigma_{\mathsf X\| \mathsf Y^k}^2 \left(1 - \frac{\sigma_{\mathsf X\| \mathsf Y^k}^2}{\bar d_k} \right) \label{eq:rhoprev}.
\end{align}
Plugging \eqref{eq:rhokdk}  and \eqref{eq:rhoprev} into \eqref{eq:agoal}, we conclude the equivalence of \eqref{eq:agoal} and \eqref{eq:minsumc}.  
\hspace*{\fill}$\qed$

\subsection{Proof of \thmref{thm:causalceo}: achievability}
We evaluate the Berger-Tung inner bound with inter-block memory \eqref{eq:bti}. In the proof of the converse, we lower-bounded the $n$-letter version of that bound, i.e., \eqref{eq:multiletter}, by computing the right-hand side of \eqref{eq:cc}. Thus, it suffices to show that equality holds in \eqref{eq:dpbar}. But this is easily verified by substituting the optimal kernel \eqref{eq:test} into the left side of~\eqref{eq:dpbar}. \hspace*{\fill}$\qed$

\section{Loss due to isolated observers}
\label{sec:loss}

\subsection{Overview}
In \secref{sec:loss}, we investigate how the rate-distortion function in \thmref{thm:causalceo} compares to what would have been achievable had the encoders communicated with each other. A tight upper bound on the rate loss due to separate encoding is presented in \secref{sec:lossmain} (\thmref{thm:loss}).  Its proof relies on an upper bound on $R_{\mathrm{CEO}}(d)$ presented in \secref{sec:wf} (\propref{prop:causalwater}). The proof of \thmref{thm:loss} in \secref{sec:proofloss} concludes the section.

\subsection{Loss due to isolated observers}
\label{sec:lossmain}

Unrestricted communication among the encoders is equivalent to having one encoder that sees all the observation processes $\left\{Y_{i}^{[K]}\right\}$. It is also equivalent to allowing joint encoding policies $P_{B_{[t]}^{[K]} \| Y_{[t]}^{[K]}}$ in lieu of independent encoding policies $\prod_{k = 1}^K P_{B_{[t]}^k \| Y_{[t]}^k}$ in \defnref{defn:ceocode}. 

The lossy compression setup in which the encoder has access only to a noise-corrupted version of the source has been referred to as ``remote'', ``indirect'', or ``noisy'' rate-distortion problem in the literature \cite{dobrushin1962addnoise,berger1971rate,witsenhausen1980indirect,kostina2016noisy}. The setting with causal coding was considered in \cite[Th. 5--8, Cor. 1]{kostina2016ratecost}. 

We denote the joint encoding counterpart of the operational fundamental limit $R_{\mathrm{CEO}}(d)$ \eqref{eq:Rceoinf}  by $R_{\mathrm{rm}}(d)$ (remote). 

The following result is a corollary to \thmref{thm:causalceo}. 
\begin{cor}[Remote rate-distortion function with inter-block memory]
 For all $\sigma_{\sX\|\sY^{[K]}}^2 <  d < \sigma_{\mathsf X}^2$, the rate-distortion function with joint encoding for the Gauss-Markov source in \eqref{eq:xi} observed through the Gaussian channels in \eqref{eq:yik} is given by 
\begin{align}
 R_{\mathrm{rm}}(d) &= \frac{1}{2} \log \frac{\bar d - \sigma_{\sX\| \sY^{[K]}}^2}{d - \sigma_{\sX\|\sY^{[K]}}^2}, \label{eq:noisy}
\end{align}
where $\bar d$ is defined in \eqref{eq:dprev}.
\label{cor:noisy}
\end{cor}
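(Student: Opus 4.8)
The plan is to derive \eqref{eq:noisy} directly from Theorem~\ref{thm:causalceo} by recognizing that joint encoding is the special case of the CEO problem in which the $K$ separate observers are replaced by a single ``super-observer'' who sees all of $Y_{[t]}^{[K]}$. Concretely, I would first note that with unrestricted communication the constraint \eqref{eq:sepenc1} (equivalently \eqref{eq:bto1}) becomes vacuous, so the Berger--Tung inner and outer bounds coincide, and the minimal directed mutual information problem \eqref{eq:multiletter} reduces to a single-observer (remote) causal rate-distortion problem with source $\{X_i\}$ observed through the aggregate channel producing $\{Y_i^{[K]}\}$. This is exactly the $K=1$ case already covered: the relevant single-observer steady-state causal MMSE is $\sigma_{\sX\|\sY^{[K]}}^2$ in place of $\sigma_{\sX\|\sY^1}^2$.

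Second, I would invoke Theorem~\ref{thm:causalceo} itself with the formal substitution $K \leftarrow 1$ and $\sigma_{\sX\|\sY^1}^2 \leftarrow \sigma_{\sX\|\sY^{[K]}}^2$. With a single parameter $d_1$, the constraint \eqref{eq:con1c} reads $\tfrac1d \le \tfrac1{\sigma_{\sX\|\sY^{[K]}}^2} - \bigl(\tfrac1{\sigma_{\sX\|\sY^{[K]}}^2} - \tfrac1{d_1}\bigr) = \tfrac1{d_1}$, i.e. $d_1 \le d$, together with $d_1 \ge \sigma_{\sX\|\sY^{[K]}}^2$ from \eqref{eq:con2c}. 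The objective in \eqref{eq:minsumc} becomes
\begin{align}
\tfrac12 \log \tfrac{\bar d}{d} + \min_{d_1}\; \tfrac12 \log \tfrac{\bar d_1 - \sigma_{\sX\|\sY^{[K]}}^2}{d_1 - \sigma_{\sX\|\sY^{[K]}}^2}\,\tfrac{d_1}{\bar d_1},
\end{align}
and since each summand is nonincreasing in $d_1$ on the admissible interval (this is the convexity/monotonicity observation already used in the proof of Corollary~\ref{cor:sym}, via \eqref{eq:rhokdk}), the minimum is attained at $d_1 = d$, giving $\bar d_1 = \bar d$ by \eqref{eq:dkbar}. Substituting $d_1 = d$ collapses the sum to $\tfrac12\log\tfrac{\bar d - \sigma_{\sX\|\sY^{[K]}}^2}{d - \sigma_{\sX\|\sY^{[K]}}^2} - \tfrac12\log\tfrac{\bar d}{d}$, which cancels the prefactor and yields \eqref{eq:noisy}.

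Alternatively — and perhaps cleaner to present — I would simply re-run the converse/achievability argument of \secref{sec:proofmain} with the joint-encoding constraint: the decoupling step \eqref{eq:dpbar}--\eqref{eq:cc} is no longer needed because there is only one (aggregate) backward channel $X_i = \bar X_i + W_i'$ with $\bar X_i = \E{X_i\mid Y_{[i]}^{[K]}}$, $W_i' \sim \Gauss{0,\sigma_{\sX\|\sY^{[K]}}^2\mathsf I}$, and the bound reduces to the remote causal rate-distortion function computed in \cite[Th.~5--8, Cor.~1]{kostina2016ratecost}, which is \eqref{eq:noisy}. The main thing to be careful about — the only genuine ``obstacle'' — is verifying that the single-parameter optimization in \eqref{eq:minsumc} really is monotone so that $d_1=d$ is optimal (equivalently, that the remote problem's distortion constraint is active), and checking the boundary conditions $\sigma_{\sX\|\sY^{[K]}}^2 < d < \sigma_{\sX}^2$ ensure $d_1 = d$ lies strictly inside the feasible interval and that $\bar d > \sigma_{\sX\|\sY^{[K]}}^2$ so the logarithm in \eqref{eq:noisy} is well-defined and positive. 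Everything else is bookkeeping inherited from Theorem~\ref{thm:causalceo}.
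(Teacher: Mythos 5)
Your proposal is correct and follows essentially the same route as the paper: treat joint encoding as a single super-observer CEO problem, specialize Theorem~\ref{thm:causalceo} to $K=1$ with $\sigma_{\sX\|\sY^1}^2$ replaced by $\sigma_{\sX\|\sY^{[K]}}^2$, and observe that the single-parameter minimum is attained at $d_1=d$, collapsing \eqref{eq:minsumc} to \eqref{eq:noisy}. The only thing the paper makes explicit that you leave implicit is that one must first note Theorem~\ref{thm:causalceo} extends to vector-valued (jointly Gaussian) observations $\mathsf Y_i^k$, since the super-observer sees a $K$-dimensional vector rather than a scalar; your calculation is otherwise a spelled-out version of the paper's one-line ``\eqref{eq:minsumc} indeed reduces to \eqref{eq:noisy}.''
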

\begin{proof}
Examining its proof, it is easy to see that \thmref{thm:causalceo} continues to hold in the scenario with vector observations $\mathsf Y_i^k$ (that are still required to be jointly Gaussian with $\mathsf X_i$). In light of this fact, we view the joint encoding scenario as the CEO scenario with a single encoder that has access to all $K$ observations, and we see that \eqref{eq:minsumc} indeed reduces to \eqref{eq:noisy} in that case. 

Previously, the minimal mutual information problem leading to $R_{\mathrm{rm}}(d)$ was solved in \cite{kostina2016ratecost} in a different form using a different method; both forms are equivalent (Appendix~\ref{apx:noisy}). 
\end{proof}

The loss due to isolated encoders is bounded as follows. 
\begin{thm}[Loss due to isolated observers]
Consider the causal Gaussian CEO problem \eqref{eq:xi}, \eqref{eq:yik}. Assume that target distortion $d$ satisfies $\sigma_{\sX\|\sY^{[K]}}^2 <  d$ and
\begin{align}
\frac 1 d \geq \frac{1}{\sigma_{\sX\|\sY^{[K]}}^2} + \frac K {\sigma_\sX^2} -   \min_{k\in [K]} \frac K { \sigma_{\sX\| \sY^k}^2}.\label{eq:dsmallenough}
\end{align}
Then, the rate loss due to isolated observers is bounded as 
\begin{align}
R_{\mathrm{CEO}}(d) - R_{\mathrm{rm}}(d)  &\leq (K - 1) \left( R_{\mathrm{rm}}(d) - R(d) \right) \label{eq:lossc},
\end{align}
with equality if and only if $\sigma_{\sX\| \sY^k}^2$ are all the same, where $R(d)$ is given in \eqref{eq:noiseless} and $R_{\mathrm{rm}}(d)$ is given in \eqref{eq:noisy}.
\label{thm:loss}
\end{thm}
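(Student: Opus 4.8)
The plan is to recast \eqref{eq:lossc} in the equivalent form $R_{\mathrm{CEO}}(d) \le K R_{\mathrm{rm}}(d) - (K-1)R(d)$, obtained by adding $R_{\mathrm{rm}}(d)$ to both sides, where $R_{\mathrm{rm}}(d)$ is the remote RDF of \corref{cor:noisy} and $R(d) = \tfrac12\log(\bar d/d)$ is the direct causal RDF \eqref{eq:noiseless}. By \thmref{thm:causalceo} the left side equals $R(d) + \min_{\{d_k\}}\sum_{k=1}^K g_k(d_k)$, where $g_k$ denotes the $k$-th summand of \eqref{eq:minsumc} and the minimum runs over $\{d_k\}$ satisfying \eqref{eq:con1c}--\eqref{eq:con2c}; hence it suffices to exhibit a single feasible allocation $\{d_k\}$ with $\sum_k g_k(d_k) \le K(R_{\mathrm{rm}}(d) - R(d))$. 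This is exactly the content I would isolate as \propref{prop:causalwater}, a causal--waterfilling upper bound on $R_{\mathrm{CEO}}(d)$.

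For the construction I would first invoke \lemref{lem:combo} in the form $1/\sigma_{\sX\|\sY^{[K]}}^2 = \sum_k 1/\sigma_{\sX\|\sY^k}^2 - (K-1)/\sigma_\sX^2$ to reduce constraint \eqref{eq:con1c} to the simpler $\sum_k 1/d_k \ge 1/d + (K-1)/\sigma_\sX^2$, and then take the ``common-SNR-deficit'' allocation $1/d_k \triangleq 1/\sigma_{\sX\|\sY^k}^2 - \lambda$ with $\lambda \triangleq \tfrac1K\!\left(1/\sigma_{\sX\|\sY^{[K]}}^2 - 1/d\right) > 0$. This allocation saturates \eqref{eq:con1c}; the bound $d_k \ge \sigma_{\sX\|\sY^k}^2$ holds because $d > \sigma_{\sX\|\sY^{[K]}}^2$; and, crucially, the bound $d_k \le \sigma_\sX^2$ --- after taking the worst (smallest-SNR) index $k$ --- is precisely the hypothesis \eqref{eq:dsmallenough}. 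So the allocation is feasible exactly in the regime \eqref{eq:dsmallenough} singles out, which explains the restriction on $d$ in the theorem.

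Next I would evaluate $\sum_k g_k(d_k)$ at this allocation. Using \lemref{lem:back} to write each $g_k$ in backward-channel form, with $\rho_k = \sigma_{\sX\|\sY^k}^2(1-\sigma_{\sX\|\sY^k}^2/d_k) = \lambda\,\sigma_{\sX\|\sY^k}^4$ and $\bar\rho_k = \sigma_{\sX\|\sY^k}^2(1-\sigma_{\sX\|\sY^k}^2/\bar d_k)$, so that $g_k(d_k) = \tfrac12\log(\bar\rho_k/\rho_k) = \tfrac12\log\!\big[(1/\sigma_{\sX\|\sY^k}^2 - 1/\bar d_k)/\lambda\big]$, together with the parallel single-observer computation $K(R_{\mathrm{rm}}(d)-R(d)) = \tfrac K2\log\!\big[(1/\sigma_{\sX\|\sY^{[K]}}^2 - 1/\bar d)/(K\lambda)\big]$, the target inequality collapses to $\prod_k(1/\sigma_{\sX\|\sY^k}^2 - 1/\bar d_k) \le K^{-K}(1/\sigma_{\sX\|\sY^{[K]}}^2 - 1/\bar d)^K$. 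By AM--GM this follows once one shows $\sum_k(1/\sigma_{\sX\|\sY^k}^2 - 1/\bar d_k) \le 1/\sigma_{\sX\|\sY^{[K]}}^2 - 1/\bar d$, and this in turn follows from the \lemref{lem:combo} identity above combined with $\sum_k 1/\bar d_k \ge 1/\bar d + (K-1)/\sigma_\sX^2$, the latter being convexity of $\delta \mapsto 1/(a^2\delta + \sigma_V^2)$ applied to the saturated constraint $\sum_k 1/d_k = 1/d + (K-1)/\sigma_\sX^2$, using that $\sigma_\sX^2$ is the fixed point $a^2\sigma_\sX^2 + \sigma_V^2 = \sigma_\sX^2$ (so $\sigma_\sX^2$ plays the role of the ``$K-1$ extra copies'' contributing $(K-1)/\sigma_\sX^2$).

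For the equality claim, when the $\sigma_{\sX\|\sY^k}^2$ are all equal \corref{cor:sym} already asserts that the optimal $\{d_k\}$ are equal and hence coincide with the common-SNR-deficit allocation, and I would verify that the AM--GM step (and the convexity step) become equalities, yielding equality in \eqref{eq:lossc}; conversely, if the $\sigma_{\sX\|\sY^k}^2$ differ, the factors $1/\sigma_{\sX\|\sY^k}^2 - 1/\bar d_k$ differ and the AM--GM step is strict. The step I expect to be the main obstacle is this middle one: faithfully tracking the chain of conditional variances through \lemref{lem:back} --- there are several ``barred''/``unbarred'' and one-step-prediction versus filtering quantities to keep straight --- checking that the allocation's objective telescopes to the stated product, and confirming that the only two relaxations (convexity of the prediction map and AM--GM) both point in the favorable direction and are simultaneously tight exactly for identical channels.
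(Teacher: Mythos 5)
Your route is essentially the paper's route: the ``common-SNR-deficit'' allocation $1/d_k = 1/\sigma_{\sX\|\sY^k}^2 - \lambda$ is exactly the all-active waterfilling allocation of \propref{prop:causalwater} (your $\lambda$ is the reciprocal of the paper's), your identification of hypothesis \eqref{eq:dsmallenough} with the feasibility constraint $d_k\le\sigma_\sX^2$ at the worst index is the same observation the paper makes implicitly, your evaluation $g_k(d_k)=\tfrac12\log[(1/\sigma_{\sX\|\sY^k}^2-1/\bar d_k)/\lambda]$ reproduces \eqref{eq:watera}, and your AM--GM step is the paper's Jensen step \eqref{eq:jensen}. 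The one place you genuinely diverge is the passage from $\sum_k(1/\sigma_{\sX\|\sY^k}^2-1/\bar d_k)$ to $1/\sigma_{\sX\|\sY^{[K]}}^2-1/\bar d$: the paper treats \emph{both} $1/\sigma_{\sX\|\sY^{[K]}}^2=\sum_k 1/\sigma_{\sX\|\sY^k}^2-(K-1)/\sigma_\sX^2$ and $1/\bar d=\sum_k 1/\bar d_k-(K-1)/\sigma_\sX^2$ as identities from \lemref{lem:combo} (eqs.\ \eqref{eq:sigmaxy}--\eqref{eq:sigmaxdy}), so that \eqref{eq:unify} is an equality, whereas you keep the first as an identity but derive only the one-sided bound $\sum_k 1/\bar d_k\ge 1/\bar d+(K-1)/\sigma_\sX^2$ from subadditivity of the increments of the (concave) prediction map $u\mapsto u/(a^2+\sigma_V^2 u)$ about its fixed point $1/\sigma_\sX^2$. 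That inequality does point in the favorable direction, so your chain still yields \eqref{eq:lossc}.

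The gap is in your ``if and only if'' argument. You claim the two relaxations (AM--GM and the convexity step) are \emph{simultaneously tight} exactly for identical channels. But the map $u\mapsto u/(a^2+\sigma_V^2u)$ is \emph{strictly} concave whenever $a\neq 0$, so the increment-subadditivity you invoke is strict for every nondegenerate allocation with $K\ge 2$ --- including the symmetric one $d_1=\dots=d_K<\sigma_\sX^2$. Taken at face value, your argument would therefore give a strict inequality in \eqref{eq:lossc} even for identical channels, contradicting the equality claim you are trying to prove. The paper does not face this because it asserts \eqref{eq:sigmaxdy} as an exact identity (so the only slack in its chain is the Jensen step, which vanishes precisely in the symmetric case). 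To close your proof you must either establish \eqref{eq:sigmaxdy} as an identity in the same way the paper does --- in which case your convexity argument is unnecessary --- or concede that your derivation only proves the inequality in \eqref{eq:lossc} and the ``if'' direction of the equality statement does not follow from it. You cannot both replace the identity by a strict inequality and conclude equality in the symmetric case.
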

\begin{proof}
\secref{sec:proofloss}. 
\end{proof}
\thmref{thm:loss} parallels the corresponding result for the classical Gaussian CEO problem \cite[Cor. 1]{kostina2019ratelossITW}, and recovers it if $a = 0$. It is interesting that in both cases, the rate loss is bounded above by $K-1$ times the difference between the remote and the direct rate-distortion functions. In the case of identical observation channels, condition \eqref{eq:dsmallenough} reduces to $d \leq \sigma_{\mathsf X}^2$. The rate loss \eqref{eq:lossc} grows without bound in the high resolution regime $d \downarrow \sigma_{\sX\|\sY^{[K]}}^2$ and vanishes in the low resolution regime $d \uparrow \sigma_{\mathsf X}^2$. 
\detail{The rate loss \eqref{eq:lossc} grows without bound as $(K-1) \log \frac{1}{d -  \sigma_{\sX\|\sY^{[K]}}^2}$ in the high resolution regime $d \downarrow \sigma_{\sX\|\sY^{[K]}}^2$. The rate loss vanishes in the low resolution regime $d \uparrow \sigma_{\mathsf X}^2$, and in the stationary case, it vanishes linearly in $\sigma_{\mathsf X}^2 - d$.}

\subsection{A suboptimal waterfilling allocation}
\label{sec:wf}

We present an upper bound to $R_{\mathrm{CEO}}(d)$, which is obtained by waterfilling over $d_k$'s.
This parallels the corresponding result for the classical Gaussian CEO problem \cite[Cor. 1]{kostina2019ratelossITW}. Like  \cite{kostina2019ratelossITW}, we use waterfilling to obtain this result, but unlike the case $t = 1$ considered in \cite{kostina2019ratelossITW} where waterfilling is optimal  \cite{chen2004upper}, it is only suboptimal if $t > 1$ due to the memory of the past steps at the encoders and the decoder. This is unsurprising, as for the same reason waterfilling cannot be applied to solve the vector Gaussian rate-distortion problem for $t > 1$ \cite[Remark 2]{kostina2016ratecost}.
%  In the symmetrical case, we compute  $\lim_{K \to \infty}\sum_{k = 1}^K R^k$, extending the result of Oohama \cite[Cor. 1]{oohama1998ceo} to the causal setting and recovering it if $a = 0$.

\begin{prop}[Suboptimal waterfilling rate allocation]
For all $\sigma_{\sX\|\sY^{[K]}}^2 <  d < \sigma_{\mathsf X}^2$, the causal CEO rate-distortion function for the Gauss-Markov source in \eqref{eq:xi} observed through the Gaussian channels in \eqref{eq:yik}  is upper-bounded as 
\begin{align}
R_{\mathrm{CEO}}(d) &\leq \frac 1 2 \log \frac{\bar d}{d}+  \sum_{k = 1}^K \frac{1}{2} \log \frac{\bar d_k - \sigma_{\sX\| \sY^k}^2}{d_k - \sigma_{\sX\|\sY^k}^2}  \frac {d_k}{ \bar d_k}  ,  \label{eq:minwater} 
\end{align}
where $d_k$, $k \in [K]$ satisfy
\begin{align}
\frac 1 {\sigma^2_{\sX\|\sY^k}} - \frac 1 {d_k} =  \min \left\{ \frac 1 \lambda,\, \frac 1 {\sigma_{\sX\|\sY^k}^2} - \frac 1 {\sigma_{\sX}^2}  \right\}, 
\label{eq:dkwater}
\end{align}
 $\lambda$ is the solution to
\begin{align}
\sum_{k = 1}^K  \min \left\{ \frac 1 \lambda,\, \frac 1 {\sigma_{\sX\|\sY^k}^2} - \frac 1 {\sigma_{\sX}^2}  \right\} 
= \frac 1 {\sigma_{\sX\|\sY^{[K]}}^2} - \frac 1 d , \label{eq:lambda}
\end{align}
and $\bar d$, $\bar d_k$ are defined in \eqref{eq:dprev}, \eqref{eq:dkbar} respectively. Inequality in \eqref{eq:minwater} holds with equality if all $\sigma_{\sX\|\sY^k}^2$ are equal. 

\label{prop:causalwater}
\end{prop}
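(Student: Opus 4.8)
The plan is to exploit the fact that \thmref{thm:causalceo} already expresses $R_{\mathrm{CEO}}(d)$ as the \emph{minimum} of the objective in \eqref{eq:minsumc} over all $\{d_k\}_{k=1}^K$ obeying \eqref{eq:con1c}--\eqref{eq:con2c}: any feasible allocation yields an upper bound, and the right-hand side of \eqref{eq:minwater} is literally that same objective evaluated at the waterfilling allocation \eqref{eq:dkwater}--\eqref{eq:lambda}. So the whole argument reduces to (i) showing that this allocation is well defined and feasible, and (ii) showing that it coincides with the true minimizer when the observation channels are symmetric.

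For step (i), I would first argue that \eqref{eq:lambda} has a solution. Writing $c_k \triangleq \tfrac{1}{\sigma_{\sX\|\sY^k}^2} - \tfrac{1}{\sigma_\sX^2}>0$, the left-hand side of \eqref{eq:lambda}, as a function of $\lambda>0$, is continuous and nonincreasing, equals $\sum_k c_k$ for all sufficiently small $\lambda$, and vanishes as $\lambda\to\infty$. The right-hand side $\tfrac{1}{\sigma_{\sX\|\sY^{[K]}}^2}-\tfrac1d$ is positive because $d>\sigma_{\sX\|\sY^{[K]}}^2$, and --- using \lemref{lem:combo}, i.e.\ $\tfrac{1}{\sigma_{\sX\|\sY^{[K]}}^2}=\sum_k\tfrac{1}{\sigma_{\sX\|\sY^k}^2}-\tfrac{K-1}{\sigma_\sX^2}$ --- it equals $\sum_k c_k-(\tfrac1d-\tfrac1{\sigma_\sX^2})$, hence is strictly below $\sum_k c_k$ since $d<\sigma_\sX^2$; a $\lambda$ therefore exists. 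Next, \eqref{eq:dkwater} defines each $d_k$ through $\tfrac{1}{\sigma_{\sX\|\sY^k}^2}-\tfrac{1}{d_k}=\min\{\tfrac1\lambda,c_k\}\in[0,c_k]$, which is exactly the admissibility requirement \eqref{eq:con2c} that $d_k\in[\sigma_{\sX\|\sY^k}^2,\sigma_\sX^2]$. Summing \eqref{eq:dkwater} over $k$ and substituting \eqref{eq:lambda} gives $\sum_k\big(\tfrac{1}{\sigma_{\sX\|\sY^k}^2}-\tfrac1{d_k}\big)=\tfrac1{\sigma_{\sX\|\sY^{[K]}}^2}-\tfrac1d$, which is \eqref{eq:con1c} met with equality. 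With feasibility in hand, evaluating the objective of \eqref{eq:minsumc} at this allocation reproduces the right-hand side of \eqref{eq:minwater}, establishing the inequality.

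For step (ii), when all $\sigma_{\sX\|\sY^k}^2$ are equal so are the $c_k$, so \eqref{eq:dkwater} forces a common value $d_1=\dots=d_K$ with $\tfrac{1}{\sigma_{\sX\|\sY^1}^2}-\tfrac1{d_1}=\tfrac1K\big(\tfrac1{\sigma_{\sX\|\sY^{[K]}}^2}-\tfrac1d\big)$; after rearrangement this is precisely \eqref{eq:d1sym}. But \corref{cor:sym} states that the minimum in \eqref{eq:minsumc} is attained at equal $d_k$'s with exactly that value, so here the waterfilling allocation \emph{is} the minimizer and \eqref{eq:minwater} holds with equality. I expect the main obstacle to be purely bookkeeping rather than conceptual: keeping the two-sided bound $0\le\min\{\tfrac1\lambda,c_k\}\le c_k$ so that each $d_k$ stays in $[\sigma_{\sX\|\sY^k}^2,\sigma_\sX^2]$, and combining \lemref{lem:combo} with the hypothesis $d<\sigma_\sX^2$ to guarantee that the waterfilling level $\lambda$ exists. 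Note that no optimality claim for waterfilling at $t>1$ is needed --- only feasibility is used for the upper bound --- so none is attempted.
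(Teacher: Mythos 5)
Your proposal is correct and follows essentially the same route as the paper: plug the waterfilling allocation into \eqref{eq:minsumc}, verify feasibility of \eqref{eq:dkwater}--\eqref{eq:lambda} against \eqref{eq:con1c}--\eqref{eq:con2c}, and invoke \corref{cor:sym} for equality in the symmetric case. Your explicit argument that the waterfilling level $\lambda$ exists (using continuity, monotonicity, and \lemref{lem:combo} together with $\sigma_{\sX\|\sY^{[K]}}^2 < d < \sigma_\sX^2$) is a small but welcome bit of rigor that the paper leaves implicit.
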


\begin{proof}
We first check that the choice in \eqref{eq:dkwater} is feasible. Since the right side of \eqref{eq:dkwater} is lower-bounded by 0 and upper bounded by $\frac 1 {\sigma_{\sX\|\sY^k}^2} - \frac 1 {\sigma_{\sX}^2}$, \eqref{eq:con2c} is satisfied. Furthermore, substituting 
\eqref{eq:lambda} ensures that \eqref{eq:con1c} is satisfied with equality.  

To claim equality in the symmetrical case, it suffices to recall that in that case, the minimum in \eqref{eq:minsumc} is attained by $d_1 = \ldots = d_K$ (\corref{cor:sym}). %Note that each of the terms in the sum in \eqref{eq:minwater} is a convex function of $d_k$.  Indeed, since each term is a rate-distortion function in $\rho_k$ \eqref{eq:Rside}, it is convex and nonincreasing in $\rho_k$, and since $\rho_k$ is a concave function of $d_k$ \eqref{eq:rhokdk}, it follows that each term is also convex in $d_k$. Applying Jensen's inequality concludes the proof. 
\end{proof}

\subsection{Proof of \thmref{thm:loss}}
\label{sec:proofloss} 
Under the assumption \eqref{eq:dsmallenough}, the waterfilling allocation in \propref{prop:causalwater} results in all active transmitters, and  
\eqref{eq:dkwater} reduces to
\begin{align}
\frac 1 {\sigma^2_{\sX\|\sY^k}} - \frac 1 {d_k} =  \frac 1 \lambda, \label{eq:wfallactive}
\end{align}
while \eqref{eq:lambda} reduces to
\begin{align}
\lambda = K  \left(\frac 1 {\sigma_{\sX\|\sY^{[K]}}^2} - \frac 1 d \right)^{-1}. \label{eq:lambdaallactive}
\end{align}
Substituting \eqref{eq:wfallactive} into \eqref{eq:minwater} we conclude that under assumption \eqref{eq:dsmallenough},
\begin{align}
&~R_{\mathrm{CEO}}(d) \notag\\
&= \frac 1 2 \log \frac{\bar d}{d} + \frac 1 2 \sum_{k = 1}^K  \log  \left[ \left( \frac 1 {\sigma_{\sX\|\sY^k}^2} - \frac 1 {\bar d_k} \right) \lambda  \right]\label{eq:watera}\\
 &\leq \frac 1 2 \log \frac{\bar d}{d} + \frac K 2 \log  \left[ \sum_{k = 1}^K \left(   \frac 1 {\sigma_{\sX\|\sY^k}^2} - \frac 1 {\bar d_k}\right) \frac{\lambda}{K} \right]\label{eq:jensen}\\
 &= \frac 1 2 \log \frac{\bar d}{d}  + \frac K 2  \log \left( \frac 1 { \sigma_{\sX\|\sY^{[K]}}^2} - \frac 1 {\bar d} \right)\frac{\lambda}{K} \label{eq:unify}\\
 &= \frac 1 2 \log \frac{\bar d - \sigma_{\sX\| \sY^{[K]}}^2}{d - \sigma_{\sX\|\sY^{[K]}}^2} + \frac{K-1}{2} \log \frac{\bar d - \sigma_{\sX\| \sY^{[K]}}^2}{d - \sigma_{\sX\|\sY^{[K]}}^2}\frac{d}{\bar d}  
 \label{eq:activeub}
\end{align}
where
\begin{itemize}
 \item \eqref{eq:jensen} is by Jensen's inequality, since $\log$ is concave;
 \item \eqref{eq:unify} is due to 
 \begin{align}
 \frac 1 {\sigma_{\sX\|\sY^{[K]}}^2}   &= \sum_{k = 1}^K \frac 1 {\sigma_{\sX\|\sY^k}^2}  - \frac{K - 1}{\sigma_{\sX}^2}, \label{eq:sigmaxy}\\
  \frac 1 {\bar d }   &= \sum_{k = 1}^K \frac 1 {\bar d_k} - \frac{K - 1}{\sigma_{\sX}^2}, \label{eq:sigmaxdy}
\end{align}
 which holds by \lemref{lem:combo} even if the source is nonstationary  (that is, $|a| \geq 1$ and $\sigma_\sX^2 = \infty$), as a simple limiting argument taking $\frac{K - 1}{\sigma_{\sX}^2}$ to 0 confirms.
 \item \eqref{eq:activeub} holds by substituting \eqref{eq:wfallactive} into \eqref{eq:unify}.
\end{itemize}
Notice that \eqref{eq:lossc} is just another way to write  \eqref{eq:activeub}, using \eqref{eq:noisy} and \eqref{eq:noiseless}.
 To verify the condition for equality, note that `$=$' holds in \eqref{eq:watera} in the symmetrical case by \propref{prop:causalwater}, and that `$=$' holds in \eqref{eq:jensen} only in the symmetrical case due to strict concavity of the $\log$ function. 
\hspace*{\fill}$\qed$

\section{Conclusion}

In this paper, we set up the causal CEO problem (\defnref{defn:ceocode}, \defnref{def:Rceot}) and we prove that the rate-distortion function is upper bounded by the directed mutual information from the encoders to the decoder minimized subject to the distortion constraint and the separate encoding constraint, and lower bounded by the minimal directed mutual information subject to a weaker constraint (\thmref{thm:cg}). The proof of the direct coding theorem hinges upon an SLC-based nonasymptotic bound (\thmref{thm:bt}) that extends \cite[Th.~6]{yassaee2013techniqueArxiv} to the case with $K > 2$ observers and $t > 1$ time steps. An asymptotic analysis of \thmref{thm:bt} leads to an extension of the Berger-Tung inner bound  \cite{berger1978multi,tung1978multiterminal} to $t > 1$ time steps (\thmref{thm:btas}).  

By showing that the achievability bound in \thmref{thm:cg} is tight in the Gaussian case and by solving the correspoding minimal directed mutual information problem, we characterize the causal Gaussian CEO rate-distortion function as a convex optimization problem over $K$ parameters (\thmref{thm:causalceo}). We give an explicit formula in the identical-channels case (\corref{cor:sym}), and we study its asymptotic behavior as $K \to \infty$ (\corref{cor:largeK}). We derive the causal Gaussian remote rate-distortion function as a corollary to  \thmref{thm:causalceo} with $K = 1$ (\corref{cor:noisy}). Using a suboptimal waterfilling allocation over the $K$ optimization parameters in \thmref{thm:causalceo} (\propref{prop:causalwater}), we upper-bound the rate loss due to separated observers (\thmref{thm:loss}).  

We chose not to treat correlation between $n$ components of $X_i$ and $W_{i}^k$ in this paper merely to keep things simple. We expect our results to generalize to the scenario in which the  components of the source and the noise are not i.i.d. A further interesting generalization would be to consider the general vector state-space model
\begin{align}
X_{i+1} &=  A X_i + V_i \\
Y_{i}^k &=  C X_i + W_i^k,
\end{align}
where $A$ is an $n \times n$ matrix and $C$ is an $m \times n$ matrix. It will also be interesting to determine the full rate-distortion region of the causal Gaussian CEO problem as opposed to the sum rate we found in this paper. While \thmref{thm:btas} already gives an inner bound to that region, developing a converse remains open. The techniques in \cite{wagner2008improved,ekrem2014outer,wang2014vector} appear promising in that pursuit. Certain causal multiterminal source coding problems also appear within reach in view of the result in \cite{wagner2008rate} and the applicability of \thmref{thm:btas} to multiterminal source coding.

\begin{appendices}

\section{Proof of \thmref{thm:bt}}
\label{apx:btnonas}
\emph{Codebooks: }
Encoder $k$ maintains separate codebooks $\underline U_1^k, \underline U_2^k, \ldots,  \underline U_t^k$ to use at the transmission instances $1, 2, \ldots, t$ respectively. Codebook ${\underline  U}_i^k$ is an $n \times L_1^k \times \ldots \times L_{i}^k$-dimensional array: there is a separate codebook for each possible realization of past chosen codewords. 

For vector of indices $\ell_{[i]} \in \prod_{j = 1}^{i} [L_j^k]$, we denote by ${\underline  U}_i^k(\ell_{[i]})$ the codeword corresponding to index $\ell_i$, given the past indices $\ell_{[i-1]}$. For subsets $\mathcal K \subseteq [K]$ and $\mathcal I \subseteq [t]$, we denote the collection of codebooks $\underline U_{\mathcal I}^{\mathcal K} \triangleq ( \underline U_i^k \colon k \in \mathcal K, i \in \mathcal I)$. For indices $\ell_{i}^k \in [L_i^k]$, $i \in [t]$, $k \in [K]$, we denote their collection $\ell_{\mathcal I}^{\mathcal K} \triangleq (\ell_{i}^k \colon k \in \mathcal K, i \in \mathcal I)$. Finally, $\underline U_{\mathcal I}^{\mathcal K}(\ell_{\mathcal I}^{\mathcal K}) \triangleq ( \underline U_i^k(\ell_{[i]}^k) \colon k \in \mathcal K, i \in \mathcal I)$\ denotes the codewords corresponding to $\ell_{\mathcal I}^{\mathcal K}$; $1_{\mathcal I}^{\mathcal K}$ denotes the array of 1's of dimension $|\mathcal K| \times |\mathcal I |$. 

Codebook 1 for encoder $k$, $\underline U_1^k$, consists of $L_1^k$ codewords drawn i.i.d. from $P_{U_1^k}$. 
For $i = 2, \ldots, t$, codebook $i$ for user $k$, ${\underline  U}_i^k$, consists of $L_i^k$ codewords drawn i.i.d. from $P_{U_i^k | U_{[i-1]}^k = \underline U_{[i-1]}^k\left(\ell_{[i-1]}^k\right)}$, for each $\ell_{[i-1]}^k \in \prod_{j = 1}^{i-1}[L_j^k] $.

 \emph{Random binning:} Let $\mathsf B_i^k \colon [L_i^k] \mapsto [M_i^k]$, $i = 1, 2, \ldots, t$, be random mappings in which each element of $[L_i^k]$ is mapped equiprobably and independently to the set $[M_i^k]$.  
 
  We will use the notation  $\mathsf B_{\mathcal I}^{\mathcal K}(\ell_{\mathcal I}^{\mathcal K}) \triangleq ( \mathsf B_i^k(\ell_{[i]}^k) \colon k \in \mathcal K, i \in \mathcal I)$\ denotes the codewords corresponding to $\ell_{\mathcal I}^{\mathcal K}$.
   
 In the description of coding operations that follows, we denote the instances of the random codebooks in operation by $\underline u_i^k$ and those of the random binning functions by $\mathsf b_i^k$.

 %$\underline u_{[t]}^k(\ell_{[t]}^{k}) \triangleq [\underline u_1^k(\ell_1^k), \underline u_2^k(\ell_2^k), \ldots, \underline u_t^k(\ell_t^k)]$.

\emph{Encoders:} The encoders use the stochastic likelihood coder (SLC)  \cite{yassaee2013technique,yassaee2013techniqueArxiv} followed by random binning. Each user $k$ maintains a collection of encoders indexed by time $i = 1, 2, \ldots, t$; at time $i$, encoder $i$ is invoked to form and transmit a codeword at that time. 

\emph{Encoder $i$ for user $k$:} Given an observation $y_i^k \in \mathcal Y_i^k$ and past codewords $\ell_{[i-1]}^{k} \in \prod_{j = 1}^{i-1} [L_j^k]$, the SLC chooses an index $\ell_i^k \in [L_i^k]$ with probability %$Q_{U_i^k| Y_i^k = x_i^k, U_{[i-1]}^k = \underline u^{[i-1]} \left(\ell_{[i-1]}^k \right)}$ 
\begin{align}
&~Q_{U_i^k| Y_{[i]}^k = y_{[i]}^k, U_{[i-1]}^k = \underline u_{[i-1]}^k \left(\ell_{[i-1]}^k \right)} \left(\underline u_i^k \left(\ell_{[i]}^k \right) \right)\label{eq:btenc} \\
&~= \frac{\exp\left(\imath \left(y_{[i]}^k; \underline {u}_i^k\left(\ell_{[i]}^k\right) | \underline u_{[i-1]}^k\left(\ell_{[i-1]}^k\right) \right)\right)}{\sum_{\ell = 1}^{L_i^k} \exp \left( \imath \left(y_{[i]}^k;  \underline {u}_i^k \left( \left( \ell_{[i-1]}^k, \ell\right) \right) | \underline u_{[i-1]}^k \left(\ell_{[i-1]}^k \right) \right) \right) } , \notag
\end{align}
where the conditional information density is with respect to the given distribution $P_{Y_i^k U_i^k | U_{[i-1]}^k}$. Encoder $i$ transmits $m_i^k = \mathsf b_i^k(\ell_i^k)$ to the decoder, a realization of the random variable we denote by $B_i^k$.

The causal encoder $k$ is the resulting causal probability kernel  
\begin{align}
 &~Q_{B_{[t]}^k \| Y_{[t]}^k} (m_{[t]}^k \| y_{[t]}^k) \notag\\
 &~= \sum_{\ell_{[t]}^k} \1{\mathsf b_{[t]}^k \left( \ell_{[t]}^k \right) = m_{[t]}^k}  Q_{U_{[t]}^k \| Y_{[t]}^k} (\ell_{[t]}^k \| y_{[t]}^k).
\end{align}
Since the encoders operate independently, 
\begin{align}
 Q_{U_{[t]}^{[K]} \| Y_{[t]}^{[K]}} &= \prod_{k = 1}^K Q_{U_{[t]}^k \| Y_{[t]}^k}, \\
 Q_{B_{[t]}^{[K]} \| Y_{[t]}^{[K]}} &= \prod_{k = 1}^K Q_{B_{[t]}^k \| Y_{[t]}^k}.
\end{align}

 \emph{Decoder:}
 Having received the collection of bin numbers $m_i^{[K]} \in \prod_{k =1}^K [M_i^k]$ at time $i$ and remembering  the past, the decoder invokes a generalized likelihood decoder (GLD) \cite[eq. (4)]{merhav2017gld} to select among indices that fall into those bins a collection of indices $\hat \ell_i^{[K]} \in \prod_{k = 1}^K [L_i^k]$ with probability  
\begin{align}
&~Q_{\hat U_i^{[K]} | B_{[i]}^{[K]} = b_{[i]}^{[K]}, \hat U_{[i-1]}^{[K]} = \underline u_{[i-1]}^{[K]} \left(\hat \ell_{[i-1]}^{[K]} \right) } \left(\underline u_i^{[K]} \left(\hat \ell_i^{[K]} \right) \right) = \notag\\
 &\!\! \frac{\mathsf g \left( \underline u_{[i]}^{[K]} \left( \hat \ell_{[i]}^{K} \right) \right) \1{\mathsf b_i^{[K]} \left(\hat \ell_i^{[K]} \right) = m_i^{[K]}}}
 {\sum_{\ell_i^{[K]} }  \mathsf g \left( \underline u_{[i]}^{[K]} \left( \hat \ell_{[i-1]}^{[K]}, \ell_i^{[K]}\right) \right)  \1{\! \mathsf b_i^{[K]}(\ell_i^{[K]}) = m_i^{[K]} \! } }, \label{eq:btdec}
\end{align}
where 
\begin{align}
 \mathsf g \left(u_{[i]}^{[K]} \right) \triangleq \prod_{k = 1}^K \1{\!\jmath^{\pi(k)} \left(u_{[i]}^{\pi([K])} \right) \! \geq \! \log \frac{L_i^{\pi(k)} }{M_i^{\pi(k)}}  + \beta_i^{\pi(k)} \! }. \label{eq:gld}
\end{align}

Having determined $\hat \ell_i^{[K]}$, the decoder applies the given transformation $P_{\hat X_i | U_{[i]}^{[K]}, \hat X_{[i-1]}}$ to form the estimate of the source 
$\hat X_i \left(\underline u_{[i]}^{[K]} \left(\hat \ell_{[i]}^{[K]}\right) \right)$.
The causal decoder is the resulting causal kernel $Q_{\hat X_{[t]} \| B_{[t]}^{[K]}}$.

 \emph{Error analysis:}
  We consider two error events: 
\begin{align}
&\mathcal E_{\mathrm{dec}} \colon \hat U_{[t]}^{[K]} \neq U_{[t]}^{[K]}    %&&\text{(decoding error: some codewords are not recovered)}
\\
&\mathcal E_{\mathrm{enc}} \colon \bigcup_{i = 1}^t \left\{ \sd \left(X_i, \hat X_i \left(U_{[i]}^{[K]} \right) \right) > d_i   \right\}, %&& \text{(encoding error: some distortions exceed threshold)}
\end{align}
where $U_{[i]}^{[K]}$ are the codewords chosen by the encoders at encoding step \eqref{eq:btenc}, and $\hat U_{[t]}^{[K]} $ is the decoder's estimate of those codewords after decoding step \eqref{eq:btdec}. Note that $\mathcal E_{\mathrm{dec}}$ is the event that some codewords are not recovered (decoding error), and $\mathcal E_{\mathrm{enc}}$ is the event that some distortions exceed threshold even if all the codewords are recovered correctly (encoding error). We denote for brevity by $\mathcal F$ the sigma-algebra generated by $Y_{[t]}^{[K]}$, $\underline U_{[t]}^{[K]} \left(1_{[t]}^{[K]}\right), \mathsf B_{[t]}^{[K]}\left(1_{[t]}^{[K]}\right)$, $\hat X_{[i]}\left( \underline U_{[i]}^{[K]} \left(  1_{[i]}^{[K]} \right) \right)$; by $\mathbb Q$ the probability measure generated by the code; and by $F_i^k$, $G_i$  the denominators in \eqref{eq:btenc} and \eqref{eq:btdec}, respectively. Following Shannon's random coding argument and the Jensen inequality technique of Yassaee et al. \cite{yassaee2013technique,yassaee2013techniqueArxiv}, we proceed to bound an expectation of the indicator of the correct decoding event with respect to both the actual source code and the random codebooks.  
  \begin{align}
&~ \E{\mathbb Q\left[ \prod_{i = 1}^t \1{ \sd \left(X_i, \hat X_i \right) \leq d_i  \mid \underline U_{[t]}^{[K]}, \mathsf B_{[t]}^{[K]} } \right] } \notag\\
\geq&~ \E{\mathbb Q\bigg[ \mathcal E_{\mathrm{enc}}^c \cap \mathcal E_{\mathrm{dec}}^c \mid \underline U_{[t]}^{[K]}, \mathsf B_{[t]}^{[K]} \bigg] } \\
=
&~ \mathbb E \bigg[ \sum_{\ell_{[t]}^{[K]} \in \prod_{i = 1}^t \prod_{k = 1}^K [L_i^k]} Q_{U_{[t]} \| Y_{[t]}^{[K]}}\left(\underline U_{[t]}^{[K]} \left(\ell_{[t]}^{[K]}\right)  \| Y_{[t]}^{[K]} \right)  \notag \\
& \cdot \sum_{m_{[t]}^{[K]} \in \prod_{i = 1}^t \prod_{k = 1}^K [M_i^k] }\1{\mathsf B_{[t]}^{[K]} \left( \ell_{[t]}^{[K]} \right) = m_{[t]}^{[K]}}    \notag \\
&\cdot Q_{\hat U_{[t]}^{[K]} \| B_{[t]}^{[K]} = m_{[t]}^{[K]}} \left(\underline U_{[t]}^{[K]}\left(\ell_{[t]}^{[K]}\right)\right)
 1\{ \mathcal E_{\mathrm{enc}}^c\}  \bigg]  \label{eq:expub} \\
=&~ \prod_{k = 1}^K \prod_{i = 1}^t M_i^k L_i^k  \notag \\
 &\cdot \mathbb E \bigg[ \mathbb E \bigg[ Q_{U_{[t]}^{[K]} \| Y_{[t]}^{[K]}}\left(\underline U_{[t]}^{[K]} \left(1_{[t]}^{[K]}\right)  \| Y_{[t]}^{[K]} \right)  \notag \\
 &\cdot  \1{\mathsf B_{[t]}^{[K]} \left(1_{[t]}^{[K]}\right) = 1_{[t]}^{[K]} }      \notag\\
 &\cdot Q_{\hat U_{[t]}^{[K]} \| B_{[t]}^{[K]} = 1_{[t]}^{[K]}} \left(\underline U_{[t]}^{[K]} \left(1_{[t]}^{[K]}\right) \right)  
 1\{ \mathcal E_{\mathrm{enc}}^c\}   
 \mid \mathcal F   \bigg] \bigg]\!\! \label{eq:randsym}  \\
\geq&~ \prod_{k = 1}^K \prod_{i = 1}^t M_i^k L_i^k    \notag \\
&\cdot \mathbb E \Bigg[\prod_{k = 1}^K  \prod_{i = 1}^t \frac{  \exp\left(\imath \left(Y_{[i]}^k; \underline {U}_i^k\left(1_{[i]}\right) \mid \underline U_{[i-1]}^k\left(1_{[i-1]}\right) \right)\right)}{\E{F_i^k \mid \mathcal F}} \notag\\
 &\cdot \frac{\mathsf g \left( \underline U_{[i]}^{[K]} \left(1_{[i]}^{[K]} \right) \right)  \1{\mathsf B_i^{[K]} \left(1^{[K]} \right) = 1^{[K]}} }{  \E{G_i \mid \mathcal F} }  \notag\\
&\cdot \1{ \sd \left(X_i, \hat X_i \left( \underline U_{[i]}^{[K]} \left(  1_{[i]}^{[K]} \right) \right) \right) \leq d_i}  \Bigg], \label{eq:mainbt}
\end{align}
where 
\begin{itemize}
 \item the expectation $\mathbb E$ in \eqref{eq:expub} is with respect to the codebooks $\underline U_{[t]}^{[K]}$, the random binning functions $\mathsf B_{[t]}^{[K]}$, the decoder $P_{\hat X_{[t]} \| U_{[t]}^{[K]}}$ and $X_{[t]}$, $Y_{[t]}^{[K]}$; 
 \item \eqref{eq:randsym} uses that both the codewords and the binning functions for the $i$-th time instant are independently and identically distributed, thus each choice of $\ell_{[t]}^{[K]}$ and $m_{[t]}^{[K]}$ results in the same probability as the choice $\ell_{[t]}^{[K]} = 1_{[t]}^{[K]}$ and $m_{[t]}^{[K]} = 1_{[t]}^{[K]}$. Here we also conditioned on $\mathcal F$ before taking an outer expectation with respect to it, which will facilitate the next step of the calculation.
 \item the main step \eqref{eq:mainbt} is shown as follows. The product $Q_{U_{[t]}^{[K]} \| Y_{[t]}^{[K]}} Q_{\hat U_{[t]}^{[K]} \| B_{[t]}^{[K]}}$ is proportional to the product of $(K + 1)t$ factors $ \prod_{i = 1}^t \frac 1 {G_i} \prod_{k = 1}^K \frac 1 {F_i^k}$. Applying Jensen's inequality to this jointly convex function of $(K + 1)t$ variables yields
\end{itemize}
\begin{align}
\E{ \prod_{i = 1}^t \frac 1 {G_i} \prod_{k = 1}^K \frac 1 {F_i^k}  \mid \mathcal F  } \geq \prod_{i = 1}^t \frac 1 {\E{ G_i \mid \mathcal F } } \prod_{k = 1}^K \frac 1 {\E{ F_i^k \mid \mathcal F }}
\label{eq:yassjen}
\end{align}

We compute each factor in  \eqref{eq:yassjen} as follows. 
\begin{align}
&~\E{F_i^k | \mathcal F} \\
=&~ \!\! \E{\sum_{\ell = 1}^{L_i^k} \exp \left( \imath \left(Y_{[i]}^k;  \underline {U}_i^k \left(1_{[i-1]}, \ell \right) | \underline U_{[i-1]}^k \left(1_{[i-1]} \right) \right) \right) \!\! \mid \!\! \mathcal F } \notag\\
=&~ \exp \left( \imath \left(Y_{[i]}^k;  \underline {U}_i^k \left(1_{[i]} \right) | \underline U_{[i-1]}^k \left(1_{[i-1]} \right) \right) \right) + (L_i^k - 1) \cdot \notag\\
&~  \E{\exp \left( \imath \left(Y_{[i]}^k;  \underline {U}_i^k(1_{[i-1]}, 2) | \underline U_{[i-1]}^k \left(1_{[i-1]} \right) \right) \right) \mid \mathcal F} \label{eq:Fia} \\
=&~ \exp \left( \imath \left(Y_{[i]}^k;  \underline {U}_i^k(1_{[i]}) | \underline U_{[i-1]}^k \left(1_{[i-1]} \right) \right) \right) + (L_i^k - 1), \label{eq:Fi}
\end{align}
where to write \eqref{eq:Fia} we used that the codewords $\{\underline {U}_i^k(1_{[i-1]}, \ell)\colon \ell \neq 1\}$ are identically distributed conditioned on $\mathcal F$.

To evaluate $\E{G_i | \mathcal F}$, we partition the set of all $\ell_i^{[K]} \in \prod_{i= 1}^K [L_i^k]$  into index sets parameterized by $\mathcal K \subseteq [K]$:
\begin{align}
\mathcal L_i(\mathcal K) \triangleq \Big\{ \ell^{[K]} \in \prod_{i= 1}^K [L_i^k] \colon  & \ell^{\pi(k)} = 1, k \in \mathcal K, \notag\\
& \ell^{\pi(k)} \neq 1, k \in \mathcal K^c \Big\}, 
\end{align}
and for each $\ell_i^{[K]} \in \mathcal L_i(\mathcal K)$, $\mathcal K \subset K$, we upper-bound $\mathsf g( \cdot)$ as  
\begin{align}
&~\mathsf g \left(\underline u_{[i]}^{[K]} \left( 1_{[i-1]}^{[K]}, \ell_i^{[K]} \right) \right) \\
\leq & \prod_{k \in \mathcal K^c }
\1{\jmath^{\pi(k)} \left(\underline u_{[i]}^{\pi([K])} \right)  \geq \log \frac{L_i^{\pi(k)} }{M_i^{\pi(k)}}  + \beta_i^{\pi(k)} }\\
\leq& \prod_{k \in \mathcal K^c} \!\!\frac{M_i^{\pi(k)}}{L_i^{\pi(k)}} \exp\left( \jmath^{\pi(k)} \left( \underline u_{[i]}^{\pi([K])} \left( 1_{[i-1]}^{[K]}, \ell_i^{[K]} \right) \right) - \beta_{i}^{\pi(k)} \! \right), \label{eq:gub}
\end{align}
while for $\mathcal K = [K]$, we upper-bound it as
\begin{align}
 \mathsf g \left(\underline u_{[i]}^{[K]} \left( 1_{[i]}^{[K]} \right) \right) &\leq 1. \label{eq:gub0}
\end{align}

Note that for each $\ell_i^{[K]} \in \mathcal L_i(\mathcal K)$, $\mathcal K \subset K$
\begin{align}
 \E{\prod_{k \in \mathcal K^c}  \exp\left( \jmath^{\pi(k)} \left(\underline U_{[i]}^{\pi([K])} \left( 1_{[i-1]}^{[K]}, \ell_i^{[K]} \right) \right) \right) | \mathcal F} = 1. \label{eq:gubexp}
\end{align}
The upper-bound in \eqref{eq:gub} and the equality in \eqref{eq:gubexp} are key to the analysis of our GLD \eqref{eq:btdec}. 

Now, $\E{G_i | \mathcal F}$ is bounded as 
\begin{align}
&~ \E{G_i | \mathcal F} = \notag\\
&~ \mathbb E \Bigg[ \sum_{\ell_i^{[K]} \in \prod_{k = 1}^K [L_i^k] }  \mathsf g \left(\underline U_{[i]}^{[K]} \left( 1_{[i-1]}^{[K]}, \ell_i^{[K]} \right) \right)   \notag\\
  &~ \hspace{70pt}\cdot \1{\mathsf B_i^{[K]}(\ell_i^{[K]}) = 1^{[K]}}\mid \mathcal F \Bigg] \\
=&~ \mathsf g \left(\underline U_{[i]}^{[K]} \left( 1_{[i]}^{[K]} \right) \right) \1{\mathsf B_i^{[K]}(1^{[K]}) = 1^{[K]}} 
\notag\\
&+ \sum_{\substack{\mathcal K \subset [K]}}  \E{ \sum_{ \ell_i^{[K]} \in \mathcal L_i(\mathcal K) }\mathsf g \left(\underline U_{[i]}^{[K]} \left( 1_{[i-1]}^{[K]}, \ell_i^{[K]} \right) \right) \mid \mathcal F }\notag\\
&\cdot  \prod_{k \in \mathcal K^c} \frac{1}{M_i^{\pi(k)}} \cdot \1{\mathsf B_i^{ \pi(\mathcal K)}(1^{\mathcal K}) = 1^{\mathcal K}} \\
\leq&~ \1{\mathsf B_i^{[K]}(1^{[K]}) = 1^{[K]}}  \label{eq:Gi}\\
&+ \sum_{\substack{\mathcal K \subset [K]}} \exp\left( - \sum_{k \in \mathcal K^c} \beta_{i}^{\pi(k)} \right) \1{\mathsf B_i^{ \pi(\mathcal K)}(1^{ \mathcal K}) = 1^{\mathcal K}}, 
\notag
\end{align} 
where \eqref{eq:Gi} follows from \eqref{eq:gub}, \eqref{eq:gub0} and \eqref{eq:gubexp}. 

Now, plugging \eqref{eq:Fi} and \eqref{eq:Gi} into \eqref{eq:mainbt} and computing the expectation in \eqref{eq:mainbt} with respect to the codebooks and the binning functions,  we conclude that the probability of successful decoding is bounded below as
\begin{align}
&~ 1 - \epsilon \geq    \label{eq:bt}\\
&~\mathbb E \Bigg[\prod_{k = 1}^K  \prod_{i = 1}^t \frac{  1}{ \frac 1 {L_i^{k}} \exp \left( \imath \left(Y_{[i]}^k;  U_i^k | U_{[i-1]}^k \right) \right) + \left(1 - \frac 1 {L_i^{k}}\right)}  \notag\\
 &\cdot \frac{\mathsf g \left(U_{[i]}^{[K]} \right) \1{ \sd \left(X_i, \hat X_i\left(U_{[i]}^{[K]} \right) \right) \leq d_i} }{ 1 + \sum_{\substack{\mathcal K \subset [K]}} \exp\left( - \sum_{k \in \mathcal K^c} \beta_{i}^k \right) } \Bigg] \notag.
\end{align}

\emph{ Loosening the bound \eqref{eq:bt}:}
Here we again follow the recipe of Yassaee et al. \cite{yassaee2013technique,yassaee2013techniqueArxiv}.

\begin{align}
 &~ 1 - \epsilon \geq    \notag\\
&~\mathbb E \Bigg[\prod_{k = 1}^K  \prod_{i = 1}^t \frac{  1}{  (L_i^{k})^{-1} \exp \left( \imath \left(Y_{[i]}^k;  U_i^k | U_{[i-1]}^k \right) \right) + 1}  \notag\\
 &\cdot \frac{\mathsf g \left(U_{[i]}^{[K]} \right) \1{ \sd \left(X_i, \hat X_i\left(U_{[i]}^{[K]} \right) \right) \leq d_i} }{  \sum_{\substack{\mathcal K \subseteq [K]}} \exp\left( - \sum_{k \in \mathcal K} \beta_{i}^k \right) } \Bigg] \label{eq:bta}\\
    &\geq  \prod_{k = 1}^K  \prod_{i = 1}^t  \frac{  \Prob{\mathcal E^c}}{  \left[ 1 + \exp(-\alpha_i^k) \right] \left[ \sum_{\substack{\mathcal K \subseteq [K]}} \exp\left( - \sum_{k \in \mathcal K} \beta_{i}^k \right) \right]   }    \label{eq:wel} 
     \end{align}
 where \eqref{eq:bta} holds by weakening \eqref{eq:bt} using $1 - \left(L_t^k\right)^{ -1} \leq 1$ and rewriting for brevity 
\begin{align}
\!\! 1 + \sum_{\mathcal K \subset [K]} \exp\left( - \sum_{k \in \mathcal K^c} \beta_{i}^k \right)&=\sum_{\mathcal K \subseteq [K]} \exp\left( - \sum_{k \in \mathcal K} \beta_{i}^k \right);
\end{align}
 \eqref{eq:wel} is obtained by weakening \eqref{eq:bta} by multiplying the random variable inside the expectation by  $\1{\mathcal E^c}$ and using the conditions in $\mathcal E$ \eqref{eq:event} to upper-bound $\imath \left(Y_{[i]}^k;  U_i^k | U_{[i-1]}^k \right)$ in the denominator. 
 
Rewriting \eqref{eq:wel}, we obtain 
\begin{align}
 \epsilon &\leq 1 - \\
 &~\prod_{k = 1}^K  \prod_{i = 1}^t  \frac{  \Prob{\mathcal E^c}}{  \left[ 1 + \exp(-\alpha_i^k) \right] \left[ \sum_{\mathcal K \subseteq K } \exp(-\sum_{k \in \mathcal K} \beta_i^{k}) \right]   }   \notag  \\
 &=  
 \Prob{\mathcal E} + \gamma  \, \Prob{ \mathcal E^c} \\
 &\leq \Prob{\mathcal E} + \gamma.
\end{align}

 \section{Proof of \thmref{thm:btas}}
 \label{sec:btas}
 We analyze the bound in \thmref{thm:bt} with
\begin{align}
P_{U_{[t]}^k \| Y_{[t]}^k} &= P_{\mathsf U_{[t]}^k \| \mathsf Y_{[t]}^k}^{\otimes n}, \\
P_{ \hat X_{[t]}^{[K]} \| U_{[t]}^{[K]}} &=  P_{ \hat {\mathsf X}_{[t]}^{[K]} \| \mathsf U_{[t]}^{[K]}}^{\otimes n},
\end{align}
single-letter kernels chosen so that
\begin{align}
\E{ \sd \left(\mathsf X_i, \hat {\mathsf X}_i \left( \mathsf U_{[i]}^{[K]}\right)  \right)} = d_i + \delta,
 \end{align}
 for some $\delta > 0$. We also fix an arbitrary permutation $\pi \colon [K] \mapsto [K]$. 
 Denote for brevity the divergences 
\begin{align}
&~D_i^{\pi(k)} \triangleq  \E{\jmath^{\pi(k)} \left(\mathsf U_{[i]}^{\pi([K])} \right)} \\
=&~ D \left(\!  P_{\mathsf U_{i}^{\pi(k)} | \mathsf U_{i}^{\pi([k-1])} \mathsf U_{[i-1]}^{\pi([K])} }  \| P_{\mathsf U_{i}^{\pi(k)} | \mathsf U_{[i-1]}^{\pi(k)} }  | P_{\mathsf U_{i}^{\pi([k-1])} \mathsf U_{[i-1]}^{\pi([K])}} \! \right)\notag
\end{align}
For $k \in [K]$, $i \in [t]$, let 
\begin{equation}
 \alpha_i^k = \beta_i^k = n \delta,
\end{equation}
and choose $L_i^k$, $M_i^k$ to satisfy
%\begin{align}
%\log L_i^k &\geq n\, I \left(\mathsf X_{[i]}^k; \mathsf U_i^k |\mathsf U_{[i-1]}^k \right), \\
% \log \frac{L_i^k}{M_i^k} &\leq  n\, D_i^k.
%\end{align}
\begin{align}
\log L_i^k &\geq n\, I \left(\mathsf Y_{[i]}^k; \mathsf U_i^k |\mathsf U_{[i-1]}^k \right) + 2 \alpha_i^k, \label{eq:Lik}\\
 \log M_i^{\pi(k)} &\geq  \log L_i^{\pi(k)} - n\, D_i^{\pi(k)} + 2 \beta_i^k. \label{eq:Mik}
\end{align}
Note that since $\mathsf U_i^k - \left( \mathsf Y_{[i]}^k, \mathsf U_{[i-1]}^k \right) - \mathsf U_{[i]}^{[K] \backslash \{k\}}$, it holds that
\begin{align}
& I \left(\mathsf Y_{[i]}^{\pi(k)}; \mathsf U_{i}^{\pi(k)} | \mathsf U_{i}^{\pi([k-1])}, \mathsf U_{[i-1]}^{\pi([K])} \right) \notag\\
=&  I \left(\mathsf Y_{[i]}^{\pi(k)}; \mathsf U_i^{\pi(k)} |\mathsf U_{[i-1]}^{\pi(k)} \right)  - D_i^{\pi(k)} \label{eq:diffDk},
\end{align}
and thus summing both sides of \eqref{eq:Mik} over $i \in [t]$ we obtain (cf. \eqref{eq:Rbt})
\begin{align}
\frac 1 n \sum_{i = 1}^t \log M_i^k  \geq  &~ I \left({\mat Y}_{[t]}^{\pi(k)} \to  \mat U_{[t]}^{\pi(k)}  \| \mat U_{[t]}^{\pi([k-1])}, \D \mat  U_{[t]}^{[K]}  \right) \notag \\
&+ 4 t \delta.
\end{align}
Applying the union bound to $\Prob{\mathcal E}$ and the law of large numbers to each of the resultant $(2 K + 1)t$ terms, we further conclude that $\Prob{\mathcal E} \to 0$ as $n \to \infty$. Furthermore, $\gamma \to 0$ as $n \to \infty$, and therefore by \thmref{thm:bt} there exists a sequence of codes with $\log L_i^k$ and $\log M_i^k$ satisfying \eqref{eq:Lik}, \eqref{eq:Mik} with excess-distortion probability $\epsilon \to 0$ as $n \to \infty$. 

Under our assumption on the $p$-th moment of the distortion measure \eqref{eq:dmom}, the existence of an $(M_{[t]}^{[K]}, d_{[t]}, \epsilon)$ excess-distortion code with $\frac 1 t \sum_{i = 1}^t d_i \leq d$
implies the existence of an $(M_{[t]}^{[K]}, d(1 - \epsilon) + d_p \epsilon^{1 - 1/p})$ average distortion code via a standard argument using H\"older's inequality \cite[Th. 25.5]{polyanskiy2012notes}.

 \section{Two characterizations of Berger-Tung bound}
\label{apx:bt}
\begin{prop}
The region $\mathcal R$ in \eqref{eq:btreg} 
 is equivalent to the region $\mathcal R^\prime$ in \eqref{eq:btperm}.
\label{prop:btequiv}
\end{prop}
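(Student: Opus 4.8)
The plan is to reduce the statement to the classical equivalence between the two standard descriptions of the Slepian--Wolf rate region, applied to the auxiliary sources $(\sU^1,\ldots,\sU^K)$. The key observation is that each $\sU^k$ is generated from $\sY^k$ together with independent local randomness, so the Markov chain $\sU^k-\sY^k-(\sY^{[K]\setminus\{k\}},\sU^{[K]\setminus\{k\}})$ holds; consequently, for every $\mathcal A\subseteq[K]$ and every permutation $\pi$,
\[
 I(\sY^{\mathcal A};\sU^{\mathcal A}\mid\sU^{\mathcal A^c})=H(\sU^{\mathcal A}\mid\sU^{\mathcal A^c})-\sum_{k\in\mathcal A}H(\sU^k\mid\sY^k),
\]
\[
 I(\sY^{\pi(k)};\sU^{\pi(k)}\mid\sU^{\pi([k-1])})=H(\sU^{\pi(k)}\mid\sU^{\pi([k-1])})-H(\sU^{\pi(k)}\mid\sY^{\pi(k)}).
\]
First I would record these two identities: each follows by writing the left-hand side as a difference of conditional entropies, using the Markov chain to drop the $\sY$'s from the negative term, and then using the conditional independence of $\{\sU^k:k\in\mathcal A\}$ given $\sY^{\mathcal A}$ together with $H(\sU^k\mid\sY^{\mathcal A})=H(\sU^k\mid\sY^k)$.

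Next, the substitution $R'_k\triangleq R^k+H(\sU^k\mid\sY^k)$ turns the inequalities defining $\mathcal R$ in \eqref{eq:btreg} into $\sum_{k\in\mathcal A}R'_k>H(\sU^{\mathcal A}\mid\sU^{\mathcal A^c})$ for all $\mathcal A\subseteq[K]$, and those defining $\mathcal R'$ in \eqref{eq:btperm} into $R'_{\pi(k)}>H(\sU^{\pi(k)}\mid\sU^{\pi([k-1])})$ for all $k\in[K]$; thus $\mathcal R$ and $\mathcal R'$ are one and the same affine translate of, respectively, the subset form and the chain-rule (successive-decoding) form of the Slepian--Wolf region of $(\sU^1,\ldots,\sU^K)$, and it suffices to prove these two forms coincide. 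For the inclusion $\mathcal R'\subseteq\mathcal R$ the argument is direct: given $\pi$ with $R'_{\pi(k)}>H(\sU^{\pi(k)}\mid\sU^{\pi([k-1])})$ for all $k$, and given $\mathcal A=\{\pi(k_1),\ldots,\pi(k_m)\}$ with $k_1<\ldots<k_m$, the entropy chain rule gives $H(\sU^{\mathcal A}\mid\sU^{\mathcal A^c})=\sum_{j=1}^{m}H(\sU^{\pi(k_j)}\mid\sU^{\{\pi(k_1),\ldots,\pi(k_{j-1})\}\cup\mathcal A^c})$; since $\pi([k_j-1])\subseteq\{\pi(k_1),\ldots,\pi(k_{j-1})\}\cup\mathcal A^c$ (any $\pi(i)$ with $i<k_j$ lies either among $\pi(k_1),\ldots,\pi(k_{j-1})$ or in $\mathcal A^c$), monotonicity of entropy under conditioning bounds the $j$-th term by $H(\sU^{\pi(k_j)}\mid\sU^{\pi([k_j-1])})<R'_{\pi(k_j)}$, and summing over $j$ gives $\sum_{k\in\mathcal A}R'_k>H(\sU^{\mathcal A}\mid\sU^{\mathcal A^c})$.

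For the reverse inclusion $\mathcal R\subseteq\mathcal R'$ I would use the polymatroid structure: the set function $\mathcal A\mapsto H(\sU^{\mathcal A}\mid\sU^{\mathcal A^c})=H(\sU^{[K]})-H(\sU^{\mathcal A^c})$ is supermodular because entropy is submodular, so $\{R':\ \sum_{k\in\mathcal A}R'_k\ge H(\sU^{\mathcal A}\mid\sU^{\mathcal A^c})\ \text{for all}\ \mathcal A\}$ is a contra-polymatroid whose minimal face (base polytope) is the convex hull of the $K!$ greedy vertices $v_\pi$ defined by $v_{\pi(k)}=H(\sU^{\pi(k)}\mid\sU^{\pi([k-1])})$; hence any $R'\in\mathcal R$ dominates, componentwise, some convex combination $\sum_\pi\lambda_\pi v_\pi$ of these vertices, so $R'=\sum_\pi\lambda_\pi R'_\pi$ with each $R'_\pi\ge v_\pi$, and by the same time-sharing that underlies the convexity statement following \eqref{eq:bto} this $R'$ belongs to $\mathcal R'$. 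The crux I expect to be this last step: a generic point of the base polytope dominates none of the individual greedy vertices, so the union over permutations in \eqref{eq:btperm} recovers the full region only once convexification (time sharing) is brought in; the remaining ingredients --- the Markov-chain identities of the first paragraph, the chain-rule bookkeeping, and the textbook contra-polymatroid vertex characterization --- are routine.
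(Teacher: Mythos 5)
Your argument is a genuinely different and, in both directions, more careful route than the one in the text, reached via the reduction $R'_k=R_k+H(\sU^k\mid\sY^k)$ to the Slepian--Wolf region of $(\sU^1,\ldots,\sU^K)$. For $\mathcal R'\subseteq\mathcal R$, the paper fixes $\pi$ and, via \eqref{eq:btperm2}, recovers the constraint in \eqref{eq:btreg} only for the suffix sets $\mathcal A=\pi(\{k+1,\ldots,K\})$; your chain-rule decomposition of $H(\sU^{\mathcal A}\mid\sU^{\mathcal A^c})$ along the order that $\pi$ induces on an \emph{arbitrary} $\mathcal A$, together with monotonicity of conditional entropy, fills that gap, and the two entropy identities you record are correct consequences of the chain $\sU^k-\sY^k-(\sY^{[K]\backslash\{k\}},\sU^{[K]\backslash\{k\}})$ and the conditional independence of $\{\sU^k\}_{k\in\mathcal A}$ given $\sY^{\mathcal A}$.

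For $\mathcal R\subseteq\mathcal R'$, you have spotted a genuine subtlety. The forward implication of the paper's claimed equivalence $\{S_1\ge I_1,\ S_1+S_2\ge I_1+I_2\}\Leftrightarrow\{S_1\ge I_1,\ S_2\ge I_2\}$ fails (take $S_1=I_1+1$, $S_2=I_2-1/2$), and indeed, as you observe, a generic interior point of the dominant face of the contra-polymatroid $\mathcal R$ strictly dominates no single greedy vertex $v_\pi$ and therefore lies in the region of \eqref{eq:btperm} for no single $\pi$: taken literally as a union over permutations, \eqref{eq:btperm} describes a strict subset of \eqref{eq:btreg}, and the two coincide only after convexification, i.e.\ after time-sharing over successive-decoding orders. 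Your contra-polymatroid argument is the honest way to make the reverse inclusion precise; what it establishes is that \eqref{eq:btreg} equals the closed convex hull of \eqref{eq:btperm} --- which is exactly what the achievability statement of \thmref{thm:btas} needs operationally, though it is weaker than the literal set identity the proposition asserts. The only step to spell out is the last one: set $\delta\triangleq R'-\sum_\pi\lambda_\pi v_\pi\ge0$, write $R'=\sum_\pi\lambda_\pi(v_\pi+\delta)$, note that each $v_\pi+\delta$ lies in the closure of the $\pi$-corner region, and conclude that $R'$ is the time-sharing mixture of those operating points.
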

\begin{proof}[Proof of \propref{prop:btequiv}]
Observe that any subset $\mathcal A$ of $[K]$ with cardinality $k$ is equal to $\pi([k])$, for some permutation $\pi$ on $[K]$. 

First, we show that $\mathcal R^\prime \subseteq \mathcal R$. Fix $\pi$ and consider $\mathcal K = \pi([k])$.
Since given $\sY_k$, $\sU_k$ is independent of $\sU^{[K] \backslash \{k\}}$,
\begin{align}
I(\sY^{\mathcal K}; \sU^{\mathcal K}) &= \sum_{j = 1}^k I(\sY_{\pi(j)}; \sU_{\pi(j)} | \sU^{\pi[j-1]}),  \\
I(\sY^{\mathcal K^c}; \sU^{\mathcal K^c} | \sU^{\mathcal K}) &= \sum_{j = k+1}^K I(\sY_{\pi(j)}; \sU_{\pi(j)} | \sU^{\pi[j-1]}).  \label{eq:btperm2}
\end{align}
From \eqref{eq:btperm2}, we conclude that any set of rates that satisfies \eqref{eq:btperm} for $\pi$ must also satisfy \eqref{eq:btreg} for $\mathcal A = \mathcal K^c$. Thus, $\mathcal R^\prime \subseteq \mathcal R$.

To show that $\mathcal R \subseteq \mathcal R^\prime$, note, using  the operational Markov chain condition $\sU^{\mathcal B} - \sY^{\mathcal B} - \sY^{\mathcal A \backslash \mathcal B} - \sU^{\mathcal A \backslash \mathcal B}$, that for all 
 $\mathcal  B \subseteq \mathcal A$,
\begin{align}
 I(\sY^{\mathcal A}; \sU^{\mathcal A} ) = I(\sY^{\mathcal A \backslash \mathcal B}; \sU^{\mathcal A \backslash \mathcal B } | \sU^{\mathcal B}) + I(\sY^{\mathcal B}; \sU^{\mathcal B} ). \label{eq:bteq}
\end{align}
Since
\begin{align}
\begin{cases}
 S_1 \geq I_1 \\
 S_1 + S_2 \geq I_1 + I_2  
\end{cases}
\Longleftrightarrow 
\begin{cases}
 S_1 \geq I_1 \\
 S_2 \geq I_2  
\end{cases},
\end{align}
\eqref{eq:bteq} implies that for any $\mathcal A \subseteq [K]$,
\begin{align}
\begin{cases}
\sum_{k \in \mathcal A^c} R^k \geq I(\sY^{\mathcal A^c}; \sU^{\mathcal A^c} | \sU^{\mathcal A})  \\
\sum_{k \in [K]} R^k \geq I(\sY^{[K]}; \sU^{[K]})
\end{cases}\\
\Longleftrightarrow 
\begin{cases}
\sum_{k \in \mathcal A} R^k \geq I(\sY^{\mathcal A}; \sU^{\mathcal A} )\\
\sum_{k \in \mathcal A^c} R^k \geq I(\sY^{\mathcal A^c}; \sU^{\mathcal A^c} | \sU^{\mathcal A}) 
\end{cases} 
\end{align}
and for any $\mathcal B \subseteq \mathcal A$, 
\begin{align}
&\begin{cases}
\sum_{k \in \mathcal B} R^k \geq I(\sY^{\mathcal B}; \sU^{\mathcal B}) \\
\sum_{k \in \mathcal A} R^k \geq I(\sY^{\mathcal A}; \sU^{\mathcal A} )
\end{cases} \\
\Longleftrightarrow
&\begin{cases}
\sum_{k \in \mathcal B} R^k \geq I(\sY^{\mathcal B}; \sU^{\mathcal B}) \\
\sum_{k \in \mathcal A \backslash \mathcal B} R^k \geq I(\sY^{\mathcal A \backslash \mathcal B}; \sU^{\mathcal A \backslash \mathcal B} | \sU^{\mathcal B}) \label{eq:bteq2}
\end{cases}
\end{align}
For $\mathcal B = \pi([k-1])$ and $\mathcal A = \pi([k])$, the second inequality in \eqref{eq:bteq2} is exactly the inequality \eqref{eq:btperm}.  Since any set of rates satisfying \eqref{eq:btreg} must also satisfy \eqref{eq:bteq2} for all $\mathcal B \subseteq \mathcal A \subseteq [K]$,  we conclude that $\mathcal R \subseteq \mathcal R^\prime$. 
\end{proof}

\section{MMSE estimation lemmas}
\label{apx:gest}
Lemmas \ref{lem:back} and \ref{lem:combo} are corollaries to the following result. 
\begin{lemma}
Let $X \sim \Gauss{0, \sigma_X^2 }$, and let
\begin{align}
Y_k = X + W_k,~ k = 1, \ldots, K, \label{eq:broadcast}
\end{align}
where $W_k \sim \Gauss{0, \sigma_{W_k}^2 }$, $W_k \perp W_j$, $j \neq k$. 
 Then, the MMSE estimate and the normalized estimation error of $X$ given $Y_{[K]}$ are given by
\begin{align}
\E{X | Y_{[K]}} &= \sum_{k = 1}^K  \frac{\sigma_{X|Y_{[K]}}^2}{\sigma_{W_k}^2} Y_k, \label{eq:estk}\\
\frac 1 {\sigma_{X|Y_{[K]}}^2 } &= \frac 1 {\sigma_X^2} + \sum_{k =1}^K \frac 1 {\sigma_{W_k}^2}. \label{eq:msek}
\end{align}
\label{lem:Gest}
\end{lemma}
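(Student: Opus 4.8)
The plan is to compute the posterior law of $X$ given $Y_{[K]}=y_{[K]}$ in closed form and to read the two claimed identities directly off its mean and variance.

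First I would use that $X,W_1,\dots,W_K$ are jointly Gaussian and mutually independent, so the joint density of $\left(X,Y_{[K]}\right)$ factorizes as $p_X(x)\prod_{k=1}^K p_{W_k}(y_k-x)$ by \eqref{eq:broadcast}. Viewed as a function of $x$ with $y_{[K]}$ held fixed,
\begin{align}
p_{X\mid Y_{[K]}}\!\left(x\mid y_{[K]}\right)\ \propto\ \exp\!\left(-\frac{x^2}{2\sigma_X^2}-\sum_{k=1}^K\frac{(y_k-x)^2}{2\sigma_{W_k}^2}\right).
\end{align}
Next I would expand the exponent as a quadratic in $x$, keeping only the $x^2$ and $x$ terms (the rest being absorbed into the $x$-independent normalization):
\begin{align}
-\frac12\left(\frac1{\sigma_X^2}+\sum_{k=1}^K\frac1{\sigma_{W_k}^2}\right)x^2+\left(\sum_{k=1}^K\frac{y_k}{\sigma_{W_k}^2}\right)x+c\!\left(y_{[K]}\right).
\end{align}
Since the $x^2$-coefficient is strictly negative, completing the square shows the posterior is Gaussian, with inverse variance equal to $\frac1{\sigma_X^2}+\sum_k\sigma_{W_k}^{-2}$ --- which is precisely \eqref{eq:msek} --- and with mean $\left(\frac1{\sigma_X^2}+\sum_k\sigma_{W_k}^{-2}\right)^{-1}\sum_k y_k/\sigma_{W_k}^2$, which on substituting \eqref{eq:msek} is precisely \eqref{eq:estk}.

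Finally I would note that the posterior variance found above does not depend on the realization $y_{[K]}$, so it coincides with the unconditional MMSE $\sigma_{X\mid Y_{[K]}}^2=\E{\left(X-\E{X\mid Y_{[K]}}\right)^2}$ appearing in the statement; this Gaussian-specific feature is also what makes \lemref{lem:back} and \lemref{lem:combo} fall out as corollaries (e.g.\ \lemref{lem:back} is the case $K=1$ after rewriting $\sigma_X^{-2}+\sigma_W^{-2}$ via $\sigma_Y^2=\sigma_X^2+\sigma_W^2$). I do not expect a genuine obstacle here --- the argument is a routine Bayesian Gaussian computation --- and the only point that deserves an explicit sentence is the determinism of the conditional variance. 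An equivalent route, if one prefers to avoid densities, is the linear-MMSE formula $\E{X\mid Y_{[K]}}=\Sigma_{X Y_{[K]}}\Sigma_{Y_{[K]}}^{-1}Y_{[K]}$ together with a Sherman--Morrison inversion of $\Sigma_{Y_{[K]}}=\sigma_X^2\mathbf{1}\mathbf{1}^{\mathsf T}+\mathrm{diag}(\sigma_{W_1}^2,\dots,\sigma_{W_K}^2)$; or an induction on $K$ that folds $Y_{[K-1]}$ into a single Gaussian pseudo-observation of $X$ before adjoining $Y_K$. All three yield the same two recursions.
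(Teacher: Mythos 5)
Your primary argument (Bayesian posterior computation via completing the square in the exponent) is correct and is a genuinely different route from the paper's. The paper instead applies the standard Gaussian conditional mean/covariance formulas $\E{X\mid Y}=\Sigma_{XY}\Sigma_{YY}^{-1}Y$ and $\Cov[X\mid Y]=\Sigma_{XX}-\Sigma_{XY}\Sigma_{YY}^{-1}\Sigma_{YX}$, and then invokes the matrix inversion lemma twice --- once to obtain $\Cov[X\mid Y]^{-1}$ directly (giving \eqref{eq:msek}) and once on $\Sigma_{YY}=\Sigma_W+\sigma_X^2\mathbf 1\mathbf 1^{\mathsf T}$ (giving \eqref{eq:estk}); this is exactly the ``Sherman--Morrison'' alternative you sketch in your final sentence. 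Your density route is the more elementary of the two: it avoids any matrix inversion because the diagonal structure of $\Sigma_W$ makes the exponent separate term by term, and the coefficients of $x^2$ and $x$ can simply be read off. The paper's covariance-algebra route is a bit heavier for this scalar-$X$, diagonal-noise case, but it is the one that ports more directly to vector sources and correlated observation noise. One small presentational point in your write-up is well taken and worth keeping explicit: the posterior variance obtained from the quadratic coefficient is independent of the realization $y_{[K]}$, which is what lets you identify it with the unconditional MMSE $\sigma_{X\mid Y_{[K]}}^2$ used throughout the paper.
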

\begin{proof}[Proof of \lemref{lem:Gest}]
The result is well known; we provide a proof for completeness.
For jointly Gaussian random vectors $X, Y$, 
 \begin{align}
\E{X| Y = y} &= \E{X} + \Sigma_{XY} \Sigma_{YY}^{-1}\left( y - \E{Y}\right),\\
\Cov[X|Y] &= \Sigma_{XX} - \Sigma_{XY} \Sigma_{YY}^{-1} \Sigma_{YX}.
\end{align}

Denote for brevity
\begin{align}
\Sigma_{W} \triangleq \begin{bmatrix}
\sigma_{W_1}^2 & & 0\\
& \ddots &\\
0 & &\sigma_{W_K}^2
\end{bmatrix}.  
\end{align}
In our case, $X$ is a scalar and $Y = Y_{[K]}$ is a vector, and
\begin{align}
\Sigma_{XX} &= \sigma_X^2,\\
\Sigma_{YY} %&= 
%\begin{bmatrix}
%\sigma_{X}^2 + \sigma_{Z_1}^2 & \sigma_{X}^2\\
%\sigma_{X}^2 & \sigma_{X}^2 + \sigma_{Z_K}^2
%\end{bmatrix}\\
&=\Sigma_W
+ 
\begin{bmatrix}
1\\
\vdots
\\
1
\end{bmatrix} 
\sigma_X^2
\begin{bmatrix}
1 & \cdots & 1
\end{bmatrix},\\
\Sigma_{XY} &=  
\sigma_X^2
\begin{bmatrix}
1 &\ldots & 1
\end{bmatrix}.
\end{align}
Using the matrix inversion lemma, we compute readily
\begin{align}
&~ \Cov[X|Y]^{-1} =
 \Sigma_{XX}^{-1} -  \Sigma_{XX}^{-1} \Sigma_{XY} \notag\\
 &\left( \Sigma_{YX} \Sigma_{XX}^{-1} \Sigma_{XY} - \Sigma_{YY} \right)^{-1} \Sigma_{YX} \Sigma_{XX}^{-1} \\
 =&~ \Sigma_{XX}^{-1} +  \Sigma_{XX}^{-1} \Sigma_{XY} 
\Sigma_{W}
^{-1}
\Sigma_{YX} \Sigma_{XX}^{-1} \\
=&~ \frac 1 {\sigma_X^2} +  \frac 1 {\sigma_{W_1}^2} + \ldots + \frac 1 {\sigma_{W_K}^2},
\end{align}
which shows \eqref{eq:msek}.
To show \eqref{eq:estk}, we apply the matrix inversion lemma to $\Sigma_{YY}$ to write: 
\begin{align}
 \Sigma_{YY}^{-1} = \Sigma_W^{-1} - \Sigma_W^{-1}
 \begin{bmatrix}
1\\
\vdots
\\
1
\end{bmatrix} 
\sigma_{X|Y_{[K]}}^2
\begin{bmatrix}
1 &\ldots & 1
\end{bmatrix} \Sigma_W^{-1}.
\end{align}
It's easy to verify that
\begin{align}
&~\sigma_X^2 \begin{bmatrix}
1 &\ldots & 1
\end{bmatrix} 
\left( I_n \frac 1 {\sigma_{X|Y_{[K]}}^2}   - \Sigma_W^{-1}
 \begin{bmatrix}
1\\
\vdots
\\
1
\end{bmatrix} 
\begin{bmatrix}
1 &\ldots & 1
\end{bmatrix} \right)\notag \\
=&~ \begin{bmatrix}
1 &\ldots & 1
\end{bmatrix}, 
\end{align}
where $I_n$ is the $n \times n$ identity matrix, so 
 \begin{align}
\E{X| Y = y} &=  \Sigma_{XY} \Sigma_{YY}^{-1}y\\
&= \begin{bmatrix}
1 &\ldots & 1
\end{bmatrix} 
\Sigma_W^{-1}
\sigma_{X|Y_{[K]}}^2 y,
\end{align}
which is equivalent to \eqref{eq:estk}. 
\end{proof}

\begin{proof}[Proof of \lemref{lem:back}]
 Equality \eqref{eq:cond1} follows from
 \begin{align}
\sigma_Y^2 &= \sigma_{X}^2 + \sigma_{W}^2,\\
\frac 1 {\sigma_{X | Y}^2} &= \frac 1 {\sigma_{X}^2}  + \frac 1 {\sigma_{W}^2}, \label{eq:msekpart}
\end{align}
where \eqref{eq:msekpart} is a particularization of \eqref{eq:msek}. 
\end{proof}

\begin{proof}[Proof of \lemref{lem:combo}]
Notice that \eqref{eq:backx} with $\bar X_k = \E{X | Y_k}$ and $W_k^\prime \sim \mathcal N(0, \sigma_{X|Y_k}^2)$ is just another way to write \eqref{eq:broadcast}. Reparameterizing \eqref{eq:estk} and \eqref{eq:msek} accordingly, one recovers \eqref{eq:combo} and \eqref{eq:msecombo}. 
\end{proof}

\begin{remark}
We may use \lemref{lem:Gest} to derive the Kalman filter for the estimation of $\sX_i$ \eqref{eq:xi} given the history of observations  $\sY_{[i]}^{[K]}$ \eqref{eq:yik}: 
 \begin{align}
\bar \sX_i &= a \bar \sX_{i-1} + \sum_{k = 1}^K  \frac{\sigma_{\sX_i|\sY_{[i]}^{[K]}}^2}{\sigma_{\mathsf W_k}^2} \left(\sY_i^{k} - a \bar \sX_{i-1} \right), \label{eq:estkcausal}\\
\frac 1 {\sigma_{\sX_i|\sY_{[i]}^{[K]}}^2 } &= \frac 1 {\sigma_{\sX_i|\sY_{[i-1]}^{[K]}}^2 } + \sum_{k =1}^K \frac 1 {\sigma_{\mathsf W_k}^2}. \label{eq:msekcausal}
\end{align}
 where $\bar  \sX_i$ is defined in \eqref{eq:Xibar1let}. Equation \eqref{eq:estkcausal} is the Kalman filter recursion with Kalman filter gain equal to the row vector $ \sigma_{\sX_i| \sY_{[i]}^{[K]}}^2 \left( \frac 1  {\sigma_{\mathsf W_1}^2}, \ldots, \frac 1 {\sigma_{\mathsf W_K}^2} \right)$, and \eqref{eq:msekcausal} is the corresponding Riccati recursion for the MSE\detail{ (Appendix~\ref{apx:kalman})}.
\end{remark}

\detail{
\section{Verification of \eqref{eq:estkcausal} and \eqref{eq:msekcausal}}
\label{apx:kalman}
We verify \eqref{eq:estkcausal} and \eqref{eq:msekcausal} using the Kalman filter equations. Denote the observation matrix 
\begin{align}
\mat C \triangleq 
\begin{bmatrix}
1 \\ \vdots \\ 1
\end{bmatrix},
\end{align}
 the covariance matrix of the observation noise
\begin{align}
\mat  \Sigma_W = \diag (\sigma_{W_1}^2, \ldots, \sigma_{W_K}^2), 
\end{align}
and the error (co)variances
\begin{align}
p_{i | i - 1} &=  \sigma_{X_i| Y_{[i-1]}}^2,\\
p_{i | i } &= \sigma_{X_i| Y_{[i]}}^2.
\end{align}
Innovation of the observation is $\tilde Y_i = Y_i - a \mat C  X_{i-1}$, its covariance is
\begin{align}
\mat S_i = \mat C \mat C^T  p_{i | i-1} + \mat  \Sigma_W. 
\end{align}
Matrix inversion lemma yields
\begin{align}
 \mat S_i^{-1} = \mat  \Sigma_W^{-1} - \mat \Sigma_W^{-1} \mat C \left( p_{i|i-1}^{-1} + \underbrace{\mat C^T \mat \Sigma_W^{-1} \mat C}_{\sum_{k =1}^K \frac 1 {\sigma_{W_k}^2}}\right)^{-1} \mat C^T \mat \Sigma_W^{-1} \label{dtl:Si}
\end{align}
Kalman filter gain is
\begin{align}
 \mat K_i &= p_{i | i-1} \mat C^T \mat S_i^{-1}\\
 &=p_{i | i-1} \left( 1  - \frac{\sum_{k =1}^K \frac 1 {\sigma_{W_k}^2}}{\frac 1 {p_{i | i-1}} + \sum_{k = 1}^K  \frac 1 {\sigma_{W_k}^2}  }\right) \left( \frac 1  {\sigma_{W_1}^2}, \ldots, \frac 1 {\sigma_{W_K}^2} \right)\\
 &= \frac 1 {\frac 1 {p_{i | i-1}} + \sum_{k = 1}^K  \frac 1 {\sigma_{W_k}^2}}\left( \frac 1  {\sigma_{W_1}^2}, \ldots, \frac 1 {\sigma_{W_K}^2} \right) \\
 &= p_{i|i}\left( \frac 1  {\sigma_{W_1}^2}, \ldots, \frac 1 {\sigma_{W_K}^2} \right), 
 \label{dtl:Ki}
\end{align}
where \eqref{dtl:Ki} is shown below. 
Note that
\begin{align}
 \mat K_i \mat C &= \frac{\sum_{k = 1}^K \frac 1 {\sigma_{W_k}^2} }{ \frac 1 {p_{i | i-1}} + \sum_{k = 1}^K  \frac 1 {\sigma_{W_k}^2}  }
\end{align}
and
\begin{align}
 1 - \mat K_i \mat C = \frac{\frac 1 {p_{i | i-1}} }{ \frac 1 {p_{i | i-1}} + \sum_{k = 1}^K  \frac 1 {\sigma_{W_k}^2}  }.
\end{align}

The Riccati equation for the me is given by
\begin{align}
p_{i | i} &= (1 - \mat K_i \mat C) p_{i | i-1}\\
&= \frac 1 {\frac 1 {p_{i | i-1}} + \sum_{k = 1}^K  \frac 1 {\sigma_{W_k}^2}} 
\end{align}
\hspace*{\fill}$\qed$
}

\section{Two equivalent representations of $R_{\mathrm{rm}}(d)$ }
\label{apx:noisy}
In this appendix, we verify that \eqref{eq:noisy} coincides with the lower bound on the causal remote rate-distortion function derived in \cite{kostina2016ratecost}.
Indeed,  \cite[Cor. 1 and Th. 9]{kostina2016ratecost} imply
\begin{align}
 R_{\mathrm{rm}}(d) \geq \frac 1 2 \log \left(a^2 + \frac{\sigma^2_{\sX\| \D \sY^{[K]}} - \sigma^2_{\sX\| \sY^{[K]}} }{d - \sigma_{\sX\| \sY^{[K]}}^2}\right). \label{eq:noisyratecost}
\end{align}
 Here, $\sigma^2_{\sX\| \D \sY^{[K]}} - \sigma^2_{\sX\| \sY^{[K]}}$ is the variance of the innovations of the Gauss-Markov process $\{\bar \sX_i\}$, i.e. 
\begin{align}
 \bar \sX_{i+1} = a \bar \sX_{i} + \bar {\mathsf V}_i, \label{eq:kalmanestimates}
\end{align}
$\bar {\mathsf V}_i \sim \mathcal N(0, \sigma^2_{\sX\| \D \sY^{[K]}} - \sigma^2_{\sX\| \sY^{[K]}})$. The form in \eqref{eq:noisyratecost} leads to that in \eqref{eq:noisy} via \eqref{eq:dprev} and 
\begin{align}
\sigma^2_{\sX\| \D \sY^{[K]}} = a^2  \sigma^2_{\sX\| \sY^{[K]}} + \sigma_{\mathsf V}^2.
\end{align}
\hspace*{\fill}$\qed$

\end{appendices}

\section*{Acknowledgement}
We thank both anonymous reviewers for their insightful and careful reviews, which are reflected in the final version.

\bibliographystyle{IEEEtran}
\bibliography{../../Bibliography/control,../../Bibliography/it,../../Bibliography/vk}

\begin{IEEEbiography}
    [{\includegraphics[width=1in,height=1.25in,clip,keepaspectratio]{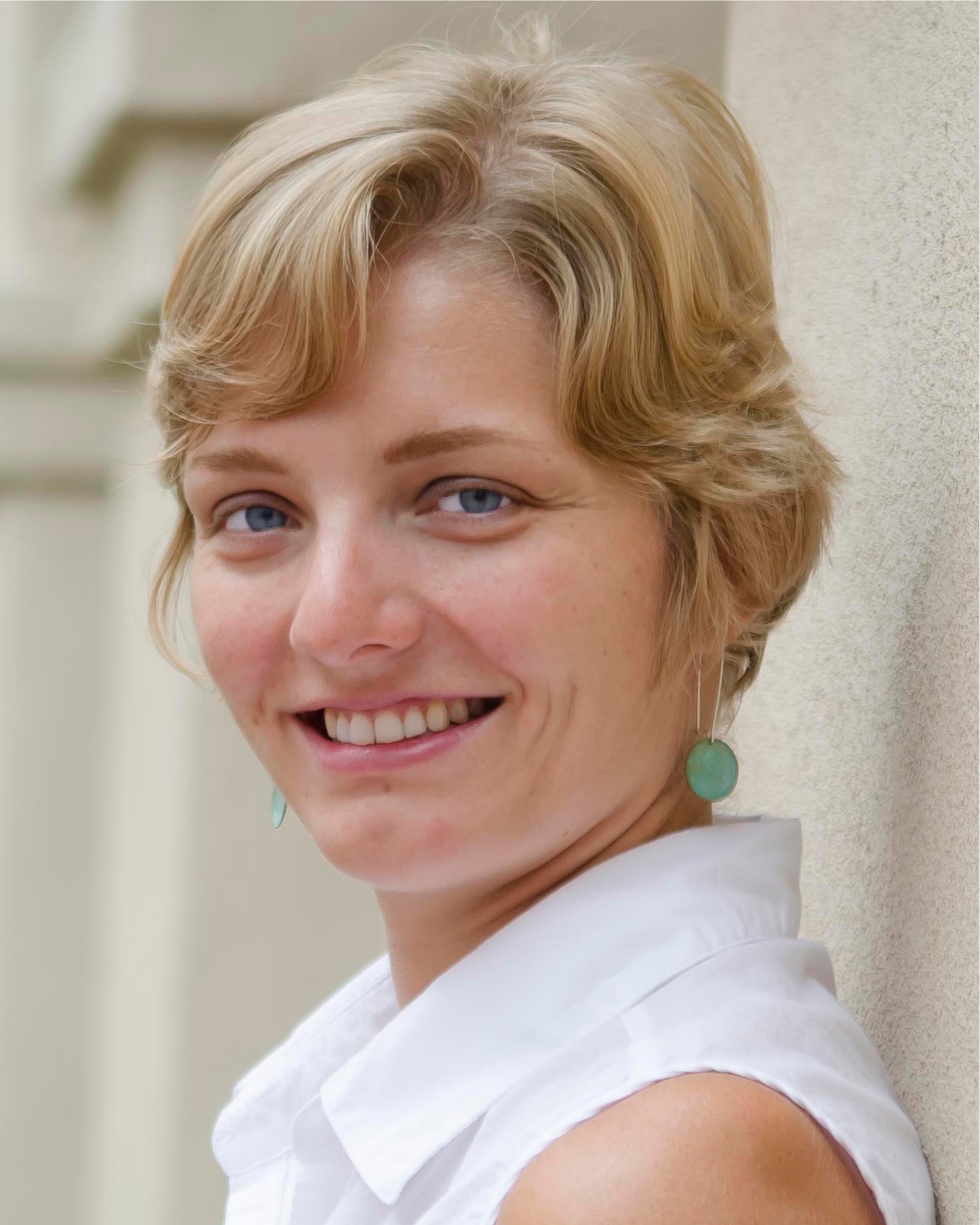}}]{Victoria Kostina}(S'12--M'14)
is a Professor of Electrical Engineering and of Computing and Mathematical Sciences at Caltech. She received a bachelor's degree from Moscow Institute of Physics and Technology (2004), where she was affiliated with the Institute for Information Transmission Problems of the Russian Academy of Sciences, a master's degree from University of Ottawa (2006), and a PhD from Princeton University (2013). She received the Natural Sciences and Engineering Research Council of Canada postgraduate scholarship (2009--2012), the Princeton Electrical Engineering Best Dissertation Award (2013), the Simons-Berkeley research fellowship (2015) and the NSF CAREER award (2017).  Kostina's research spans information theory, coding, control, learning, and communications. 
\end{IEEEbiography}

\begin{IEEEbiography}
    [{\includegraphics[width=1in,height=1.25in,clip,keepaspectratio]{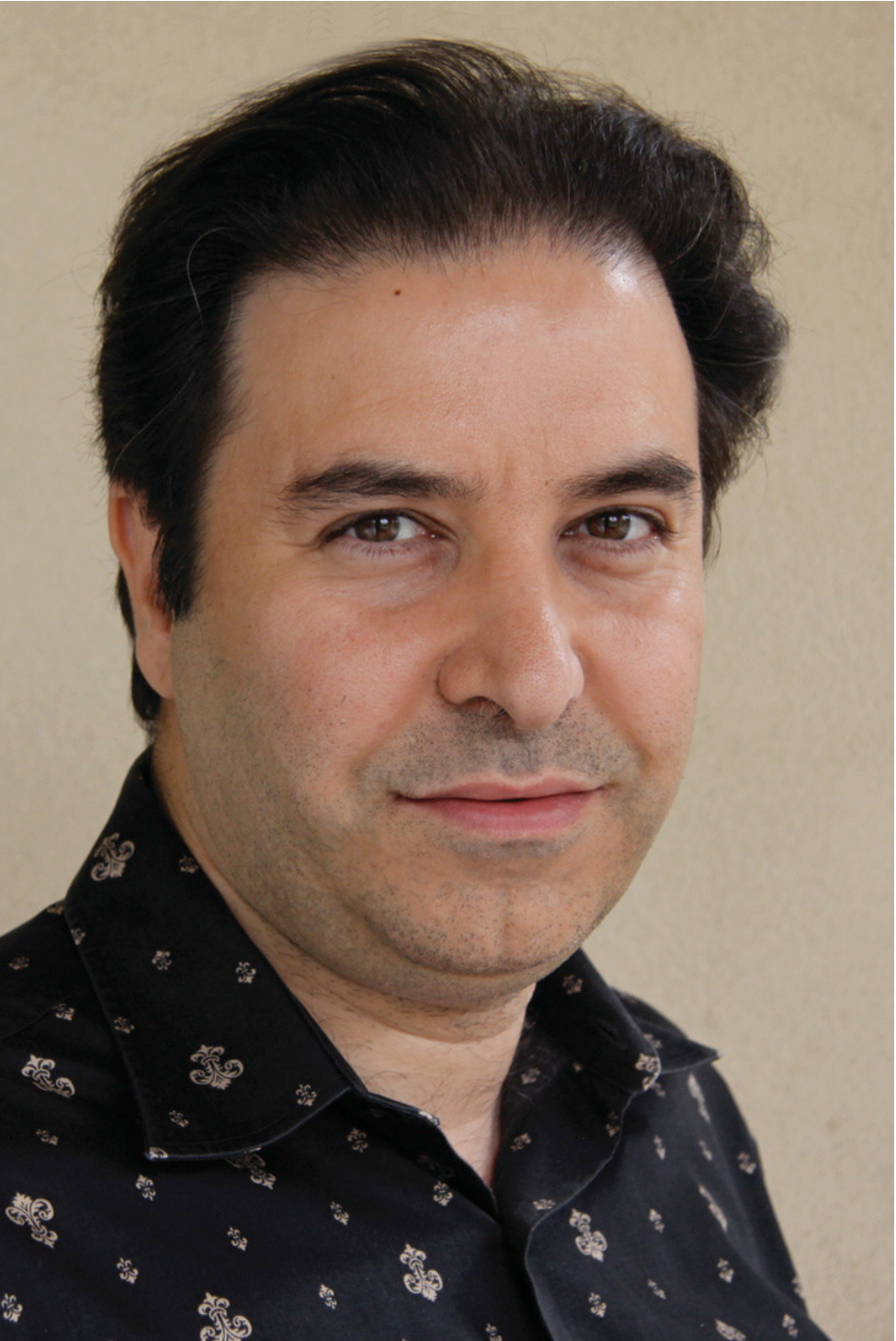}}]{Babak Hassibi}
 was born in Tehran, Iran, in 1967. He received the B.S. degree from the University of Tehran in 1989, and the M.S. and Ph.D. degrees from Stanford University in 1993 and 1996, respectively, all in electrical engineering. 

He has been with the California Institute of Technology since January 2001, where he is currently the Mose and Lilian S. Bohn Professor of Electrical Engineering. From 2013-2016 he was the Gordon M. Binder/Amgen Professor of Electrical Engineering and from 2008-2015 he was Executive Officer of Electrical Engineering, as well as Associate Director of Information Science and Technology. From October 1996 to October 1998 he was a research associate at the Information Systems Laboratory, Stanford University, and from November 1998 to December 2000 he was a Member of the Technical Staff in the Mathematical Sciences Research Center at Bell Laboratories, Murray Hill, NJ. He has also held short-term appointments at Ricoh California Research Center, the Indian Institute of Science, and Linkoping University, Sweden. His research interests include communications and information theory, control and network science, and signal processing and machine learning. He is the coauthor of the books (both with A.H.~Sayed and T.~Kailath) {\em Indefinite Quadratic Estimation and Control: A Unified Approach to H$^2$ and H$^{\infty}$ Theories} (New York: SIAM, 1999) and {\em Linear Estimation} (Englewood Cliffs, NJ: Prentice Hall, 2000). He is a recipient of an Alborz Foundation Fellowship, the 1999 O. Hugo Schuck best paper award of the American Automatic Control Council (with H.~Hindi and S.P.~Boyd), the 2002 National ScienceFoundation Career Award, the 2002 Okawa Foundation Research Grant for Information and Telecommunications, the 2003 David and Lucille Packard Fellowship for Science and Engineering,  the 2003 Presidential Early Career Award for Scientists and Engineers (PECASE), and the 2009 Al-Marai Award for Innovative Research in Communications, and was a participant in the 2004 National Academy of Engineering ``Frontiers in Engineering''program. 

He has been a Guest Editor for the IEEE Transactions on Information Theory special issue on ``space-time transmission, reception, coding and signal processing'' was an Associate Editor for Communications of the IEEE Transactions on Information Theory during 2004-2006, and is currently an Editor for the Journal ``Foundations and Trends in Information and Communication'' and for the IEEE Transactions on Network Science and Engineering. He is an IEEE Information Theory Society Distinguished Lecturer for 2016-2017 and was General Co-Chair if the 2020 IEEE International Symposium on Information Theory (ISIT 2020).
\end{IEEEbiography}

\end{document}